\DeclareUnicodeCharacter{2009}{\,}
\documentclass[%
reprint,
superscriptaddress,
nofootinbib,
aps,longbibliography,
pra,
amsmath,amssymb
]{revtex4-1}

\usepackage[toc,page]{appendix} 

\newif\ifarxiv 
\arxivtrue

\usepackage[caption=false]{subfig}

\usepackage{array} 
\usepackage{xparse}
\usepackage[english]{babel}
\usepackage[utf8]{inputenc}
\usepackage{setspace}
\usepackage{pgfplots}
\usepgfplotslibrary{fillbetween} 
\pgfplotsset{compat=newest}
\usepackage{graphicx}
\graphicspath{ {./figures/} }
\usepackage[caption=false]{subfig} 
\usepackage{amsmath}
\usepackage{amsthm}
\usepackage{amssymb}
\usepackage{braket}
\usepackage{dsfont}
\usepackage{xcolor}
\usepackage{mathrsfs} 
\usepackage{mathtools} 
\usepackage{tikz}
\usetikzlibrary{calc}
\usetikzlibrary{arrows}
\usepackage{tcolorbox}
\usepackage{enumitem}
\newlist{steps}{enumerate}{1}
\setlist[steps, 1]{label = Step \arabic*:}
\usepackage[colorinlistoftodos,prependcaption]{todonotes}
\usepackage{booktabs}
\usepackage[utf8]{inputenc}
\usepackage{thmtools,thm-restate}
\usepackage{hyperref} 
\usepackage{xurl} 
\hypersetup{
	colorlinks = true,
	citecolor= blue,
	urlcolor= blue,
	linkcolor = blue
}
\usepackage{algorithm}
\usepackage[noend]{algpseudocode}

\newtheorem{theorem}{Theorem}

\newtheorem{definition}{Definition}
\newtheorem{lemma}{Lemma}
\newtheorem{prop}{Proposition}
\newtheorem{rem}{Remark}

\newcounter{algsubstate}


\usepackage{bm}
\newcommand{\ketbra}[2]{\ket{#1}\!\bra{#2}}

\let\inner\relax
\NewDocumentCommand\inner{mg}{%
	\ensuremath{\left\langle #1 | \IfNoValueTF{#2}{#1}{#2}\right\rangle}%
}

\usepackage[T3,T1]{fontenc}
\DeclareSymbolFont{tipa}{T3}{cmr}{m}{n}
\DeclareMathAccent{\invbreve}{\mathalpha}{tipa}{16}
\allowdisplaybreaks

\begin{document}
	
	\title{Certified Randomness Generation against Quantum Adversaries via Energy Constraints}
	

\author{Thomas A. Hahn}
 \affiliation{%
 The Center for Quantum Science and Technology, \\ Department of Physics of Complex Systems, Weizmann Institute of Science, Rehovot, Israel
 }
 \affiliation{%
Institute of Computer and Communication Sciences, École Polytechnique Fédérale de Lausanne (EPFL), Lausanne CH-1015, Switzerland
 } 
 \affiliation{LMU Munich $\&$ Munich Center for Quantum Science and Technology (MCQST), Munich, Germany}

    \author{Efrat Gerchkovitz}
     \affiliation{%
 The Center for Quantum Science and Technology, \\ Department of Physics of Complex Systems, Weizmann Institute of Science, Rehovot, Israel
 }
	\author{Ester Yehuda}
	\affiliation{%
 The Center for Quantum Science and Technology, \\ Department of Physics of Complex Systems, Weizmann Institute of Science, Rehovot, Israel
 }
	\author{Rotem Arnon}
 \affiliation{%
 The Center for Quantum Science and Technology, \\ Department of Physics of Complex Systems, Weizmann Institute of Science, Rehovot, Israel
 }
 \affiliation{%
Institute of Computer and Communication Sciences, École Polytechnique Fédérale de Lausanne (EPFL), Lausanne CH-1015, Switzerland
 } 


	\date{\today}

	\begin{abstract} 
        For quantum random number generators (QRNGs), strong security guarantees can be obtained under various assumptions. The most well-known assumptions are the non-signalling assumption (needed for device-independent (DI) QRNG) and post-quantum computational assumptions (needed for computationally bounded single-prover QRNG protocols). In this work, we study a third model: a simple prepare-and-measure QRNG whose security relies on physically motivated energy constraints. This physically motivated set-up allows for easier implementations of QRNGs and can supply better rates. 

So far, previous works that studied the energy-constrained model either assumed that the adversary is classical in their security proof or supplied relatively weak bounds on the certified randomness, by bounding the conditional min-entropy.  
In this work, we prove a new, essentially tight, lower bound on the single-round von Neumann entropy against adversaries holding both classical and quantum side-information. Our work reveals that working with the von Neumann entropy, rather than the min-entropy, in the quantum case can lead to a major improvement, nearly matching the classical bound in many cases. We further study the robustness of our entropy bound.
Since the single-round von Neumann entropy is the main quantity of interest when characterizing the key rate of QRNGs, our bound establishes the main ingredient for rigorous security proofs against general quantum attacks in this framework.

	\end{abstract}
	
	\maketitle


	\section{Introduction}
    Quantum theory allows for the generation of intrinsic randomness: even with complete knowledge of a system’s preparation, the outcome of an individual measurement can remain fundamentally unpredictable. This forms the basis for quantum random number generators (QRNGs), which exploit this phenomenon to create uniformly random bits, even in the presence of external eavesdroppers~\cite{Ac_n_2016,Herrero_Collantes_2017}. Practical QRNGs, however, are susceptible to manufacturing imperfections, implementation flaws, or even malicious interference~\cite{Cao_2016,Haw_2015}. The security of a QRNG has to therefore take such effects into account and cannot be inferred solely from the quantum nature of its ideal operating state~\cite{Ac_n_2016,Herrero_Collantes_2017,brask2026quantumcorrelationsprepareandmeasurescenarios,Pauwels_2025}.
Two well-known models under which such randomness certification is possible (even against quantum adversaries) are given by device-independent (DI)  protocols~\cite{acin2007device,pironio2009deviceindependent,arnon2019simple,Arnon-Friedman:2018aa} and single-prover interactive protocols under computational assumptions~\cite{Metger_2021,e27080772,Merkulov_2026,8555116}. 

DI cryptographic protocols fundamentally require multiple devices which do not signal, i.e., leak information, to each other during their implementation phase. Other than this assumption, DI protocols 
treat these quantum devices as uncharacterized black boxes that simply obey the laws of quantum mechanics~\cite{Arnon_Friedman_2020,tan2024prospectsdeviceindependentquantumkey}.
 The produced randomness, i.e., the (raw) 
key, is then typically certified via Bell inequalities~\cite{arnon2019simple,Arnon-Friedman:2018aa}. However, while DI protocols have been implemented in practice~\cite{dirng_expr1,dirng_expr2,dira_expr,diqkd_expr1,diqkd_expr2,diqkd_expr3,Lu_2026}, the resulting finite-size key rate is typically very low; see also~\cite{arnon2019simple,Arnon-Friedman:2018aa,hahn2025analyticrenyientropybounds}. 
Computationally bounded single-prover QRNG protocols~\cite{8555116,Metger_2021,e27080772,Merkulov_2026}, on the other hand, can certify randomness on single devices by replacing the non-signaling assumption of DI protocols with post-quantum computational assumptions, such as relying on the hardness of Learning With Errors (LWE)~\cite{10.1145/1568318.1568324,10.1561/0400000074}. Also for computationally bounded single-prover QRNG protocols, their finite-size key rates remain very low~\cite{e27080772,Merkulov_2026}.

Such practical drawbacks have led to the development of so-called semi-DI security. 
Unlike fully DI protocols,  semi-DI protocols rely on a limited set of additional assumptions to ensure security. These assumptions  impose additional physical constraints on the underlying quantum systems~\cite{brask2026quantumcorrelationsprepareandmeasurescenarios,Pauwels_2025,Brask_2017,Wang_2019,Li_2011,Lunghi_2015,PhysRevA.85.052308,Mironowicz_2016,Tavakoli_2020,Tavakoli_2022,Flatt_2022,Roch_i_Carceller_2022,Tavakoli_2021,Jones_2026,Van_Himbeeck_2017,vanhimbeeck2019correlations,Roch_i_Carceller_2025,senno2021,DAvino:2025ymd,Roch_i_Carceller_2026,ranu2026randomnesscertifiedenergyconstrainedsemideviceindependent,li2026onchipquantumrandomnessamplification}. As noted in, e.g.,~\cite{brask2026quantumcorrelationsprepareandmeasurescenarios,Pauwels_2025}, multiple such approaches have been studied for prepare-and-measure (PM) protocols. Among them, 
\cite{Brask_2017,Wang_2019} consider PM set-ups, where the prepared states $\{\rho_x\}_{x\in \mathcal{X}}$ are impossible to  perfectly distinguish, i.e., the fidelity (see, e.g.,~\cite[Definition 3.7]{tomamichel2015quantum} for a definition) between any two prepared states satisfies $F(\rho_x,\rho_{x^\prime}) \geq \alpha_{xx^\prime}$ for strictly positive constants $\alpha_{xx^\prime}$.  A separate approach explored by the works~\cite{Li_2011,Lunghi_2015,PhysRevA.85.052308,Mironowicz_2016} is to base the protocol's security  on the assumption that the underlying Hilbert space of the prepared states is $d$-dimensional, where $d$ is known. Conversely,~\cite{Tavakoli_2020,Tavakoli_2022} show it is also possible to ensure the measurement output is weakly random without  directly constraining the prepared states or the underlying Hilbert space; rather they directly place constraints on the accessible information that is at the  eavesdropper's disposal.

One assumption that stands out for being particularly well 
physically motivated was first suggested in~\cite{Van_Himbeeck_2017,vanhimbeeck2019correlations}. While conceptually similar to set-ups that assume the prepared states are not fully distinguishable,~\cite{Van_Himbeeck_2017,vanhimbeeck2019correlations} impose energy constraints on the physical quantum systems. Concretely, they first assume that the underlying Hilbert space has a non-degenerate ground state and that there exists a gap between the two lowest energy eigenvalues. Photonic systems, e.g., can naturally satisfy these constraints~\cite{Van_Himbeeck_2017,vanhimbeeck2019correlations,Roch_i_Carceller_2025}. In other words, there exist quantum infrastructures that are naturally compatible with this semi-DI framework. 
Within this setting, the simplest semi-DI PM protocol consists of an untrusted source that, for each round, prepares one of two quantum states $\{\rho_x\}_{x \in \{0,1\}}$ (according to some predetermined distribution) that is subsequently measured by an untrusted measurement  $M$.\footnote{More generally, one can also consider a larger family of preparable states $\{\rho_{x}\}_{x \in \mathcal{X}}$ and multiple measurement settings $Y=y$ with corresponding measurement operators $M_y$.} After all rounds of the protocol are completed, the measurement outputs then make up the raw key. The only required assumption 
 is that each of the preparable states has an energy expectation value that is upper bounded by some predetermined value $\omega_x$. 
For low-energy states, i.e., $\omega_x \ll1$, the QRNG is then preparing and measuring quantum states that are close to a \textit{unique} ground state and therefore close to each other.

\begin{figure}[htb]
\centering
\includegraphics[width=0.45\textwidth]{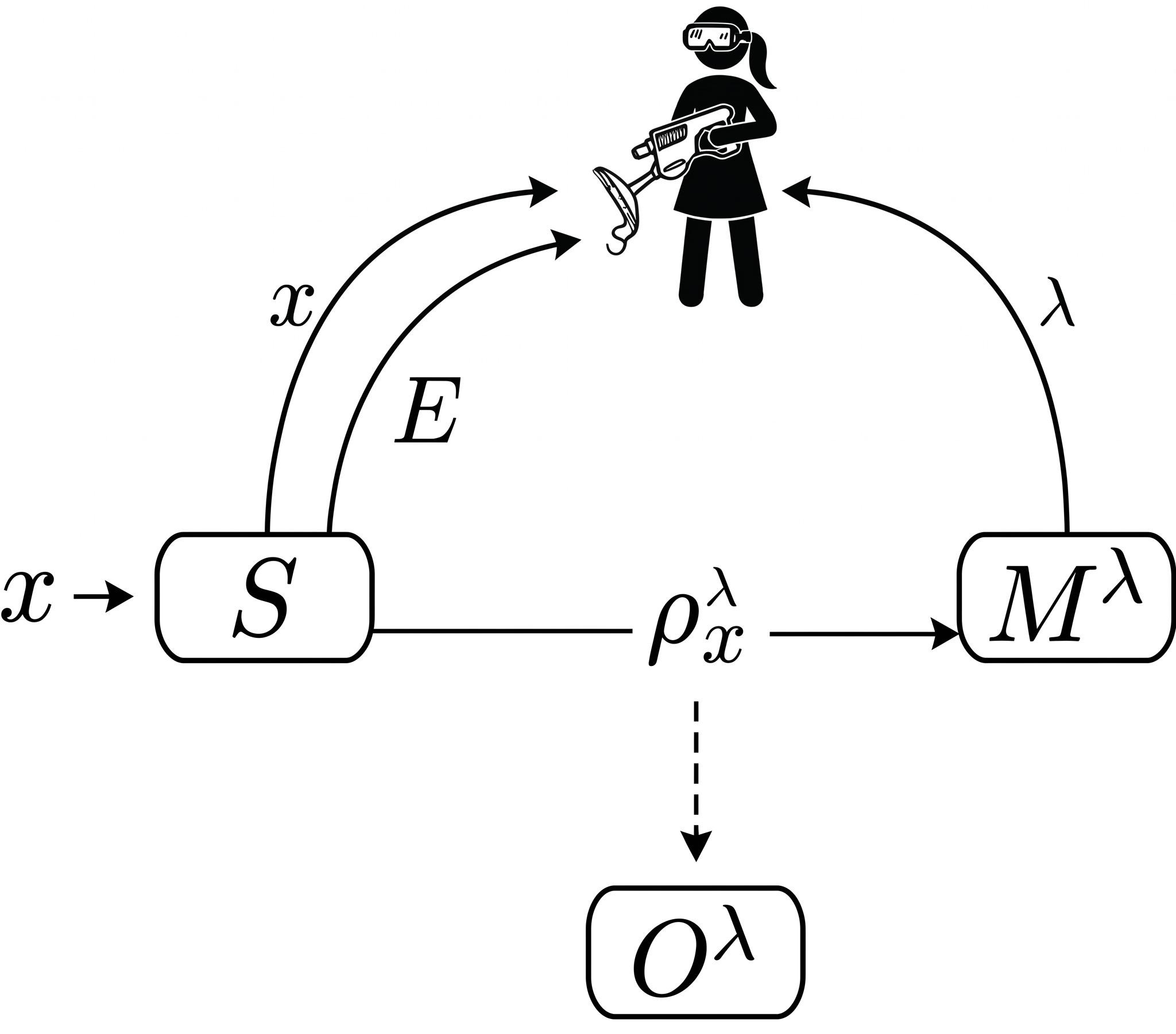}
\caption{Illustration of the general adversarial model, where the adversary has access to both the classical register $\Lambda$ storing the value $\lambda$ and the quantum register $E$. 
For every input $X=x$, the energy constraint is parametrized by a constant $\omega_x$, i.e., it holds that
$\sum_\lambda\operatorname{P}_{\Lambda} (\lambda) \operatorname{Tr}\left[ O^\lambda \rho^\lambda_x\right] \leq \omega_{x}$ where $O^\lambda$ is an energy operator.
\label{fig:our_model2}}
\end{figure}

The security of this set-up has primarily been studied against \textit{classical} adversaries~\cite{Van_Himbeeck_2017,vanhimbeeck2019correlations,Roch_i_Carceller_2025}, and experimental implementations have been shown to produce and certify randomness at reasonably high rates~\cite{PhysRevA.100.062338,10.1063/5.0011479,Tebyanian_2021,PhysRevApplied.15.034034}. However, this adversarial model is overly optimistic and one needs to also study eavesdroppers with access to quantum side-information.  In this work, 
we solve the question of lower bounding the single-round von Neumann entropy for a quantum adversary 
holding a general purification 
of the prepared state, as well as an additional classical register $\Lambda$. Concretely, for the simplest energy-constrained set-up with two source settings $X\in \{0,1\}$ and a single measurement setting, we provide direct lower bounds on the 
single-round von Neumann $H(A|E\Lambda,X=0)$. 

\begin{figure}[htbp]
  \centering
  \includegraphics[width=0.49\textwidth]{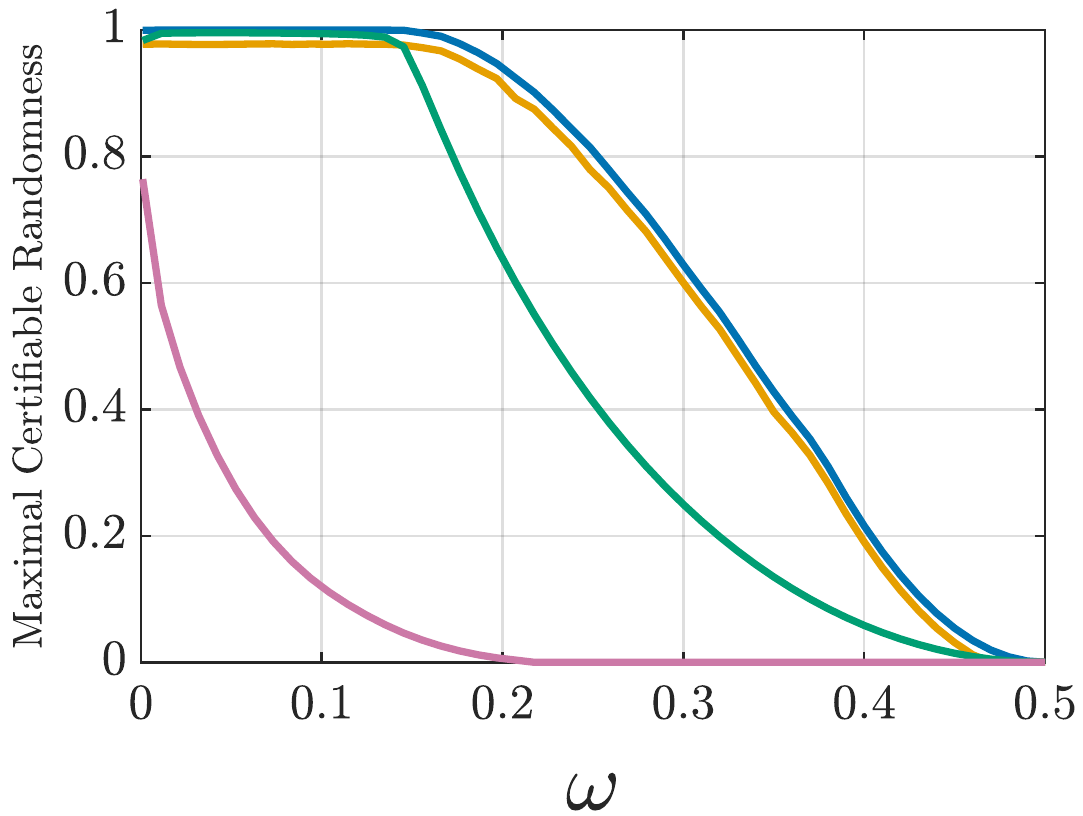} 
  
\caption{The maximal certifiable randomness as a function of $\omega=\omega_0=\omega_1$. The blue and orange lines denote the certifiable lower bounds on the von Neumann entropy  against  classical~\cite{vanhimbeeck2019correlations} and quantum adversaries (Theorem~\ref{Prop: SumReduction}), respectively. The green and purple lines denote lower boundds on the min-entropy against quantum~\cite{vanhimbeeck2019correlations} and entanglement-assisted adversaries~\cite{ranu2026randomnesscertifiedenergyconstrainedsemideviceindependent}. 
}
  \label{Intro:max-entropy-omega}
\end{figure}

The bounds derived in this work are in the form of a  hierarchy of semi-definite programs (SDPs), which provide increasingly tight lower bounds on the von Neumann entropy. To achieve these SDPs, we first apply Jordan's lemma~\cite{pironio2009deviceindependent,acin2007device} to reduce the dimension of the relevant source system to qubits. This step also allows us to bound the dimension of Eve's quantum register $E$. Subsequently, we
build upon the approach taken in~\cite{vanhimbeeck2019correlations,Hahn_2022} to only consider classical registers $\Lambda$
 of known and bounded dimension. The resulting bounds are very tight and robust against noise. We can also certify  non-trivial lower bounds on the conditional von Neumann against quantum attacks for all $\omega_x \lesssim 0.5$ that can often greatly surpass the min-entropy bounds of~\cite{vanhimbeeck2019correlations}; see Fig.~\ref{Intro:max-entropy-omega}.

\subsubsection*{Related Works}

 Recently, more works have explored energy-constrained PM set-up against eavesdroppers, Eve, that have access to additional quantum side-information. The works~\cite{DAvino:2025ymd,Roch_i_Carceller_2026,ranu2026randomnesscertifiedenergyconstrainedsemideviceindependent}, e.g., consider an entanglement-assisted version of this protocol. Here, Eve prepares an additional quantum state $\sigma_{SME}$, which is not constrained in any way. While Eve keeps $E$, the registers $SM$ are sent to the source and measurement device, respectively. For any input setting $X=x$, the source then acts on the register $S$ to prepare the energy-constrained state $\rho_{x}$. The subsequent measurement then acts jointly on $\rho_{x}$ and the register $M$ to produce the classical measurement output $A$. For this entanglement-assisted setting,~\cite{ranu2026randomnesscertifiedenergyconstrainedsemideviceindependent} have recently proven lower bounds on the single-round min-entropy $H_{\mathrm{min}}(A|E,X=0)$ which hold against all possible energy-constrained attacks. Moreover, these techniques 
are expected to extend towards achieving analogous lower bounds on the von Neumann and Petz-Rényi entropies, using the variational approaches of~\cite{Brown_2024, hahn2024,kossmann2025reliableentropyestimationobserved}. Such entanglement-assisted protocols are mathematically more general than the quantum adversaries we consider in this work. 
From our understanding, however, we do not attribute the improved entropy bounds found in this work  solely to the model, but to the difference in techniques and in the considered entropy.

Apart from entanglement-assisted set-ups,~\cite{senno2021}
 also consider the setting of this work, where Eve has access to both classical side-information and an additional quantum register that may be correlated or entangled with the prepared state $\rho_x$; in the most extreme case, this register holds a purification of the prepared state. Within this setting,~\cite{senno2021} provides a security proof against all adversaries whose energy-constrained attacks can be modeled via the entropy accumulation theorem (EAT).\footnote{There is a slight caveat in that~\cite{senno2021} uses the original entropy accumulation theorem from~\cite{DFR20}, which cannot adequately handle energy constraints. As such, the attacks considered in~\cite{senno2021} do not encompass all possible 
 quantum attacks.} For these attacks, the overall (globally) accumulated entropy over $n$ rounds of the protocol scales linearly with the worst-case von Neumann entropy that can be accumulated during a \textit{single} round of the protocol. However, since~\cite{senno2021}  bound this single-round quantity in terms of a lower bound on the min-entropy using the approach of~\cite{vanhimbeeck2019correlations}, this leaves open the question of how one can directly bound the von Neumann entropy instead.

Lastly, during the writing of this manuscript, we became aware of the recent experimental work~\cite{li2026onchipquantumrandomnessamplification}. Using a different approach,~\cite{li2026onchipquantumrandomnessamplification} independently derive lower bounds on the von Neumann entropy for both the quantum setting we consider in this work and
the entanglement-assisted setting, via the Brown-Fawzi-Fawzi method~\cite{Brown_2024}.   When directly comparing  our results to~\cite[Fig.~2 and Fig.~4]{li2026onchipquantumrandomnessamplification}, we appear to derive significantly tighter bounds on the von Neumann entropy. However,
we cannot provide a direct comparison, as~\cite{li2026onchipquantumrandomnessamplification} do not provide sufficient theoretical results or numerical material for us to replicate their results. 

\section{Background}

\subsection{Energy-Constraint Model \& Corresponding Quantum Correlations}

We consider the PM set-up originally proposed in~\cite{Van_Himbeeck_2017,vanhimbeeck2019correlations}. In their set-up, a source takes as input some classical value $x \in \{0,1\}$, and produces the corresponding quantum state $\rho_x$. Since we are considering a semi-DI setting, we assume that the underlying Hilbert space which contains $\{\rho_x\}_{x\in \{0,1\}}$ is of some \textit{unknown}, but finite, dimension. This state is then measured, and the output is denoted by $a \in \{-1,1\}$. We denote the measurement operator by $M$. Crucially, this measurement does not depend on the choice of $x$.

\begin{figure}[htb]
\centering
\includegraphics[width=0.45\textwidth]{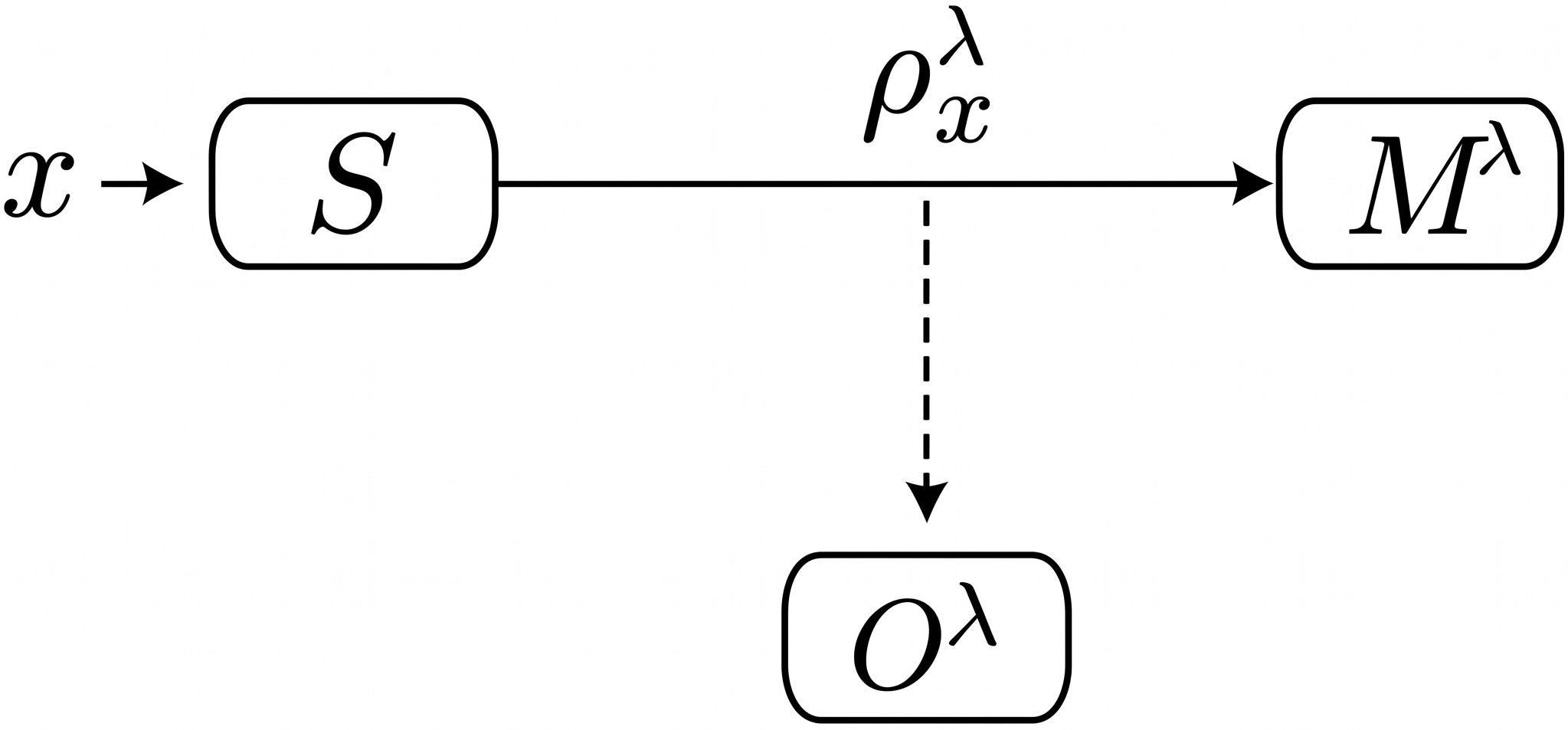}
\caption{The source prepares a quantum state $\rho_{x}^{\lambda}$ that may depend on the input $X=x$ and an additional classical register holding $\Lambda=\lambda$ with probability $\operatorname{P}_{\Lambda}(\lambda)$. The subsequent measurement $M^\lambda$ may depend on $\Lambda$, but not on $X$. For each choice $\Lambda =\lambda$, $O^\lambda$ denotes the corresponding energy operator. The states generated by the source satisfy the following energy constraint: For every input $X=x$, the energy constraint is parametrized by a constant $\omega_x$, i.e., it holds that
$\sum_\lambda\operatorname{P}_{\Lambda} (\lambda) \operatorname{Tr}\left[ O^\lambda \rho^\lambda_x\right] \leq \omega_{x}$.}
\label{fig:our_model}
\end{figure}

For the context of this work, apart from certain assumed energy constraints, the honest party, Alice, treats the source as well as the measurement device, as black boxes. Conversely, the eavesdropper, Eve, has full control over their design. Moreover, she has access to two additional forms of information. She is allowed to hold some classical register, $\Lambda$, which may be correlated to both the state that is produced by the source $S$, as well as the measurement device $M$. From now on, we therefore  denote the corresponding states and measurements via $\rho_x^\lambda$ and $M^\lambda$, respectively. 

 Moreover, for any value $\Lambda = \lambda$, Eve is additionally allowed to hold a purification of the state produced by the source, which is stored in a register $E$. This expands on the set-up in~\cite{Van_Himbeeck_2017,vanhimbeeck2019correlations}, which does not allow the adversary to hold any quantum side-information. Following~\cite{Van_Himbeeck_2017,vanhimbeeck2019correlations}, we introduce observables $O^\lambda$, which we shall henceforth refer to as energy operators, with the following properties:
\begin{itemize}
    \item The lowest eigenvalue of $O^\lambda$ is $0$. All other eigenvalues are greater or equal to $1$.
    \item The eigenspace associated with the eigenvalue $0$ is one dimensional, i.e., it is non-degenerate.
\end{itemize}
The only additional condition that we then impose on the system is a constraint on the average energy of the state emitted by the source, i.e., for every $\Lambda=\lambda$ there exists an energy operator $O^\lambda$ such that
\begin{align}\label{eq:energy-constraint}
\sum_\lambda\operatorname{P}_{\Lambda} (\lambda) \operatorname{Tr}\left[ O^\lambda \rho^\lambda_x\right] \leq \omega_{x} \; ,
\end{align}  
where $\operatorname{P}_{\Lambda} (\lambda)$ denotes a normalized probability distribution. The constraint in Eq.~\eqref{eq:energy-constraint} limits the possible correlations between the source and the measurement device. Intuitively, tighter energy bounds (i.e., smaller values of $\omega_{x}$) force Eve to choose strategies where $\rho_0^\lambda$ and $\rho_1^\lambda$ are closer to the vacuum state, and hence  closer to each other.
 More concretely, any attack by the adversary can  be described by a tuple $\left(\operatorname{P}_\Lambda , \{ \rho_{SE\Lambda,x}\}_{x}, \{ M^{\lambda} \}_\lambda , \{ O^{\lambda} \}_\lambda \right)$, which we define below.
\begin{definition}
A \textit{quantum strategy} is denoted by the tuple $\left(\operatorname{P}_\Lambda, \{ \rho_{SE\Lambda,x}\}_{x}, \{ M^{\lambda} \}_\lambda , \{ O^{\lambda} \}_\lambda\right)$. Here, $\operatorname{P}_\Lambda$  is a normalized probability distribution on the classical register $\Lambda$. Moreover, for any $x$ and $\Lambda=\lambda$, there exists a  normalized, bipartite quantum state $\rho_{SE,x}^{\lambda}$ between Alice and Eve such that 
\begin{align}
  \rho_{SE\Lambda,x}= \sum_{\lambda}\operatorname{P}_\Lambda(\lambda)\rho_{SE,x}^{\lambda} \otimes\ketbra{\lambda}{\lambda}\; .
\end{align}
 Lastly, for any $\Lambda=\lambda$, $M^{\lambda} $ and $O^{\lambda} $ are valid measurement and energy operators on $S$, respectively.
\end{definition}
For any input $x$ and $\Lambda = \lambda$, we denote Alice's  corresponding expected measurement outcome by
\begin{equation}    E^\lambda_x \coloneq \operatorname{Tr}\left [M^\lambda\rho^\lambda_x\right]\;, \quad \mathrm{where} \  \rho_{x}^{\lambda} \coloneq \text{Tr}_{E} \left[ \rho_{SE,x}^{\lambda} \right] \; .
\end{equation}
Moreover, the correlations characterizing the output of the device as seen from the point of view of a user which does not have access to the value of the hidden variable $\Lambda$ are given by $(E_0,E_1)$, where
\begin{equation}
E_x \coloneqq \sum_{\lambda} \operatorname{P}_{\Lambda}\left(\lambda\right) E_x^\lambda\;.
\end{equation}
Since tighter energy bounds  limit the quantum strategies available to Eve, it also reduces the amount of information she can potentially hold regarding the measurement outcome.   
The performance of the device is therefore characterized by the tuple $(\bm{E},\bm{\omega}) \coloneq (E_0,E_1,\omega_0,\omega_1)$. We refer to such a tuple  as a \emph{(quantum) behavior}.

\begin{definition}
A tuple $\left(E_0,E_1,\omega_0,\omega_1\right)$ specifying a possible pair of values for the correlations $\left(E_0,E_1\right)$ between the source and measurement device, and a possible pair of upper bounds $\left(\omega_0,\omega_1\right)$ on the averaged energies of the state emitted by the source is called a (quantum) behavior.    
\end{definition}

The set of behaviors that are realizable within quantum physics is denoted by $\mathcal{Q}$ and it has been characterized in~\cite{Van_Himbeeck_2017,vanhimbeeck2019correlations}.

\begin{definition}
A behavior $(E_0,E_1,\omega_0,\omega_1)$ belongs to the set $\mathcal{Q}$ if there exists a quantum strategy  $\left(\operatorname{P}_\Lambda, \{ \rho_{SE\Lambda,x}\}_{x}, \{ M^{\lambda} \}_\lambda , \{ O^{\lambda} \}_\lambda\right)$ such that 
\begin{align}
    &\sum_\lambda \operatorname{P}_\Lambda (\lambda) \operatorname{Tr}[\rho_x^\lambda M^\lambda]=E_x\;,\\
    &\sum_\lambda \operatorname{P}_\Lambda (\lambda) \operatorname{Tr}[\rho_x^\lambda O^\lambda]\leq \omega_x\;.\label{energy-constraint}
\end{align}
\end{definition}

\begin{lemma}[Quantum Set $\mathcal{Q}$ ~\cite{Van_Himbeeck_2017,vanhimbeeck2019correlations}] \label{Lem: SDPRep} A behavior $\left(E_0,E_1,\omega_0,\omega_1\right)$ belongs to the quantum set $\cal{Q}$ if and only if there exist $u,v,\eta_0,\eta_1\in\mathbb{R}$ satisfying $\eta_{0,1}\leq \omega_{0,1}$ such that the matrix   \begin{equation}
    \begin{aligned}
       \begin{pmatrix}1&u&E_0&2\eta_0-1\\
        u&1& E_1&2\eta_1-1\\E_0& E_1&1&v\\2\eta_0-1&2\eta_1-1&v&1
        \end{pmatrix}
    \end{aligned}
    \end{equation}
    is positive semi-definite. Moreover, if the behavior satisfies the trivial constraints $-1\leq E_x \leq 1$ 
and $\omega_x \geq 0$ as well as the condition $\omega_0+\omega_1\leq 1$,  then it belongs to the quantum set ${\cal Q}$ if
and only if
\begin{multline}
|\arcsin{E_0}-\arcsin{E_1}|\leq \\2\left(\arcsin\sqrt\omega_0+\arcsin\sqrt\omega_1\right)
\;.\label{Q-criterion-arcsin}
\end{multline}
\end{lemma}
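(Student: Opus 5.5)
We prove the first (SDP) characterization by building a Gram matrix, then reduce it to the arcsine criterion by working with angles between unit vectors.

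\emph{Necessity.} Take a quantum strategy realizing $(E_0,E_1,\omega_0,\omega_1)$. Since only the marginals $\rho^\lambda_x$ enter the constraints, the purifying register $E$ is irrelevant; we drop it and, by concavity, take each $\rho_x^\lambda$ pure, $\ket{\psi_x^\lambda}$. Write $\ket{0^\lambda}$ for the nondegenerate ground state of $O^\lambda$. Because every excited eigenvalue is at least $1$, we have $O^\lambda\geq \mathds{1}-\ketbra{0^\lambda}{0^\lambda}$, hence $\operatorname{Tr}[O^\lambda\rho_x^\lambda]\geq 1-|\braket{0^\lambda|\psi_x^\lambda}|^2$. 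Without loss of generality $M^\lambda=\Pi^\lambda_+-\Pi^\lambda_-$ is a $\pm1$ observable. Set $\eta_x:=\sum_\lambda \operatorname{P}_\Lambda(\lambda)\bigl(1-|\braket{0^\lambda|\psi_x^\lambda}|^2\bigr)\leq\omega_x$.

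Now build, in the direct sum $\bigoplus_\lambda$ weighted by $\sqrt{\operatorname{P}_\Lambda(\lambda)}$, the four unit vectors
\begin{itemize}
\item $\ket{\psi_x}=\bigoplus\sqrt{\operatorname{P}_\Lambda(\lambda)}\,\ket{\psi_x^\lambda}$ for $x=0,1$;
\item $\ket{m}=\bigoplus\sqrt{\operatorname{P}_\Lambda(\lambda)}\,M^\lambda\ket{\psi_0^\lambda}$;
\item $\ket{g}=\bigoplus\sqrt{\operatorname{P}_\Lambda(\lambda)}\,\bigl(2\ketbra{0^\lambda}{0^\lambda}-\mathds{1}\bigr)\ket{\psi_0^\lambda}$.
\end{itemize}
This naive choice does not quite give the matrix in the statement, so we proceed differently: for each $\lambda$, work in the real span of $\ket{\psi_0^\lambda},\ket{\psi_1^\lambda}$. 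The strategy is to take two reflection observables, $M^\lambda$ and $G^\lambda=2\ketbra{0^\lambda}{0^\lambda}-\mathds{1}$, and consider the $4\times4$ Gram matrix of
\begin{equation}
\ket{\psi_0},\quad \ket{\psi_1},\quad \textstyle\bigoplus M^\lambda\ket{\psi_0^\lambda}\ \text{(appropriately)},\quad \bigoplus G^\lambda\ket{\cdot}.
\end{equation}
In that Gram matrix, the entries $\langle\psi_x|M|\psi_x\rangle=E_x$ and $\langle\psi_x|G|\psi_x\rangle=1-2(1-|\braket{0|\psi_x}|^2)$ appear naturally. After fixing signs (replacing $G$ by $-G$) this yields the displayed matrix with $u=\mathrm{Re}\braket{\psi_0|\psi_1}$ and $v$ the overlap of $M$ and $G$. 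Realness is arranged by taking real parts, which preserves positive semidefiniteness. Since the matrix is linear in the strategy, averaging over $\lambda$ keeps it PSD.

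\emph{Sufficiency.} Conversely, a PSD matrix with unit diagonal is the Gram matrix of unit vectors in $\mathbb{R}^4$. From these vectors we read off a single qubit-like (or low-dimensional) strategy with deterministic $\lambda$: the states, $M$ as the reflection in the plane spanned by the vector pair, and $O$ as the projector complementary to the ground vector. Any residual slack in the energy is absorbed by the inequality $\eta_x\leq\omega_x$.

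\emph{Arcsine criterion.} Write $E_x=\cos\alpha_x$ and $2\eta_x-1=\cos\beta_x$ as angles between unit vectors on a sphere. PSD-ness of the Gram matrix is then equivalent to the existence of a spherical configuration satisfying the triangle inequalities for the angles. Eliminating $u$ and $v$ gives the condition
\begin{equation}
|\alpha_0-\alpha_1|\leq \beta_0+\beta_1'
\end{equation}
in suitable variables. Converting via $\arccos(2\eta-1)=\pi-2\arcsin\sqrt{\eta}$, $\arccos E=\pi/2-\arcsin E$, and using monotonicity in $\eta_x\leq\omega_x$, this becomes exactly Eq.~\eqref{Q-criterion-arcsin}. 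The hypothesis $\omega_0+\omega_1\leq1$ ensures that the relevant angle sum stays at most $\pi$, so no wrap-around occurs.

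\textbf{Main obstacle.} The hardest step is the precise Gram-vector construction in the necessity direction, i.e.\ matching the exact matrix pattern, including which pairs carry $E_x$ versus $2\eta_x-1$. We would attack it first in the pure two-dimensional case, where everything is an explicit rotation. Failing that, we would simply cite~\cite{Van_Himbeeck_2017,vanhimbeeck2019correlations}, where this characterization is established.
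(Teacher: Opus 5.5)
Falling back on a citation is what the paper itself does: Lemma~\ref{Lem: SDPRep} is imported from \cite{Van_Himbeeck_2017,vanhimbeeck2019correlations} without proof. The self-contained argument you sketch, however, has genuine gaps in both directions.

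\emph{Necessity.} Gram vectors built from the Hilbert-space states themselves ($\ket{\psi_x}$, $M\ket{\psi_0}$, $G\ket{\cdot}$) cannot produce the required entries. The pairing of $\ket{\psi_1}$ with $M\ket{\psi_0}$ is $\bra{\psi_1}M\ket{\psi_0}$, not $E_1$. In general no unit vector $\ket{m}$ satisfies $\mathrm{Re}\braket{\psi_x|m}=\bra{\psi_x}M\ket{\psi_x}$ for both $x$. For instance, take $\ket{\psi_0}\perp\ket{\psi_1}$ in the $+1$ eigenspace of $M$: then $\ket{m}$ would have to equal both.

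Since $E_x$ and $2\eta_x-1$ are \emph{linear} in $\rho_x$, the Gram vectors must be linear images of the states, i.e.\ Bloch vectors. That requires the qubit reduction you skip:
\begin{itemize}
\item apply Jordan's lemma to $M^\lambda$ and $\widetilde O^\lambda=\mathbb{I}-\ketbra{0}{0}$ and keep the block containing the ground state;
\item absorb the $x$-dependent block weights into an enlarged hidden variable with product distribution, as in Remark~\ref{remark:focusing on Hg}. The same applies to pure-state decompositions, which also carry $x$-dependent weights, so ``concavity'' is not the right tool there.
\end{itemize}
On each resulting qubit, write $\rho_x=\frac12(\mathbb{I}+\mathbf n_x\cdot\boldsymbol\sigma)$, $M=\hat{\mathbf m}\cdot\boldsymbol\sigma$ and $\widetilde O=\frac12(\mathbb{I}-\sigma_3)$. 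Take $a_x=\mathbf n_x$ (padded with fresh orthogonal components to unit norm if mixed), $b=\hat{\mathbf m}$ and $c=-\hat{\mathbf z}$. Then $\langle a_x,b\rangle=E_x^\lambda$ and $\langle a_x,c\rangle=2\operatorname{Tr}[\widetilde O\rho_x^\lambda]-1$. Averaging over $\lambda$ preserves positivity and the unit diagonal. Note that $u=\mathbf n_0\cdot\mathbf n_1$ (for pure states $2|\braket{\psi_0|\psi_1}|^2-1$), not $\mathrm{Re}\braket{\psi_0|\psi_1}$.

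\emph{Sufficiency and the arcsine criterion.} Reading a strategy directly off four unit vectors in $\mathbb{R}^4$ does not work, whichever interpretation you take:
\begin{itemize}
\item As state vectors (with $M$ a reflection and $O=\mathbb{I}-\ketbra{c}{c}$), they give $\bra{a_x}M\ket{a_x}$ and $1-\langle c,a_x\rangle^2$. These are quadratic, not equal to $E_x$ and $\eta_x$.
\item As Bloch vectors, they must lie in $\mathbb{R}^3$, but a feasible matrix can have rank $4$. The Clifford-algebra extension $\frac12(\mathbb{I}+\sum_i c_i\gamma_i)$ has a two-dimensional kernel, violating the non-degeneracy that defines $\mathcal{Q}$.
\end{itemize}

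The fix is to go around the arcsine inequality.
\begin{itemize}
\item \emph{PSD $\Rightarrow$ arcsine.} Positivity gives unit vectors with $\angle(a_x,b)=\frac{\pi}{2}-\arcsin E_x$ and $\angle(a_x,-c)=2\arcsin\sqrt{\eta_x}$. It is the angle to $-c$, i.e.\ $\pi-\beta_x$ in your notation, that enters. The spherical triangle inequality then yields $|\arcsin E_0-\arcsin E_1|\leq\angle(a_0,a_1)\leq 2\arcsin\sqrt{\eta_0}+2\arcsin\sqrt{\eta_1}$, which is monotone in $\eta_x\leq\omega_x$.
\item \emph{Arcsine $\Rightarrow\mathcal{Q}$.} Pick $0\leq\chi_x'\leq\arcsin\sqrt{\omega_x}$ (with $\eta_x$ in place of $\omega_x$ for the first characterization) such that $2(\chi_0'+\chi_1')=|\arcsin E_0-\arcsin E_1|$. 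Place everything on one great circle: $a_x$ at angle $2\chi_x'$ on opposite sides of $-c$, and $b$ beyond the $a_x$ with the larger $E_x$, at distance $\frac{\pi}{2}-\arcsin E_x$ from it. Read as Bloch vectors, this is a real-qubit strategy with rank-one $\widetilde O=\frac12(\mathbb{I}+c\cdot\boldsymbol\sigma)$, energies $\sin^2\chi_x'\leq\omega_x$, and the prescribed $E_x$.
\end{itemize}
Together with necessity, this closes the cycle $\mathcal{Q}\Rightarrow$ PSD $\Rightarrow$ arcsine $\Rightarrow\mathcal{Q}$ for both characterizations at once. Alternatively, extreme points of the $4\times 4$ elliptope have rank at most $2$, so any feasible matrix is a mixture of planar configurations, each realizable as above and mixed via $\Lambda$. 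Your sketch asserts ``PSD $\Leftrightarrow$ triangle inequalities'' without either of these converse constructions, so it does not yet establish sufficiency.
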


\subsection{Single-Round von Neumann Entropy Contribution}

In this work, we are interested in the single-round quantity $H(A|E\Lambda ,X=0)_{\rho}$, where $A$ corresponds to Alice's measurement outcome that is attained when measuring $\rho_{SE\Lambda,x}$ for $x=0$,  and $H$ denotes the conditional von Neumann entropy which is defined as follows~\cite[Definition~5.1]{tomamichel2015quantum}. We use $\log$ to denote the base-2 logarithm throughout this work.
\begin{definition}
    Given any trace-normalized, bipartite quantum state $\rho_{AE}\in S_{=}\left(AE\right) $, the corresponding \textit{conditional von Neumann entropy} is given by
\begin{multline}
    H(A|E)_{\rho} \coloneq  H(AE)_{\rho}- H(E)_{\rho}\;, \\\quad \mathrm{where} \quad H(A)_{\rho}\coloneq -\operatorname{Tr}[\rho_{A}\log\left(\rho_{A}\right)] \; .
\end{multline}
Moreover, if the state is clear from context, then the subscript may be omitted.
\end{definition}

The conditional von Neumann entropy is of particular interest, as it typically quantifies the (asymptotic) rate at which entropy accumulates in typical quantum cryptography protocols. This relation is made precise via \textit{entropy accumulation theorems}~\cite{DFR20,Arnon-Friedman:2018aa,DF19,Liu_2021,pgrnmz9j,arqand2025marginalconstrainedentropyaccumulationtheorem,kamin2026unconditionalsecuritydiscretemodulatedcvqkd} or the closely related \textit{quantum probability estimation} framework~\cite{ZFK20,Knill_2020,knill2023quantumprobabilityestimationrandomness}.

We assume that the classical local hidden variable $\Lambda$ and the input $X$ satisfy \textit{measurement independence}~\cite{scarani2013deviceindependent}, i.e., 
\begin{align}
\operatorname{P}_{\Lambda X}\left(\lambda, x\right) = \operatorname{P}_{\Lambda}\left(\lambda\right)\cdot \operatorname{P}_{X}\left(x\right) \; .
\end{align}
 In this case, the conditional von Neumann entropy can be expressed as
\begin{multline}\label{objective-function}    
H(A|E\Lambda X)_{\rho} = \\\sum_{\lambda,x}\operatorname{P}_{\Lambda}\left(\lambda\right)\cdot \operatorname{P}_{X}\left(x\right) \cdot H(A|E,\Lambda=\lambda,X=x)_{\rho}
\;,
\end{multline}
where $H(A|E,\Lambda=\lambda,X=x)_{\rho}$ denotes the conditional von Neumann entropy evaluated on the post-measurement state $\rho^{\lambda}_{AE,x} = \left(M^\lambda\otimes \mathbb{I}_{E}\right) \left(\rho^{\lambda}_{SE,x}\right)$. 
We highlight that the measurement operator does not depend on  $X$, and the resulting problem of finding Eve's optimal strategy is a coupled optimization problem.

We are primarily interested 
in the limit $\operatorname{P}_{X}\left(0\right) \to 1$. This special case is analogous to DI cryptography set-ups where there is only a  single key-generating measurement; see, e.g.~\cite{pironio2009deviceindependent,acin2007device,hahn2025analyticrenyientropybounds}. In this case, the objective function simplifies to 
\begin{multline}\label{objective-function-X=0} 
    H(A|E \Lambda, X=0)_{\rho}= \\ \sum_{\lambda} \operatorname{P}_{\Lambda}\left(\lambda\right)H(A|E,\Lambda=\lambda, X=0)_{\rho} \; .
\end{multline}
During the following section, we aim to lower bound $H(A|E\Lambda,X=0)_{\rho}$ in this semi-DI setting. Concretely, we derive a hierarchy of SDPs which provide improved lower bounds on this conditional entropy from Eq.~\eqref{objective-function-X=0}.
\section{Semi Device-Independent Bounds on \texorpdfstring{$ H(A|E\Lambda, X=0)$}{Conditional von Neumann Entropy}}
Our main result in this section is a hierarchy of SDPs that can (increasingly tightly) lower bound $ H(A|E\Lambda, X=0)$. To this extent, we first provide analytic lower bounds for ${H(A|E,\Lambda=\lambda, X=0)}$ in Section~\ref{Sec: TightExpForFixedLambda}. We subsequently show in Section~\ref{bound-T-section} how to linearize this analytic expression in order to attain a \textit{lower} bound on $ H(A|E\Lambda, X=0)_{\rho}$ that is solvable via an SDP. In particular, the technique we utilize to linearize ${H(A|E,\Lambda=\lambda, X=0)}$ results in a hierarchy of SDPs that yields increasingly tighter lower bounds on the desired entropy.

\subsection{Entropy Bounds for Fixed \texorpdfstring{$\Lambda = \lambda$}{Fixed $\lambda$}}
\label{Sec: TightExpForFixedLambda}

In order to derive an analytic lower bound for ${H(A|E,\Lambda=\lambda, X=0)}$, we first employ Jordan's lemma  in Section~\ref{Sec: QubitReductions} and show that there exists a qubit reduction, i.e., it suffices to consider states where $SE$ are each qubit registers. For such two-qubit states, we then derive an expression for the post-measurement conditional entropy ${H(A|E,\Lambda=\lambda, X=0)}$ in Section~\ref{Sec: AnalyticLowBoundsFixedLambda}.

\subsubsection{Qubit Reductions} \label{Sec: QubitReductions}

\begin{definition}\label{definition: simplified energy operator}
Let $|0\rangle$ be the normalized eigenvector corresponding to the lowest eigenvalue of an energy operator $O$. We define the simplified energy operator as  $\widetilde{O}=\mathbb{I}-|0\rangle \langle 0|$.
\end{definition}
If $\left\{ \left|j\right\rangle \right\}_{j\in J}$ is a basis of normalized \textit{excited} eigenstates of $O$ such that  the energy operator $O$ is of the form
\begin{align}
    O = \sum_{j\in J} \alpha_j \cdot |j\rangle \langle j|\;,\;\;\; \alpha_j\geq 1\;\forall j \in J\;,
\end{align}
then the simplified energy operator $\widetilde{O}$ is simply given by
\begin{align}
 \widetilde{O} = \sum_{j\in J} |j\rangle \langle j|
\;.   
\end{align}
In this section we consider fixed $\Lambda = \lambda$ and aim to derive a lower bound on the conditional von-Neumann entropy ${H(A|E,\Lambda=\lambda, X=0)}$. Moreover, we relax the energy constraint in Eq.~\eqref{energy-constraint} by the weaker constraint
\begin{equation}\label{weak-energy-constraint}\sum_\lambda \operatorname{P}_{\Lambda}\left(\lambda\right)\operatorname{Tr}[\widetilde{O}^\lambda \rho^\lambda_x ] \leq \omega_{x}\;.
\end{equation} 
The bound that we subsequently derive is guaranteed to apply for the original set-up with the constraint Eq.~\eqref{energy-constraint}, as $\operatorname{Tr}[ \widetilde O^\lambda \rho ]\leq  \operatorname{Tr}[ O^\lambda \rho]$
for any state $\rho$.  Since the energy operator $ \widetilde O$ is a valid energy operator, the derived bounds are the best one may hope for.

\begin{rem} \label{Rem: ProjMeasReduc}
Following~\cite{vanhimbeeck2019correlations}, for the sake of lower bounding the conditional von Neumann entropy, one can ignore strategies in which $M^\lambda=\pm \mathbb{I}$ for some $\lambda$. Such deterministic attacks yield $E^\lambda_x=\pm 1$, and contribute zero to the conditional entropy. The same effect can be achieved using ${M^\lambda=\pm\left(\ketbra{0}{0}-\ketbra{j}{j}\right)}$ and choosing the state $\rho_x^\lambda=\ketbra{0}{0}$, where $\ket{0}$ and $\ket{j}$ are $O^\lambda$ eigenstates with eigenvalues $0$ and $\alpha_j \geq 1$, respectively. This attack satisfies $\text{Tr}\left[O^\lambda\rho_x^\lambda \right]=0$ and is therefore compatible with the behavior of the original attack.  
We will henceforth assume that $M^\lambda \neq\pm \mathbb{I}$ for all $\lambda$.

Moreover, without loss of generality, we assume that $M^\lambda$ is a projective measurement. This follows from the fact that, for binary outputs, any feasible POVM on finite dimensions can be expressed in terms of a convex mixture of projective measurements~\cite[Lemma 1]{Wright_2019}. Applying this convex mixture instead and giving Eve the classical information as to which projective measurement was applied leads to an attack that cannot be weaker than the original POVM strategy due to data processing~\cite[Corollary 5.1]{tomamichel2015quantum}.
\end{rem}

\begin{lemma}\label{claim: decomposition of the operators to 3 Hilbert spaces}
Let $M$ and $\widetilde O$ be Hermitian operators on a Hilbert space $\cal{H}$ such that the eigenvalues of $M$ are $+1$ and $-1$, each with degeneracy at least one, and the eigenvalues of $\widetilde O$ are $0$ with degeneracy one, and $1$ with degeneracy at least one.   
There exists a decomposition of $\mathcal{H}$ into a direct sum of three orthogonal subspaces: $\mathcal{H}=\mathcal{H}_{g}\oplus\mathcal{H}_{1}\oplus\mathcal{H}_{-1}$ such that $\mathcal{H}_{g}$,$\mathcal{H}_{1}$ and $\mathcal{H}_{-1}$ are conserved spaces of both $M$ and $\widetilde{O}$, and such that the restrictions of $M$ and $\widetilde{O}$ to these subspaces obey
\begin{align}
    &M_1=\mathbb{I}_{1}
    \;,
    \;\;\;\;\;\;\;\;
    \;\widetilde O_{1}=\mathbb{I}_{1}
    \;,\\
    &M_{-1}=-\mathbb{I}_{-1}
    \;,\;\;\widetilde O_{-1}=\mathbb{I}_{-1}
    \;.
\end{align}
Moreover, $\mathcal{H}_{g}$ is a two-dimensional space and there exists a basis for $\mathcal{H}_{g}$ in which 
\begin{equation} \label{Eq: MeasAndEnOperatorAngles}
    M_g = \sin\theta \sigma_1 + \cos\theta \sigma_3\;,\;\;\; O_g = \frac{\mathbb{I}_g - \sigma_3}{2}\;,
\end{equation}
for some $\theta\in[-\pi,\pi]$. Here, we used $M_i$ and $O_i$ for the restrictions of $M$ and $\widetilde O$ to $\mathcal{H}_i$ and $\mathbb{I}_i$ for the identity operator in $\mathcal{H}_{i}$,  $i\in\{1,-1,g\}$. Moreover, $\sigma_{1}$,$\sigma_{2}$ and $\sigma_{3}$ are Pauli matrices acting in $\mathcal{H}_g$. 
\end{lemma}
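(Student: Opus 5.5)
Let $\ket{0}$ denote the unique ground state of $\widetilde O$, so that $\widetilde O=\mathbb{I}-\ketbra{0}{0}$. Write $P_\pm=(\mathbb{I}\pm M)/2$ for the spectral projectors of $M$. Both are nonzero by assumption. The plan is to build $\mathcal{H}_g$ as the smallest subspace that contains $\ket{0}$ and is invariant under $M$. Concretely, set
\[
\mathcal{H}_g=\operatorname{span}\{P_+\ket{0},\,P_-\ket{0}\}.
\]
This space contains $\ket{0}=P_+\ket{0}+P_-\ket{0}$. It is $M$-invariant, since $M$ acts as $\pm1$ on the two spanning vectors. It is also $\widetilde O$-invariant, because $\widetilde O$ acts as $-\ketbra{0}{0}$ plus the identity and $\ket{0}\in\mathcal{H}_g$. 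Hence its orthogonal complement $\mathcal{H}_g^\perp$ is invariant under both Hermitian operators.

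On $\mathcal{H}_g^\perp$, every vector is orthogonal to $\ket{0}$, so $\widetilde O$ restricts to the identity there. Since $M$ also preserves $\mathcal{H}_g^\perp$, we can split
\[
\mathcal{H}_g^\perp=\mathcal{H}_1\oplus\mathcal{H}_{-1},\qquad \mathcal{H}_{\pm1}=P_\pm\mathcal{H}_g^\perp,
\]
which are the $\pm1$ eigenspaces of $M|_{\mathcal{H}_g^\perp}$. On each piece, $M_{\pm1}=\pm\mathbb{I}$ and $\widetilde O_{\pm1}=\mathbb{I}$. This proves the claimed three-part decomposition and its orthogonality.

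The main obstacle is the dimension of $\mathcal{H}_g$. It equals two exactly when both $P_+\ket{0}$ and $P_-\ket{0}$ are nonzero, and nothing in the hypotheses forces this: $\ket{0}$ could be an eigenvector of $M$. In that degenerate case I would enlarge $\mathcal{H}_g$ by hand. Suppose, say, $M\ket{0}=\ket{0}$. Since $M\neq\mathbb{I}$, pick a unit vector $\ket{j}$ in the $-1$ eigenspace of $M$. It is automatically orthogonal to $\ket{0}$, and we set $\mathcal{H}_g=\operatorname{span}\{\ket{0},\ket{j}\}$. This span is invariant under both $M$ and $\widetilde O$, and the complement argument above goes through unchanged. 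This case corresponds to $\theta\in\{0,\pm\pi\}$. The case $M\ket{0}=-\ket{0}$ is symmetric.

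Finally, I would fix the basis of $\mathcal{H}_g$. Take $\ket{0}$ together with the unit vector $\ket{1}\in\mathcal{H}_g$ orthogonal to it, so that $O_g=\ketbra{1}{1}=(\mathbb{I}_g-\sigma_3)/2$.

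Next, $M_g$ is Hermitian with eigenvalues $+1$ and $-1$, since each spanning vector is an eigenvector and they have opposite eigenvalues. So $M_g$ is traceless with $M_g^2=\mathbb{I}_g$, and can be written $M_g=\vec n\cdot\vec\sigma$ with $|\vec n|=1$.

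Rephasing $\ket{1}\mapsto e^{i\phi}\ket{1}$ leaves $O_g$ unchanged and rotates the $(n_1,n_2)$ components. Choosing $\phi$ appropriately removes the $\sigma_2$ component. Then $n_3=\cos\theta$ and $n_1=\sin\theta$ for some $\theta\in[-\pi,\pi]$, which gives Eq.~\eqref{Eq: MeasAndEnOperatorAngles}.
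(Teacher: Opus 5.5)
Your proof is correct, but it takes a more direct route than the paper. The paper invokes Jordan's lemma for the pair $(M,\widetilde O)$ to split $\mathcal{H}$ into jointly invariant blocks of dimension at most two. It then uses the non-degeneracy of the ground state to show that $\ket{0}$ lies in exactly one block $\widehat{\mathcal H}_{\beta^*}$, and takes that block as $\mathcal{H}_g$. If the block is one-dimensional, it pads it with an opposite-eigenvalue eigenvector of $M$, and a footnote handles $M_g\propto\mathbb{I}_g$. It then diagonalizes $O_g$ and applies a $z$-rotation exactly as you do.

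You instead use $\widetilde O=\mathbb{I}-\ketbra{0}{0}$ from the outset and write the relevant block explicitly as $\operatorname{span}\{P_+\ket{0},P_-\ket{0}\}$, which is precisely the Jordan block containing $\ket{0}$. This means you need neither the general lemma, nor the argument locating $\ket{0}$ in a single block, nor the footnote about $M_g$ being proportional to the identity; in your nondegenerate case $M_g$ automatically has both eigenvalues.

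You also make explicit two points the paper leaves implicit. First, $\mathcal{H}_g^\perp$ is invariant because both operators are Hermitian. Second, $\widetilde O$ acts as the identity on $\mathcal{H}_g^\perp$ because $\ket{0}\in\mathcal{H}_g$.

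The trade-off is this. The paper's route reuses the standard Jordan-lemma machinery from device-independent proofs, which applies to any pair of binary observables. Your construction relies specifically on $\widetilde O$ being the identity minus a rank-one projector. That is exactly the hypothesis here, so nothing is lost, and your argument is shorter and fully self-contained.
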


\begin{proof}
According to Jordan's lemma (see, e.g.,~\cite{pironio2009deviceindependent}), there exists a decomposition of $\mathcal{H}$ into a direct sum of orthogonal subspaces $\mathcal{H}=\bigoplus \widehat{\mathcal H}_{\beta}$ such that the $\widehat{\mathcal H}_{\beta}$'s are conserved under both $M$ and $\widetilde O$, and $\dim\widehat{\mathcal H}_{\beta}\leq 2$. 

The fact that the ground state $\ket{0}$ of $\widetilde O$ is non-degenerate implies that there exists $\beta^*$ such that ${|0\rangle\in \widehat{\mathcal H}_{\beta^*}}$. Indeed, if $|0\rangle=\sum_\beta a_\beta  |\chi_\beta\rangle$ where $|\chi_\beta\rangle\in\widehat{\mathcal H}_{\beta}$, then $0=\sum_\beta a_\beta  \widetilde O|\chi_\beta\rangle$ and thus ${a_\beta  \widetilde O|\chi_\beta\rangle=0}$ for all $\beta$. The uniqueness of the ground state implies that $a_\beta\neq 0$ exactly for one $
\beta$, which we denote by $\beta^*$.  

We begin with the case in which $\dim\mathcal{H}_{\beta^*}=1$,  i.e., 
$\mathcal{H}_{\beta^*}=\text{span}\left(|0\rangle\right)$. Thus, $|0\rangle$ is an eigenstate of $M$ with  eigenvalue $1$ or $-1$. We pick another eigenstate of $M$ with the opposite eigenvalue and define $\mathcal{H}_g$ to be the space spanned by the two states.  In this eigenstate basis $M_g=\pm\sigma_3$ and $O_g = \frac{\mathbb{I}_g - \sigma_3}{2}$. 

If  $\dim\mathcal H_{\beta^*}=2$, we define $\mathcal{H}_g=\mathcal{H}_{\beta^*}$. Next, we diagonalize $O_g$ to achieve $O_g=\frac{\mathbb{I}_g - \sigma_3}{2}$.   The eigenvalues of $M_g$ are $1$ and $-1$.\footnote{If $M_g$ was proportional to the identity operator, $\mathcal{H}_g$ would further decompose into two orthogonal conserved subspaces of dimension one and we would be back in the scenario in which $\dim
\mathcal H_{\beta^*}=1$. We thus assume that $M_g$ has both $1$ and $-1$ as eigenvalues.
}  Thus, $M_g$ can be written as $M_g=a_1\sigma_1+a_2\sigma_2+a_3\sigma_3$,  with real coefficients $a_i$.  Using a rotation around the z-axis, $\exp\left({\frac{i}{2}\arg(a_1+ia_2)\sigma_3}\right)$, we then switch to a basis in which $M_g = \sin\theta \sigma_1 + \cos\theta \sigma_3$  while keeping $O_g = \frac{\mathbb{I}_g - \sigma_3}{2}$ intact.

Finally, we define $\mathcal{H}_1$ and $\mathcal{H}_{-1}$ as the eigenspaces of $\mathcal{H}_g^\bot$ corresponding to eigenvalues $1$ and $-1$ of $M$, respectively. 
\end{proof}

\begin{rem}\label{remark:focusing on Hg}
     When determining Eve's optimal strategy, it is sufficient to focus solely on the $\mathcal{H}_g$ subspace. This is because employing deterministic strategies in the other subspaces, $\mathcal{H}_1$ and $\mathcal{H}_{-1}$, does not provide Eve with any advantage in terms of extracting information.  To illustrate this, let us consider the following argument.

 For any $\Lambda=\lambda$, let $\Pi_g $, $\Pi_1$, and $\Pi_{-1}$ denote the projectors onto the subspaces $\mathcal{H}_g$, $\mathcal{H}_1$, $\mathcal{H}_{-1}$, respectively. Following the argument of, e.g.,~\cite{pironio2009deviceindependent}, it suffices to consider states that are invariant under these projectors, i.e., that are block-diagonal on these subspaces. This is because non block-diagonal entries do not contribute to $\operatorname{Tr}
    \left[O \rho\right]$, $\operatorname{Tr}
    \left[M \rho\right]$, or the post-measurement conditional entropy $ H(A|E,\Lambda=\lambda, X=0)$. However, for any pair of states $\{\rho^{\lambda}_{SE,x}\}_x$ of the form
    \begin{multline} \label{Eq: GenSingLamAtt}
        \rho^{\lambda}_{SE,x} = p_{g,x} \rho_{g,x}  + p_{1,x} \rho_{1,x}  + p_{-1,x} \rho_{-1,x} \; , \\  \quad  \mathrm{where} \ \operatorname{Tr}_{E}[\rho_{y,x}]\in \mathcal{H}_y \ \mathrm{for}\  y \in \{g,+1,-1\} \; ,
    \end{multline}
    Eve can construct the following alternate attack. She can first create an additional classical random variable $\Lambda^\prime$ which stores $\Lambda^\prime= (y_0,y_1)$ 
with probability $p_{y_0,0}\cdot p_{y_1,1}$, where $y_x \in \{g,+1,-1\}$. Here, we note that if Eve were to keep $\Lambda^\prime$ and distribute the pair of states $\{\rho_{y_x,x}\}_x$ instead, then this attack generates (on average) the desired energy and measurement expectation values $(E_0^{\lambda},E_1^{\lambda},\omega_0^{\lambda},\omega_1^{\lambda})$ without reducing Eve's side-information (due to data processing~\cite[Corollary 5.1]{tomamichel2015quantum}). However, instead of sending the states $\rho_{\pm 1,x}$ (which produce deterministic outcomes) she can also instead send $\ketbra{m^\pm}{m^{\pm}}_{S} \otimes \ketbra{0}{0}_E$, where $\ket{m^+},\ket{m^-}\in\mathcal{H}_g$ are eigenstates of $M_g^\lambda$ with eigenvalues $+1$ and $-1$. These states also produce deterministic outcomes and Eve can fully determine the outcome via $\Lambda^\prime$. Indeed, it is easy to verify that the resulting attack again achieves the quantum behavior  $(E_0^{\lambda},E_1^{\lambda},\omega_0^{\lambda},\omega_1^{\lambda})$. This resulting attack gives Eve at least the same level of information as the attack from Eq.~\eqref{Eq: GenSingLamAtt}, while using only the subspace $\mathcal{H}_g$. As such, one may restrict Eve to $\mathcal{H}_g$ without weakening the optimal attack.   
\end{rem}

We conclude this section by noting that since, for any $\Lambda = \lambda$, it is sufficient to consider the simplified 
 energy operator $\widetilde O^\lambda$, Lemma~\ref{claim: decomposition of the operators to 3 Hilbert spaces} can be applied.  Moreover, a consequence of Remark~\ref{remark:focusing on Hg} and the data processing inequality is that it suffices to consider $\rho_{SE,x}^\lambda$ where both $S$ and $E$ are qubit registers.
 For any states $\rho_{SE,x}^\lambda$ and operators $M^\lambda,O^\lambda$ such that 
 \begin{align}
    \operatorname{Tr} \left[ M^\lambda \rho_{SE,x}^\lambda\right] = E_x^\lambda \quad \mathrm{and} \quad \operatorname{Tr} \left[ O^\lambda \rho_{SE,x}^\lambda\right] \leq \omega_x^\lambda \; ,
 \end{align}
it namely follows from the Remark~\ref{remark:focusing on Hg} that there exists: a distribution $\operatorname{P}_{\Lambda^\prime}$, a family of qubit-qubit states $\{\rho_{SE,x}^{\lambda^\prime}\}_{\lambda^\prime}$, single-qubit observables $\{M^{\lambda^\prime}\}_{\lambda^\prime}$, and single-qubit energy operators $\{O^{\lambda^\prime}\}_{\lambda^\prime}$ 
such that 
 \begin{align}
    \sum_{\lambda^\prime} \operatorname{P}_{\Lambda^\prime}(\lambda^\prime)\operatorname{Tr} \left[ M^{\lambda^\prime} \rho_x^{\lambda^\prime}\right] &= E_x^\lambda \; , \\  \sum_{\lambda^\prime}\operatorname{P}_{\Lambda^\prime}(\lambda^\prime)\operatorname{Tr} \left[ O^{\lambda^\prime} \rho_x^{\lambda^\prime}\right] &\leq \omega_x^{\lambda} \; ,  \\ H(A|E, \Lambda = \lambda, X=0) &\geq H(A|E\Lambda^\prime, X=0) \;  .
 \end{align}
 Moreover, for all $\Lambda^\prime=\lambda^\prime$, $M^{\lambda^\prime}$ is a measurement on a qubit space with   non-degenerate eigenvalues $\pm 1$. Similarly, $O^{\lambda^\prime}$ is a rank-one energy operator on the same qubit space with non-degenerate eigenvalues $0,1$.

\subsubsection{Lower Bounds for Fixed \texorpdfstring{$\Lambda = \lambda$}{Fixed $\Lambda = \lambda$}} \label{Sec: AnalyticLowBoundsFixedLambda}
Without loss of generality, we assume that the initial states $\rho_{SE,x}^{\lambda}$ are (pure) states on \textit{qubit} registers $SE$ and that $M^\lambda,O^\lambda$ are rank-one measurements on $S$. Our main result in this section is an 
 lower bound on $ {H(A|E,\Lambda=\lambda, X=0)}$, where $A$ denotes Alice's measurement outcomes after applying $M$.  For the resulting expression provided in Theorem~\ref{theorem: optimization problem for the general model} below, we require the following SDP for any given quantum behavior $(E_0,E_1,\omega_0,\omega_1)$:
\begin{equation}\label{Eq: QstarExp}
    \begin{aligned}
        & q_\star \coloneqq && \min_{t,z_0,z_1} q\\
        & && \text{subject to} \\
         & &&  \begin{pmatrix}
  q & E_0 & z_0 \\
  E_0 & 1 & t \\
  z_0 & t & 1
\end{pmatrix} \succeq 0 \; , \quad \begin{pmatrix}
  1 & E_1 & z_1 \\
  E_1 & 1 & t \\
  z_1 & t & 1
\end{pmatrix} \succeq 0\; , \\
        & &&  1\geq z_x \geq 1-2\omega_x \ \forall x\in \{0,1\} \; , \quad 0\leq q \leq 1 \;.
    \end{aligned}
    \end{equation}

\begin{theorem}\label{theorem: optimization problem for the general model}
    Let $\{\rho_{SE,x}^{\lambda}\}_{x\in\{0,1\}}$ be any set of two two-qubit states. Moreover, let $M^\lambda$ and $\widetilde O^\lambda=\mathbb{I}-\ketbra{0}{0}$ denote a rank-one measurement and (simplified) energy operator on $S$,\footnote{Note that one may, without loss of generality, assume that one of the operators is diagonal in the computational basis. Here, we choose to express $\widetilde O^\lambda$ to be diagonal in this basis.} respectively, such that the resulting behavior is given by $\left(E_0^\lambda,E_1^\lambda,\omega_0^\lambda,\omega_1^\lambda\right)\in \cal{Q}$. Then
\begin{multline}\label{general-bound-T}
        H\left(A|E,\Lambda=\lambda, X=0\right) \geq \\  h_2\left(\frac{1-E_0^\lambda}{2}\right)-h_2\left(\frac{1-\sqrt{q_\star}}{2}\right)\;,
    \end{multline}  
    where $q_\star$ is given by Eq.~\eqref{Eq: QstarExp}. Conversely, this bound is optimal for any behavior ${\left(E_0^\lambda,E_1^\lambda,\omega_0^\lambda,\omega_1^\lambda\right)\in \cal{Q}}$ that is realizable in two  dimensions with a family $\{\rho_{S,x}^{\lambda}\}_{x\in\{0,1\}}$  of qubit states and corresponding projective operators $M^\lambda$, $\widetilde O^\lambda$. That is, there then exist two two-qubit states on $SE$ and corresponding operators that saturate this bound.
\end{theorem}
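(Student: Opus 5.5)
We reduce to the standard two-qubit computation for a binary projective measurement on a pure bipartite state, and then encode the remaining freedom in the SDP~\eqref{Eq: QstarExp}. First, by purifying (adding a register to $E$ only decreases the entropy, by data processing) we may take $\rho^\lambda_{SE,0}=\ketbra{\psi_0}{\psi_0}$ pure. Write $M^\lambda=\vec m\cdot\vec\sigma$ with $|\vec m|=1$, and let $\vec r_0$ be the Bloch vector of $\rho^\lambda_0=\operatorname{Tr}_E\rho^\lambda_{SE,0}$. For a pure state the post-measurement state $\rho_{AE}$ has $H(AE)=H(A)$ and $H(E)=H(S)$, so
\begin{equation*}
H(A|E,\Lambda=\lambda,X=0)=h_2\!\left(\tfrac{1-E_0^\lambda}{2}\right)-h_2\!\left(\tfrac{1-|\vec r_0|}{2}\right),
\end{equation*}
with $E_0^\lambda=\vec m\cdot\vec r_0$. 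Since $h_2\big(\tfrac{1-\sqrt{q}}{2}\big)$ is decreasing in $q\in[0,1]$, the claim reduces to $|\vec r_0|^2\geq q_\star$.

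To show that, we exhibit a feasible point of the SDP with $q=|\vec r_0|^2$. Take the Gram matrix of the three vectors $\vec r_0,\vec m,\vec n$, where $\vec n=(0,0,1)$ is the Bloch vector of the ground state $\ket 0$, so that $\widetilde O^\lambda=(\mathbb I-\vec n\cdot\vec\sigma)/2$. Its entries are $|\vec r_0|^2$, $E_0=\vec r_0\cdot\vec m$, $z_0=\vec r_0\cdot\vec n$, $t=\vec m\cdot\vec n$, and ones on the diagonal. This Gram matrix is PSD and gives the first constraint. For $x=1$, the state $\rho_1$ also has Bloch vector $\vec r_1$ with $|\vec r_1|\le1$. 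Rescaling $\vec r_1$ to unit length preserves neither $E_1$ nor $z_1$ exactly, so we use instead the matrix with entries $1,E_1,z_1,t$. This matrix equals the Gram matrix of $\vec r_1,\vec m,\vec n$ plus $\mathrm{diag}(1-|\vec r_1|^2,0,0)\succeq0$, hence it is PSD. The energy constraint is $\operatorname{Tr}[\widetilde O\rho_x]=(1-z_x)/2\le\omega_x$, i.e. $z_x\ge1-2\omega_x$, and $z_x\le1$ holds trivially. The shared variable $t$ enforces that the same $M$ and $O$ are used for both inputs. Hence $q_\star\le|\vec r_0|^2$, which proves the inequality. (Here $\omega_x$ should be read as the per-$\lambda$ value $\omega_x^\lambda$.)

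For optimality, take an optimal $(q_\star,t,z_0,z_1)$. The first PSD matrix is $3\times3$ and therefore a Gram matrix of vectors in $\mathbb R^3$. We realize it by choosing $\vec n=(0,0,1)$, then $\vec m$ with $\vec m\cdot\vec n=t$, then $\vec r_0$ with $|\vec r_0|^2=q_\star$. The main obstacle is to simultaneously realize the second matrix with some $\vec r_1$, $|\vec r_1|\le1$, in the same frame as $\vec m,\vec n$. Each PSD constraint only fixes the projection of $\vec r_x$ onto $\operatorname{span}(\vec m,\vec n)$ together with a bound on its norm; the components orthogonal to this span are free, so the two constraints decouple once $\vec m,\vec n$ are fixed. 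Concretely, $\vec r_1$ is the prescribed in-plane vector plus an out-of-plane part of length at most $\sqrt{1-\cdot}$, and the Schur complement of the second matrix guarantees that this length is admissible. We then purify $\rho_0$ (the state with Bloch vector $\vec r_0$) into a two-qubit pure state, so the entropy formula above holds with equality. The remaining requirement, that the SDP optimum be attained by a behavior realizable in qubits, is exactly the hypothesis in the converse. Degenerate cases such as $|t|=1$ need a separate check.
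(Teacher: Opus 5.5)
Your proposal is correct. It has the same skeleton as the paper's proof: purify; use $H(AE)=h_2\bigl(\tfrac{1-E_0^\lambda}{2}\bigr)$ and $H(E)=H(S)=h_2\bigl(\tfrac{1-|\vec r_0|}{2}\bigr)$; reduce to $|\vec r_0|^2\geq q_\star$; and reconstruct Bloch vectors from an optimizer for tightness. You justify the key SDP step by a different argument.

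The paper proceeds in coordinates:
\begin{itemize}
\item it fixes $M^\lambda=\sqrt{1-t^2}\sigma_1+t\sigma_3$, which needs a sign convention coming from Lemma~\ref{claim: decomposition of the operators to 3 Hilbert spaces};
\item it discards the $\sigma_2$ components;
\item it passes through the intermediate non-convex program in $(t,y_x,z_x)$;
\item in Appendix~\ref{App: OptimalValueQ} it proves that program equivalent to the SDP by checking all principal minors (Sylvester's criterion) and explicit determinant identities.
\end{itemize}

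Your observation replaces all of this. The first SDP matrix is the Gram matrix of $(\vec r_0,\vec m,\vec n)$. The second is the Gram matrix of $(\vec r_1,\vec m,\vec n)$ plus the slack $\mathrm{diag}(1-|\vec r_1|^2,0,0)$. This makes the forward direction immediate and coordinate-free, with no projection or sign convention needed. It also shows transparently why the relaxation is exact. Your Schur-complement converse reproduces the paper's reconstruction $n_{x,1}=(E_x^\lambda-tz_x)/\sqrt{1-t^2}$, $n_{x,3}=z_x$.

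Two loose ends remain, both easy.
\begin{itemize}
\item \emph{The case $t=\pm1$,} which you only flag and the paper treats in a footnote. Here $(0,1,-t)^{\mathrm T}$ has zero quadratic form, hence lies in the kernel of both PSD matrices. This forces $E_x^\lambda=tz_x$. So $\vec m=t\vec n$, and $\vec r_x=z_x\vec n$ plus an orthogonal component of length $\sqrt{q_\star-z_0^2}$ (resp.\ at most $\sqrt{1-z_1^2}$) works.
\item \emph{The role of the hypothesis in the converse,} which you describe slightly wrongly. Your Gram construction already shows that every SDP-feasible point is qubit-realizable. So the hypothesis is only needed to make the feasible set nonempty. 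Compactness of that set then guarantees the minimizer you take.
\end{itemize}
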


\begin{proof}
     Without loss of generality, we assume that  $\rho^{\lambda}_{SE,x}$ are pure states. Moreover, following Lemma~\ref{claim: decomposition of the operators to 3 Hilbert spaces}, we will choose the basis such that
     \begin{align}
         M^\lambda = \sqrt{1-t^2} \sigma_1 + t \sigma_3\;,\quad  \widetilde O^{\lambda} = \frac{\mathbb{I} - \sigma_3}{2}\;,
     \end{align}
where $|t|\leq 1$. We denote the corresponding measurement and energy expectation values by  $E_x^\lambda\equiv\operatorname{Tr}\left [M^\lambda\rho^\lambda_x\right] $ and $\widetilde\omega_x^\lambda\equiv\operatorname{Tr}\left [\widetilde O\rho^\lambda_x\right] $, respectively, and highlight that  $\widetilde\omega_x^\lambda \leq \omega_x^\lambda$ holds by definition. Let us now consider the conditional von Neumann entropy 
\begin{multline}
    H(A|E,\Lambda=\lambda ,X=0) =  \\ H(AE|\Lambda=\lambda,X=0) - H(E|\Lambda=\lambda,X=0) \;.
\end{multline}
  Here, $H(AE|\Lambda=\lambda,X=0)$  is easily computed in the measurement eigenbasis $\{ \ket{+}_{S}, \ket{-}_{S}\}$ of $M^\lambda$  with corresponding eigenvalues $\{ +1,-1\}$. We can express the shared pre-measurement state $\rho^{\lambda}_{SE,0}= \ketbra{\psi_{SE}^{\lambda}}{\psi_{SE}^{\lambda}}$ as 
  \begin{multline}
    \ket{\psi_{SE}^{\lambda}} = \\ \sqrt{1-f_{\lambda}^2}\,\ket{+}_{S}\otimes \ket{\psi_1}^{\lambda}_{E} + f_{\lambda}\,\ket{-}_{S}\otimes \ket{\psi_2}^{\lambda}_{E} \; ,
\end{multline}
where $f_{\lambda}\in [0,1]$ such that $E_0^\lambda =1-2f_{\lambda}^2$. Moreover, $ \ket{\psi_1}_{E}^{\lambda}, \ket{\psi_2}^{\lambda}_{E}$ are  normalized vectors in Eve's Hilbert space.\footnote{Any complex pre-factor can be absorbed into the definitions of $\{ \ket{\psi_1}_{E}^{\lambda}, \ket{\psi_2}^{\lambda}_{E}\}$.}
  Evaluating this quantity on the post-measurement state
\begin{multline}
    \rho_{AE,x}^{\lambda} = \left(1-f_{x,\lambda}^2\right) \ketbra{+}{+}_{A} \otimes \ketbra{\psi_1}{\psi_1}^{\lambda}_{E,x} \\+ f_{x,\lambda}^2 \ketbra{-}{-}_{A} \otimes \ketbra{\psi_2}{\psi_2}^{\lambda}_{E,x} \; ,
\end{multline}
one finds that
\begin{multline}\label{H(AE)}
  H\left(AE|\Lambda=\lambda,X=0\right)=\\
  h_{2}\left(f_{0,\lambda}^2\right)= h_{2}\left(\frac{1-E_0^\lambda}{2}\right)\; .
\end{multline}   
     To derive an expression for $H(E|\Lambda=\lambda,X=0)$ on the post-measurement state, recall that this is equal to evaluating  $H(E|\Lambda=\lambda,X=0)$ on the pre-measurement state $\rho_{SE,0}^{\lambda}$. Moreover, since we are assuming this state is pure, this is simply equal to $H(S|\Lambda=\lambda,X=0)$.\footnote{This can readily be seen using the Schmidt decomposition; see, e.g.,~\cite[Theorem 2.2]{khatri2024principlesquantumcommunicationtheory}} The reduced states on $S$ can be expressed as 
     \begin{align}
         \rho_{S,x}^{\lambda} =\frac{1}{2}\left(\mathbb{I}+{\bf n}_x\cdot{\boldsymbol{\sigma}}\right)\; ,
     \end{align}
    where $\boldsymbol{\sigma} = (\sigma_1,\sigma_2,\sigma_3)^{\operatorname{T}}$ denotes the vector containing the Pauli matrices and ${\bf n}_x= (n_{x,1},n_{x,2},n_{x,3})^{\operatorname{T}}$ is a sub-normalized three-dimensional vector. Such states have eigenvalues $\frac{1 \pm|{\bf n}_x|}{2}$ (see, e.g.,~\cite[Eq.~(2.94)]{Barnett2009QuantumInformation}) and the von Neumann entropy for $X=0$ can therefore simply be expressed as
    \begin{align} \label{Eq: EntNormN}
       H(S|\Lambda=\lambda,X=0)  = h_2\left(\frac{1-|{\bf n}_0|}{2}\right) \; .
    \end{align}
To achieve an upper bound on this quantity, we now aim to find $ \inf|{\bf n}_0|$, i.e., the smallest value that $|{\bf n}_0|$ may take as a function of the underlying quantum behavior. Concretely, the energy expectation value constraints are equivalent to  the inequalities $n_{x,3}  \geq 1-2\omega_x^\lambda$. The measurement expectation value constraints, on the other hand, can alternatively be expressed as $
    E_x^\lambda = \sqrt{1-t^2}n_{x,1} +tn_{x,3} $.
Note that we may without loss of generality set $n_{x,2}=0$, as the mapping $\bold{n}_x \mapsto (n_{x,1},0,n_{x,3})^{\operatorname{T}}$ maps sub-normalized vectors to sub-normalized vectors while leaving $(E_0^\lambda,E_1^\lambda, \widetilde{\omega}_0^\lambda,\widetilde{\omega}_1^\lambda)$ invariant.
Setting $y_x \coloneq n_{x,1}$ and $z_x \coloneq n_{x,3}$, the smallest feasible value for $|\bold{n}_0|^2$ is simply lower bounded by the following optimization problem:

\begin{equation}\label{Eq: MinNormN}
    \begin{aligned}
         & \min_{t,\{y_x,z_x\}} && y_0^2  + z_0^2 \\
        & \text{subject to} \\
        &  1\geq z_x \geq 1-2\omega_x &&\ \forall x\in \{0,1\} \; , \quad -1 \leq t \leq 1 \;, \\
         &  E_x^\lambda = \sqrt{1-t^2}y_x +tz_x && \ \forall x\in \{0,1\}\;  , \\
         & y_x^2  + z_x^2  \leq 1 &&\ \forall x\in \{0,1\}\; .
    \end{aligned}
    \end{equation}
   The feasible set is not empty  since, by assumption, there exists at least one family of single-qubit states $\{\rho_{S,x}^{\lambda}\}_{x\in\{0,1\}}$  and corresponding operators $M^\lambda$, $\widetilde O^\lambda=\mathbb{I}-\ketbra{0}{0}$ that achieve the desired quantum behavior $\left(E_0^\lambda,E_1^\lambda,\omega_0^\lambda,\omega_1^\lambda\right)\in \cal{Q}$. Moreover, the minimum is achieved since the objective function is continuous and the feasible set is closed and bounded. 
In particular, the last two constraints ensures that we are only optimizing over all $t,y_x,z_x$ for which there actually exists a feasible state (i.e., $|{\bf n}_x| \leq 1$) $\rho_{S,x}^{\lambda}$ that is compatible with the measurement expectation value $E_x$.
In Appendix~\ref{App: OptimalValueQ}, we show that Eq.~\eqref{Eq: MinNormN} is equivalent to
\begin{equation} \label{Eq: OptTildeQ}
    \begin{aligned}
        & q_\star \coloneqq && \min_{q,t,z_0,z_1} q\\
        & && \text{subject to} \\
         & &&  \begin{pmatrix}
  q & E_0^{\lambda} & z_0 \\
  E_0^{\lambda} & 1 & t \\
  z_0 & t & 1
\end{pmatrix} \succeq 0 \; , \quad \begin{pmatrix}
  1 & E_1^{\lambda} & z_1 \\
  E_1^{\lambda} & 1 & t \\
  z_1 & t & 1
\end{pmatrix} \succeq 0\; , \\
        & &&  1\geq z_x \geq 1-2\omega_x \ \forall x\in \{0,1\} \; , \quad 0\leq q \leq 1 \;.
    \end{aligned}
    \end{equation}
Since this is a valid lower bound on $|\bold{n}_0|^2$, we get the following bound on the desired entropy via Eq.~\eqref{Eq: EntNormN}
\begin{align} \label{Eq: OptHE}
       H(S|\Lambda=\lambda,X=0)  \leq h_2\left(\frac{1-\sqrt{q_\star}}{2}\right)\; .
    \end{align}
   Combining  Eq.~\eqref{Eq: OptHE} with Eq.~\eqref{H(AE)} yields the desired bound on $H(A|E,\Lambda=\lambda ,X=0)$. 

We now prove the tightness of this bound for all quantum behaviors $\left(E_0^\lambda,E_1^\lambda,\omega_0^\lambda,\omega_1^\lambda\right)\in \cal{Q}$ that are achievable with qubit states $\{\rho_{S,x}^{\lambda}\}_{x\in\{0,1\}}$  and corresponding projective operators $M^\lambda$, $\widetilde O^\lambda$. Since $\left(E_0^\lambda,E_1^\lambda,\omega_0^\lambda,\omega_1^\lambda\right)\in \cal{Q}$ is achievable in two dimensions, the feasible set in Eq.~\eqref{Eq: OptTildeQ} has to be non-empty by construction and therefore $0\leq q_{\star} \leq 1$ naturally holds.  For this optimal choice $q_\star$, we can straightforwardly reconstruct states $\{\rho_{S,x}^{\lambda}\}_{x\in \{0,1\}}$ from the corresponding optimizers $t_\star,z_{0,\star},z_{1,\star}$, i.e.,  
  \begin{align}
      n_{x,1} = \frac{E_x^\lambda-t_\star z_{x,\star}}{\sqrt{1-t_\star^2}}\; , \quad n_{x,2} = 0  \;,  \quad n_{x,3}= z_{x,\star} \;.
  \end{align}
If $t_\star= \pm 1$, then it is easy to verify that $z_{x,\star}= t_\star E_x$ and we obtain the desired bound by simply setting $n_{x,1}=0$.\footnote{We remark that if $t= \pm 1$, then $n_{0,3}=t E_0^\lambda$. Moreover, since $n_{0,3} \geq  1-2\omega_0^\lambda$, it must hold that $\omega_0^\lambda \geq  (1-tE_0^\lambda)/2$. In this case, Eve can simply prepare the state 
$\ket{\psi_{SE,0}^{\lambda}} = \sqrt{(1+tE_0^\lambda)/2}\,\ket{0}_{S}\otimes \ket{0}_{E} + \sqrt{(1-tE_0^\lambda)/2}\,\ket{1}_{S}\otimes \ket{1}_{E}$
which produces acceptable measurement and energy expectation values. Moreover, the measurement output has no randomness from Eve's perspective. The final expression derived in Eq.~\eqref{Eq: OptHE} also captures this, i.e., as  $t \to \pm 1$, we retrieve the required lower bound $|{\bf n}_0| \geq |E_0^\lambda |\Leftrightarrow H(S|\Lambda=\lambda,X=0) \leq  h_{2}\left((1-|E_0^\lambda|)/2\right)$.}  These vectors satisfy $|\bold{n}_0|^2 = q_{\star} \leq 1$ and $|\bold{n}_1|  \leq 1$, yield the quantum behavior $(E_0^\lambda,E_1^\lambda, \omega_0^\lambda,\omega_1^\lambda)$, and saturate the entropy bound. This ensures the bound is tight. 

  \end{proof}

\subsection{Lower Bounds on \texorpdfstring{$H(A|E\Lambda, X=0)$}{All $\lambda$}}   \label{bound-T-section}

Given a quantum behavior $(E_0,E_1,\omega_0,\omega_1)\in\mathcal{Q}$, it follows from Eq.~\eqref{objective-function-X=0} and  Theorem~\ref{theorem: optimization problem for the general model}  that:
\begin{multline} \label{Eq: OptProbCondEntAllLambda}
    {\inf}\,H(A|E\Lambda,X=0) \geq  \\
        \min_{\substack{P_\Lambda,\{(\bm{E}^{\lambda},\bm{\omega}^{\lambda})\}\\
         \{q^\lambda,t^{\lambda},z_0^{\lambda},z_1^{\lambda}\}}} 
         \sum_{\lambda}\operatorname{P}_{\Lambda}(\lambda)
        f(E_0^\lambda,q^\lambda)
\end{multline}
subject to
\begin{align}
    & && && 
        \sum_{\lambda}\operatorname{P}_{\Lambda}(\lambda)E_{x}^{\lambda}
        =E_{x}\;,\;\;\;-1\leq E_x^\lambda\leq 1 \nonumber
        \\
        & && && 
        \sum_{\lambda}\operatorname{P}_{\Lambda}(\lambda)\omega_{x}^{\lambda}
        \leq\omega_{x}\;,\;\;\;\;\;0\leq\omega_x^\lambda\leq 1 \nonumber
        \\
        & && && 
        \sum_{\lambda}\operatorname{P}_{\Lambda}(\lambda)=1
        \;,\;\;\;\;\;\;\;\;\;\;\;\;
        \operatorname{P}_{\Lambda}(\lambda)\geq 0
        \\
        & && &&
        \begin{pmatrix}
          q^\lambda & E_0^{\lambda} & z_0^\lambda \\
          E_0^{\lambda} & 1 & t^\lambda \\
          z_0^\lambda & t^\lambda & 1
        \end{pmatrix} \succeq 0 
        \; , \quad 
        \begin{pmatrix}
          1 & E_1^{\lambda} & z_1^\lambda \\
          E_1^{\lambda} & 1 & t^\lambda \\
          z_1^\lambda & t^\lambda & 1
        \end{pmatrix} \succeq 0 \nonumber
        \\
        & && && 
        1\geq z_x^\lambda \geq 1-2\omega_x^\lambda  
        \; , \quad 0\leq q^\lambda \leq 1 \nonumber 
        \\
        & && &&  
        (E^\lambda_0,E^\lambda_1,\omega^\lambda_0,\omega^\lambda_1)
        \in\mathcal{Q} \; , \nonumber
\end{align}
where $f(E_0^\lambda,q^\lambda)=h_2\left(\frac{1-E_0^\lambda}{2}\right)-h_2\left(\frac{1-\sqrt{q^\lambda}}{2}\right)$. Moreover, we note that including the upper bound constraint $\omega_x^\lambda\leq 1$ does not alter the optimization problem. This follows from the discussion Section~\ref{Sec: QubitReductions}, where we show it suffices to only consider energy operators $ \widetilde O$ whose eigenvalues are $\{0,1\}$. In particular, if $\omega_x^\lambda\geq 1$ for at least one $X=x$, then $(E^\lambda_0,E^\lambda_1,\omega^\lambda_0,\omega^\lambda_1)$ can always be generated via a convex mixture of classical, i.e., deterministic, correlations whose output will already contain no randomness from Eve's perspective~\cite[Eq.~(37)]{Van_Himbeeck_2017}.

    Section~\ref{Sec: CompSDPBounds} now discusses how this problem can be relaxed to a hierarchy of SDPs. This hierarchy yields increasingly tight certified lower bounds on the RHS of  Eq.~\eqref{Eq: OptProbCondEntAllLambda}. Afterwards, we compare our results in Section~\ref{Sec: CompClassAtt} to the classical von Neumann entropy bounds of~\cite{vanhimbeeck2019correlations}, the min-entropy bounds against quantum adversaries of~\cite{vanhimbeeck2019correlations}, and the min-entropy bounds from~\cite{ranu2026randomnesscertifiedenergyconstrainedsemideviceindependent} in the entanglement assisted setting.

\begin{rem}
        A consequence of Remark~\ref{Rem: ProjMeasReduc} and Jordan's lemma is that we simply need to consider single-qubit attacks and projective measurements; see Lemma~\ref{claim: decomposition of the operators to 3 Hilbert spaces}. However, it is not a priori clear that all elements in $\mathcal{Q}$ can be generated with qubit states and projective measurements.\footnote{Interestingly, it is possible to generate all feasible quantum behaviors on two-dimensional spaces when $M$ is described by a  POVM~\cite[Corollary 3]{vanhimbeeck2019correlations}.} As such the relaxed constraint $(E^\lambda_0,E^\lambda_1,\omega^\lambda_0,\omega^\lambda_1)\in\mathcal{Q}$ may yield lower bounds that are not tight. Nevertheless, we expect this bound to be very tight, especially when $(E_0,E_1,\omega_0,\omega_1)\in\mathcal{Q}$ is supposed to yield high entropy. To highlight why we believe this to be true, we note that we expect high von Neumann entropy to be certified when $(E_0,E_1,\omega_0,\omega_1)$ lies near an extremal point of $\mathcal{Q}$; see, e.g.,~\cite[Section~3.4]{vanhimbeeck2019correlations} where high entropy is certified near the boundary of $\mathcal{Q}$. However, all behaviors $(E_0,E_1,\omega_0,\omega_1)$ at extremal points can be generated with single-qubit attacks and projective measurements~\cite[Lemma~2]{vanhimbeeck2019correlations}. Consequently, the bounds of Theorem~\ref{theorem: optimization problem for the general model} are tight in this regime.
    \end{rem}
\subsubsection{Computable SDP Bounds} \label{Sec: CompSDPBounds}
Currently, the objective function in Eq.~\eqref{Eq: OptProbCondEntAllLambda} is a convex mixture of the function
\begin{multline} \label{Eq: ObjFctNonLinear}
    f(E_{0},q) =h_2\left(\frac{1-E_0}{2}\right)-h_2\left(\frac{1-\sqrt{q}}{2}\right) \; .
\end{multline}
The first goal of this section is to construct an appropriate piecewise linear function which (arbitrarily tightly) lower bounds $f(E_{0},q)$ such that one may use the techniques from~\cite{vanhimbeeck2019correlations,Hahn_2022} to solve the resulting optimization problem via an SDP. The fact that the linearized function is a lower bound on the original objective function then ensures that the computed value is a certified lower bound on the desired conditional entropy.

We construct said piecewise linear function using the same approach as~\cite{vanhimbeeck2019correlations,Hahn_2022}. However, we note that we rely on a modified proof as $f(E_{0},q)$ is \emph{not} jointly concave and the results of~\cite{vanhimbeeck2019correlations,Hahn_2022} therefore do not apply to our setting. 
Rather, the function $f(E_{0},q)$ is concave in $E_0$ and convex in $q$.
Our goal is to find a  partition $\mathcal{S}=\{S^{(i,j)}\}_{i,j}$ of the domain $(E,q)\in [ -1,1] \times[0,1]$ such that for each segment $S_{i}$
\begin{itemize}
  \item  there exists an affine function which lower bounds $f(E_{0},q)$ on $S^{(i,j)}$.
    \item  there exist affine constraints which are satisfied by a tuple $(E_{0},q)$ if and only if $(E_{0},q)\in S^{(i,j)}$.
\end{itemize}
Analogous to~\cite{vanhimbeeck2019correlations,Hahn_2022}, we then show that these constraints ensure that the resulting relaxed optimization problem is an SDP. To construct a partition with the desired properties, we utilize Proposition~\ref{Lem: PartitionConstruction}. 
\begin{prop} \label{Lem: PartitionConstruction}
Let $f(E,q)=h_2\left(\frac{1-E}{2}\right)-h_2\left(\frac{1-\sqrt{q}}{2}\right)$. For any integer $N\in \mathbb{N}$, let $\{E_i,q_j\}$ be a uniform grid of points on $ \left[-1,1\right] \times \left[0,1\right]$ such that
\begin{align}
    E_{i}&= \frac{2i}{N}-1 \quad &&\forall i \in \{0,1,\dots,N\} \\
    q_j&= \frac{j}{N} \quad &&\forall j \in \{0,1,\dots,N\}\; .
\end{align}
Moreover, for any $i,j \in \{0,\dots, N-1\}$, we set\footnote{For $q_0=0$, we take the limit $\lim_{q\to 0}\frac{1}{4\sqrt{q}}
    \log \left( \frac{1-\sqrt{q}}{1+\sqrt{q}}\right) = - \frac{1}{2\ln(2)}$, where $\ln$ denotes the natural logarithm.}
    \begin{multline}
        k_{i}(E) \coloneq h_2\left(\frac{1-E_{i}}{2}\right)\\
         + \frac{h_2\left(\frac{1-E_{i+1}}{2}\right)-h_2\left(\frac{1-E_{i}}{2}\right)}{E_{i+1}-E_{i}} \left(E-E_{i} \right)
    \end{multline}
\begin{multline}
    l_{j}(q) \coloneq h_2\left(\frac{1-\sqrt{q_j}}{2}\right) \\+ \frac{1}{4\sqrt{q_{j}}}
    \log \left( \frac{1-\sqrt{q_{j}}}{1+\sqrt{q_{j}}}\right)
 \left(q-q_{j}\right) \; .
\end{multline}
Then the piecewise linear function
\begin{multline} \label{Eq: PiecewiseLinFct}
    g(E,q)= k_{i}(E) - l_{j}(q) \\
    \quad \mathrm{for} \ E\in [E_i,E_{i+1}),\ q\in [q_j,q_{j+1}) 
\end{multline}
satisfies $g(E,q)\leq f(E,q)$ for all $(E,q)\in [ -1,1) \times[0,1)$.
\end{prop}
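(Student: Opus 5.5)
The function $f$ is separable: $f(E,q)=\phi(E)-\psi(q)$ with
\[
\phi(E)=h_2\left(\tfrac{1-E}{2}\right), \qquad \psi(q)=h_2\left(\tfrac{1-\sqrt{q}}{2}\right).
\]
It therefore suffices to prove two one-dimensional statements. On each cell $[E_i,E_{i+1})$ we need $k_i(E)\leq \phi(E)$, and on each cell $[q_j,q_{j+1})$ we need $l_j(q)\geq \psi(q)$. Subtracting the second inequality from the first then gives $g(E,q)=k_i(E)-l_j(q)\leq \phi(E)-\psi(q)=f(E,q)$ on every rectangle of the partition. Since these rectangles cover $[-1,1)\times[0,1)$, the claim follows.

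\textbf{The $E$-part.} The map $E\mapsto \phi(E)$ is a composition of the concave binary entropy $h_2$ with an affine map, so $\phi$ is concave on $[-1,1]$. The function $k_i$ is the secant (chord) of $\phi$ through the points $(E_i,\phi(E_i))$ and $(E_{i+1},\phi(E_{i+1}))$. By concavity, a function lies above its chord on the interval between the two endpoints. Hence $k_i(E)\leq\phi(E)$ for all $E\in[E_i,E_{i+1}]$, with equality at the endpoints. This step is routine.

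\textbf{The $q$-part.} Here the roles are reversed. We first verify that $l_j$ is exactly the tangent line of $\psi$ at $q_j$. Using $h_2'(p)=\log\frac{1-p}{p}$, $p=\frac{1-\sqrt q}{2}$ and $dp/dq=-\frac{1}{4\sqrt q}$, we get
\[
\psi'(q)=-\frac{1}{4\sqrt q}\log\frac{1+\sqrt q}{1-\sqrt q}=\frac{1}{4\sqrt q}\log\frac{1-\sqrt q}{1+\sqrt q},
\]
which matches the slope in $l_j$. At $q_j=0$ the slope is replaced by its limit $-\frac{1}{2\ln 2}$, consistent with the footnote; there $\psi$ extends to a $C^1$ function at $0$ with this one-sided derivative.

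Next we show that $\psi$ is concave on $[0,1)$, i.e.\ that $\psi'$ is nonincreasing. Substituting $s=\sqrt q$, we have $\psi'=-\frac{1}{2\ln 2}\cdot \frac{\operatorname{artanh}(s)}{s}$. Since $\frac{\operatorname{artanh}(s)}{s}=\sum_{n\geq 0}\frac{s^{2n}}{2n+1}$ is increasing in $s$, $\psi'$ is decreasing. Thus $\psi$ is concave, so it lies below each of its tangents: $\psi(q)\leq l_j(q)$ for all $q\in[0,1)$, and in particular on $[q_j,q_{j+1})$.

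Note that this is the opposite curvature to the remark that ``$f$ is convex in $q$'': $-\psi$ is convex, and a convex function lies above its tangent lines. That is precisely why the tangent construction (rather than a secant) is used for the $q$-direction.

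\textbf{Main obstacle.} The only delicate point is the $q$-part: the derivative computation, the monotonicity of $\operatorname{artanh}(s)/s$, and the boundary behaviour at $q=0$. At $q=0$, $\psi$ is still concave and differentiable from the right, so the tangent inequality holds by the same argument. At $q\to1$ the slope diverges to $-\infty$. This is harmless because $q_j\leq q_{N-1}<1$, and the cell boundaries exclude $q=1$, matching the half-open domain in the statement.

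Combining the two one-dimensional inequalities cell by cell completes the proof.
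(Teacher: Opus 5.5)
Your proposal is correct and follows the paper's proof. Both split $f$ into a concave $E$-part, bounded below by its chord $k_i$, and a concave $q$-part, bounded above by its tangent $l_j$. The only difference is how you show that $h_2\left(\frac{1-\sqrt{q}}{2}\right)$ is concave: you show its first derivative $-\frac{1}{2\ln 2}\,\mathrm{artanh}(\sqrt{q})/\sqrt{q}$ is nonincreasing using the power series, whereas the paper computes the second derivative and bounds it with $\mathrm{artanh}(\sqrt{x})\leq \sqrt{x}/(1-x)$. Both routes are valid.
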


\begin{proof}
    A proof can be found in Appendix~\ref{sec:linear}
\end{proof}
One can thus lower bound the RHS of  Eq.~\eqref{Eq: OptProbCondEntAllLambda} by replacing $f(E_0^\lambda,q^\lambda)$ with the piecewise linear function $g(E_0^\lambda,q^\lambda)$ in Eq.~\eqref{Eq: PiecewiseLinFct}. 
For the remainder of this section, we consider the following partition $\mathcal{S}$ of the space $\left[-1,1\right] \times \left[0,1\right]$, which contains all tuples $(E_0,q)$. For any $N \in \mathbb{N}$, let $\mathcal{S}^{(i,j)}$ denote the partition generated via Proposition~\ref{Lem: PartitionConstruction}, i.e., $\mathcal{S}=\{ S^{(i,j)}\}_{i,j}$ where 
    \begin{align} \label{Eq: PartPlus}
        S^{(i,j)} \coloneq \{ (E,q ) \mid E\in [E_{(i)},E_{i+1}],\ q\in [q_j,q_{j+1}]\} \; .
    \end{align}
Due to continuity, the linear function from Eq.~\eqref{Eq: PiecewiseLinFct} is indeed a lower bound on $f(E_0^\lambda,q^\lambda)$ on this closed interval.
We now use the partition $\mathcal{S}$ to construct computable lower bounds on Eq.~\eqref{Eq: OptProbCondEntAllLambda} that can be solved via an SDP.  The resulting optimization is given in Theorem~\ref{Prop: SumReduction}. In particular, this bound is achieved by first using the approach from~\cite{Bancal_2014,Nieto_Silleras_2014,Hahn_2022} to further reduce the objective function of Eq.~\eqref{Eq: OptProbCondEntAllLambda} to a finite sum. Here, the number of elements in the sum is bounded by the number of elements in $\mathcal{S}$, which is denoted by $|\mathcal{S}|$.  To achieve this reduction, we use both Lemma~\ref{Lem: SDPRep}, which expresses the quantum set $\mathcal{Q}$ in terms of an SDP, and the fact that each element in $\mathcal{S}$ has a boundary which can be expressed via affine constraints. 

Afterwards, similar to~\cite{Hahn_2022} (see also~\cite{vanhimbeeck2019correlations}, which uses a dual formulation to resolve this issue),
we show there exists a change of variables for which the resulting optimization problem is an SDP. In particular, we consider the variables $\operatorname{P}_\Lambda$, $\{(\tilde{E}_0^\lambda,\tilde{E}_1^\lambda,\tilde{\omega}_0^\lambda,\tilde{\omega}_1^\lambda)\}_{
\lambda}$, and $\{\tilde{q}^\lambda,\tilde{t}^{\lambda},\tilde{z}_0^{\lambda},\tilde{z}_1^{\lambda}\}$, where $\operatorname{P}_\Lambda$ is a normalized probability distribution on $|\mathcal{S}|$ outputs. Moreover, $\{(\tilde{E}_0^\lambda,\tilde{E}_1^\lambda,\tilde{\omega}_0^\lambda,\tilde{\omega}_1^\lambda)\}_{
\lambda}$ is a family of renormalized quantum behaviors, i.e., we say $(\operatorname{P}_{\Lambda}(\lambda),\tilde{E}_0^\lambda,\tilde{E}_1^\lambda,\tilde{\omega}_0^\lambda,\tilde{\omega}_1^\lambda)\in \tilde{\mathcal{Q}}$ if and only if $ \left(\frac{\tilde{E}_0^\lambda}{\operatorname{P}_\Lambda(\lambda)},\frac{\tilde{E}_1^\lambda}{\operatorname{P}_\Lambda(\lambda)},\frac{\tilde{\omega}_0^\lambda}{\operatorname{P}_\Lambda(\lambda)},\frac{\tilde{\omega}_1^\lambda}{\operatorname{P}_\Lambda(\lambda)} \right)\in \mathcal{Q}$ is a valid quantum behavior.\footnote{If $\operatorname{P}_{\Lambda}(\lambda)=0$, then we say $(\operatorname{P}_{\Lambda}(\lambda),\tilde{E}_0^\lambda,\tilde{E}_1^\lambda,\tilde{\omega}_0^\lambda,\tilde{\omega}_1^\lambda)\in \tilde{\mathcal{Q}}$ if and only if $\tilde{E}_0^\lambda=\tilde{E}_1^\lambda=\tilde{\omega}_0^\lambda=\tilde{\omega}_1^\lambda=0$.} For any element $S^{(i,j)}\in\mathcal{S}$ of the partition, we also say $\left(\operatorname{P}_\Lambda(\lambda),\tilde{E}_0^\lambda,\tilde{q}^\lambda\right)\in \tilde{S}^{(i,j)}$ if $ \left(\frac{\tilde{E}_0^\lambda}{\operatorname{P}_\Lambda(\lambda)},\frac{\tilde{q}^\lambda}{\operatorname{P}_\Lambda(\lambda)}\right)\in S^{(i,j)}$.\footnote{Similarly, if $\operatorname{P}_{\Lambda}(\lambda)=0$, then we say $\left(\operatorname{P}_\Lambda(\lambda),\tilde{E}_0^\lambda,\tilde{q}^\lambda\right)\in \tilde{S}^{(i,j)}$ if and only if $\tilde{E}_0^\lambda=\tilde{q}^\lambda=0$.}

\begin{theorem} \label{Prop: SumReduction}
    For any $N \in \mathbb{N}$, let $\mathcal{S}=\{S_{1}, S_2,\dots\}$ denote the corresponding partition given by Eq.~\eqref{Eq: PartPlus} with $|\mathcal{S}|$ elements and let $\operatorname{P}_\Lambda$ denote a normalized probability distribution over the set $\{1,\dots,|\mathcal{S}|\} $. Then there exists a family of coefficients $\{ (\alpha^{\lambda},\beta^{\lambda},\gamma^{\lambda})\}_{\lambda\in [1,\dots,|\mathcal{S}|]}$ such that  
 \begin{widetext}
\begin{equation} \label{Eq: OptProbCondEntSDPBound}
    \begin{aligned}
        & {\inf}\,H(A|E,\Lambda,X=0) \geq && \min_{\substack{P_\Lambda,\{(\tilde{\bm{E}}^{\lambda},\tilde{\bm{\omega}}^{\lambda})\}\\
        \{\tilde{q}^\lambda,\tilde{t}^{\lambda},\tilde{z}_0^{\lambda},\tilde{z}_1^{\lambda}\}}} && \sum_{\lambda=1}^{|\mathcal{S}|} \alpha^{\lambda}\operatorname{P}_{\Lambda}(\lambda) + \beta^\lambda \tilde{E}_{0}^{\lambda}+ \gamma^\lambda \tilde{q}^{\lambda}    
        \\
        & && \text{\ \ subject to} \\
         & && && \sum_{\lambda}\tilde{E}_{x}^{\lambda} =E_{x}\;,\;\;\; \;\;\;\;  -\operatorname{P}_{\Lambda}(\lambda)\leq \tilde{E}_x^\lambda\leq \operatorname{P}_{\Lambda}(\lambda) \\
        & && && \sum_{\lambda}\tilde{\omega}_{x}^{\lambda} \leq\omega_{x}\;,\;\;\;\;\;  \;\;\;\; \;\;\;\;  0\leq\tilde{\omega}_x^\lambda\leq \operatorname{P}_{\Lambda}(\lambda)\\
        & && && \sum_{\lambda}\operatorname{P}_{\Lambda}(\lambda) =1\;,\;\;\;\;\;\;\;\;\;\;\;\;\operatorname{P}_{\Lambda}(\lambda)\geq 0\\
        & && &&
        \begin{pmatrix}
  \tilde{q}^{\lambda} & \tilde{E}_0^{\lambda} & \tilde{z}_0^\lambda \\
  \tilde{E}_0^{\lambda} & \operatorname{P}_{\Lambda}(\lambda) & \tilde{t}^\lambda \\
  \tilde{z}_0^\lambda & \tilde{t}^\lambda & \operatorname{P}_{\Lambda}(\lambda)
\end{pmatrix} \succeq 0 \; , \quad \begin{pmatrix}
  \operatorname{P}_{\Lambda}(\lambda) & \tilde{E}_1^{\lambda} & \tilde{z}_1^\lambda \\
  \tilde{E}_1^{\lambda} & \operatorname{P}_{\Lambda}(\lambda) & \tilde{t}^\lambda \\
  \tilde{z}_1^\lambda & \tilde{t}^\lambda & \operatorname{P}_{\Lambda}(\lambda)
\end{pmatrix} \succeq 0 \\
& && && \operatorname{P}_{\Lambda}(\lambda)\geq \tilde{z}_x^\lambda \geq \operatorname{P}_{\Lambda}(\lambda)-2\tilde{\omega}_x^\lambda \; , \quad 0\leq \tilde{q}^\lambda \leq \operatorname{P}_{\Lambda}(\lambda) \\
        & && &&  (\operatorname{P}_{\Lambda}(\lambda),\tilde{E}_0^\lambda,\tilde{E}_1^\lambda,\tilde{\omega}_0^\lambda,\tilde{\omega}_1^\lambda)\in \tilde{\mathcal{Q}} \;,\;\;\;\;\;\left(\operatorname{P}_\Lambda(\lambda),\tilde{E}_0^\lambda,\tilde{q}^\lambda\right)\in \tilde{S}_{\lambda}\; .
    \end{aligned}
    \end{equation}
\end{widetext}
\end{theorem}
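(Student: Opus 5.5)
We start from the lower bound in Eq.~\eqref{Eq: OptProbCondEntAllLambda}, replace $f(E_0^\lambda,q^\lambda)$ by the piecewise linear minorant $g$ of Proposition~\ref{Lem: PartitionConstruction}, and then reduce the number of hidden-variable values. For each cell $S^{(i,j)}$ of the partition $\mathcal{S}$, $g$ coincides on the closed cell (by continuity) with a single affine function $a^{(i,j)}+b^{(i,j)}E+c^{(i,j)}q$, where $b^{(i,j)}$ is the chord slope from $k_i$ and $c^{(i,j)}$ is minus the tangent slope in $l_j$. Index the $|\mathcal{S}|$ cells by $\lambda$ and set $(\alpha^\lambda,\beta^\lambda,\gamma^\lambda)\coloneq(a^{(i,j)},b^{(i,j)},c^{(i,j)})$ for the cell $S_\lambda=S^{(i,j)}$. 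This fixes the coefficient family claimed in the statement.

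Next, fix an arbitrary feasible point of Eq.~\eqref{Eq: OptProbCondEntAllLambda}, with possibly many values $\lambda$. Assign each $\lambda$ to a cell containing $(E_0^\lambda,q^\lambda)$, breaking ties on shared boundaries arbitrarily. Then merge all $\lambda$ in the same cell $s$ into one new label, in the style of~\cite{Bancal_2014,Nieto_Silleras_2014,Hahn_2022}. Concretely, define $\operatorname{P}'(s)=\sum_{\lambda\in s}\operatorname{P}_\Lambda(\lambda)$ and the aggregated variables $\tilde E_x^s=\sum_{\lambda\in s}\operatorname{P}_\Lambda(\lambda)E_x^\lambda$, and likewise $\tilde\omega_x^s,\tilde q^s,\tilde t^s,\tilde z_x^s$. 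These satisfy all the constraints of Eq.~\eqref{Eq: OptProbCondEntSDPBound}:
\begin{itemize}
\item The linear equality and inequality constraints follow by summation.
\item Each $3\times 3$ PSD constraint is the $\operatorname{P}_\Lambda$-weighted sum of PSD matrices, so it is PSD.
\item Membership in $\tilde{\mathcal{Q}}$ follows because, after dividing by $\operatorname{P}'(s)$, the normalized behavior is a convex combination of elements of $\mathcal{Q}$, and $\mathcal{Q}$ is convex. This can be seen either by mixing strategies with a classical flag or via the SDP characterization of Lemma~\ref{Lem: SDPRep}, whose constraints are linear matrix inequalities with linear energy constraints.
\item Membership in $\tilde S_s$ holds because each cell is a convex box, so the normalized average $(E_0,q)$ stays in the cell.
\end{itemize}
Cells with $\operatorname{P}'(s)=0$ get all variables equal to zero, matching the footnote conventions.

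The objective is handled by linearity within each cell:
\[
\sum_\lambda \operatorname{P}_\Lambda(\lambda) f(E_0^\lambda,q^\lambda)\ \ge\ \sum_s\sum_{\lambda\in s}\operatorname{P}_\Lambda(\lambda)\big(\alpha^s+\beta^sE_0^\lambda+\gamma^s q^\lambda\big)=\sum_s\big(\alpha^s\operatorname{P}'(s)+\beta^s\tilde E_0^s+\gamma^s\tilde q^s\big).
\]
So every feasible value of Eq.~\eqref{Eq: OptProbCondEntAllLambda} is bounded below by a feasible value of Eq.~\eqref{Eq: OptProbCondEntSDPBound}, and the claimed inequality follows. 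Finally, the homogenization via renormalized variables makes every constraint linear or an LMI in $(\operatorname{P}_\Lambda,\tilde\cdot)$. In particular, $\tilde{\mathcal{Q}}$ becomes the scaled version of the $4\times4$ matrix of Lemma~\ref{Lem: SDPRep} with $\operatorname{P}_\Lambda(\lambda)$ on the diagonal, and $\tilde S_\lambda$ becomes the scaled box inequalities $E_i\operatorname{P}\le\tilde E_0\le E_{i+1}\operatorname{P}$, and similarly for $q$. Hence the bound is an SDP.

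The main obstacle is this aggregation step. Because $f$ is not jointly concave, merging labels is legitimate only after replacing $f$ by a function that is affine on each cell; the merged point must also remain feasible. The delicate part is the nonlinear coupling through $t^\lambda$ and $\sqrt{q}$: the variable $t$ enters only through the LMIs, so summing them is fine. I would also carefully verify that $\tilde{\mathcal{Q}}$ is exactly the homogenized (conic) form of Lemma~\ref{Lem: SDPRep}, including the degenerate $\operatorname{P}=0$ case, where PSD with zero diagonal forces the off-diagonal entries to vanish.
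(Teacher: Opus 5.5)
Your proposal is correct and follows essentially the same route as the paper's proof in Appendix~\ref{sec:sdp}. Both arguments bound $f$ from below on each closed cell by the affine function $k_i(E)-l_j(q)$, assign every $\lambda$ to exactly one cell, merge within each cell using convexity of $\mathcal{Q}$ and of the PSD cone, and pass to the rescaled variables. The differences are cosmetic: you aggregate directly in the unnormalized variables $\tilde E^s_x=\sum_{\lambda\in s}\operatorname{P}_\Lambda(\lambda)E_x^\lambda$, whereas the paper first forms normalized averages and rescales afterwards, and you explicitly check that the merged point stays in its box $S_s$, which the paper leaves implicit.
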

\begin{proof}
    A proof can be found in Appendix~\ref{sec:sdp}
\end{proof}
As can be seen in Appendix~\ref{sec:sdp}, the coefficients $\{ \alpha^{\lambda},\beta^{\lambda},\gamma^{\lambda}\}_{\lambda}$ directly correspond to the coefficients of the piecewise linear function given in Eq.~\eqref{Eq: PiecewiseLinFct}, i.e.,
\begin{align}
    g(E,q)= \alpha^\lambda + \beta^\lambda E +\gamma^\lambda q\quad \mathrm{for} \ (E,
    q)\in S_{\lambda} \; .
\end{align}
Moreover, the resulting optimization problem from Theorem~\ref{Prop: SumReduction} can be computed via an SDP, using the following SDP expressions for the sets $\tilde{\mathcal{Q}} $ and $\tilde{S}_{\lambda}$.

\begin{figure*}[t]
    \centering

    \begin{minipage}[t]{0.43\textwidth}
        \centering
        \includegraphics[width=\linewidth]
        {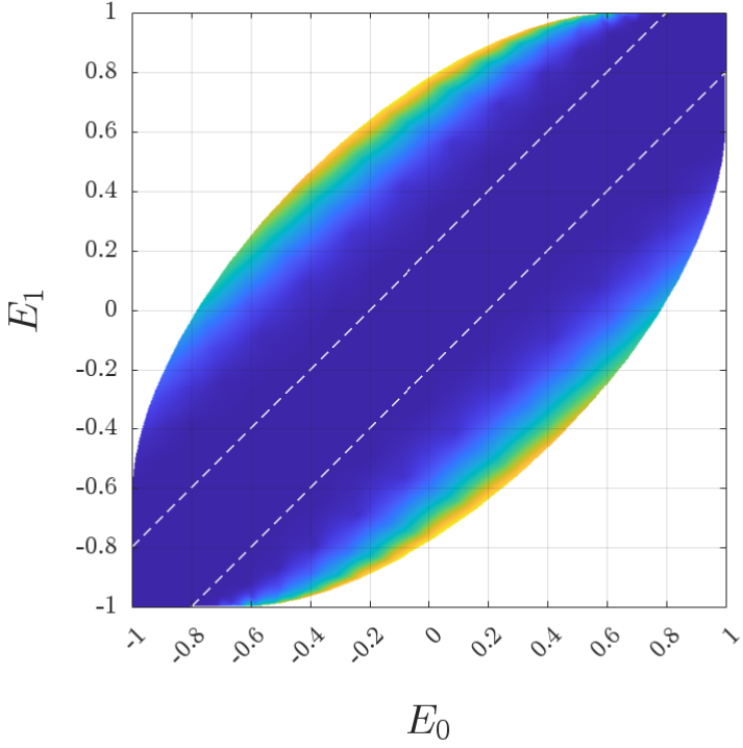}

        \vspace{1mm}
    \makebox[\linewidth][c]{%
        \hspace*{0.8cm}\textbf{(a) Classical Adversary}%
    }
    \end{minipage}
    \hfill
    \begin{minipage}[t]{0.526\textwidth}
        \centering
        \includegraphics[width=\linewidth]
        {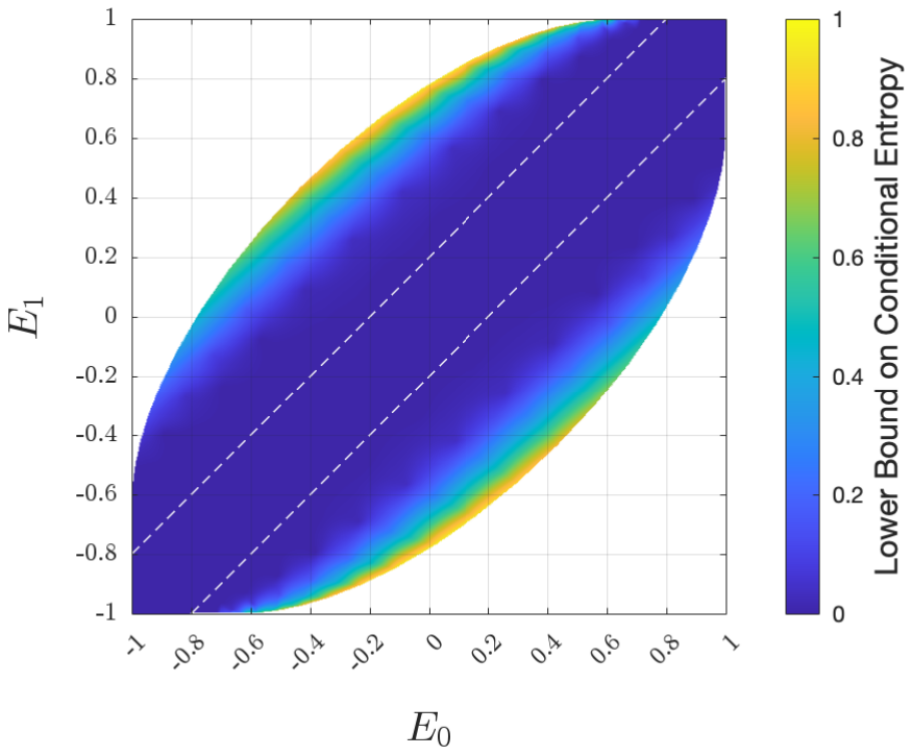}

        \makebox[\linewidth][c]{%
        \hspace*{-0.8cm}\textbf{(b) Quantum Adversary}%
    }
    \end{minipage}

    \caption{
    Lower bounds on the conditional von Neumann entropy for
    $\omega_0=\omega_1=0.05$ and $X=0$.
    In~(a) we assume Eve holds solely a classical register, i.e.,
    we lower bound $H(A|\Lambda,X=0)$.
    General lower bounds on $H(A|E\Lambda,X=0)$ against a quantum
    adversary are presented in~(b).
    The dashed lines denote the set of correlations that can be
    generated classically, i.e., deterministically~\cite{Van_Himbeeck_2017}.
    }
    \label{fig:0.05-side-by-side}
\end{figure*}
\begin{lemma}
    The tuple $(\operatorname{P}_{\Lambda}(\lambda),\tilde{E}_0^\lambda,\tilde{E}_1^\lambda,\tilde{\omega}_0^\lambda,\tilde{\omega}_1^\lambda)$ belongs to the set $\tilde{\mathcal{Q}} $ if and only if there exist $\tilde{u},\tilde{v},\tilde{\eta}_0,\tilde{\eta}_1\in\mathbb{R}$ satisfying $\tilde{\eta}_{0,1}\leq \tilde{\omega}_{0,1}$ such that the matrix
     \begin{equation}
    \begin{aligned}
       \begin{pmatrix}\operatorname{P}_{\Lambda}(\lambda)&\tilde{u}&\tilde{E}_0&2\tilde{\eta}_0-\operatorname{P}_{\Lambda}(\lambda)\\
        \tilde{u}&\operatorname{P}_{\Lambda}(\lambda)& \tilde{E}_1&2\tilde{\eta}_1-\operatorname{P}_{\Lambda}(\lambda)\\\tilde{E}_0& \tilde{E}_1&\operatorname{P}_{\Lambda}(\lambda)&\tilde{v}\\2\tilde{\eta}_0-\operatorname{P}_{\Lambda}(\lambda)&2\tilde{\eta}_1-\operatorname{P}_{\Lambda}(\lambda)&\tilde{v}&\operatorname{P}_{\Lambda}(\lambda)
        \end{pmatrix}
    \end{aligned}
    \end{equation}
    is positive semi-definite.
\end{lemma}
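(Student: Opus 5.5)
The plan is to reduce the statement to Lemma~\ref{Lem: SDPRep} by a homogeneous rescaling, treating the cases $\operatorname{P}_{\Lambda}(\lambda)>0$ and $\operatorname{P}_{\Lambda}(\lambda)=0$ separately. Write $p\coloneq\operatorname{P}_{\Lambda}(\lambda)$ and let $\tilde M$ denote the displayed $4\times 4$ matrix. The key observation is that every entry of $\tilde M$ is linear in the tuple $(p,\tilde u,\tilde v,\tilde E_0,\tilde E_1,\tilde\eta_0,\tilde\eta_1)$. The constant entries $1$ and $-1$ of the matrix in Lemma~\ref{Lem: SDPRep} have been replaced by $p$ and $-p$.

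For $p>0$, I would use the substitution $u=\tilde u/p$, $v=\tilde v/p$, $E_x=\tilde E_x/p$, $\eta_x=\tilde\eta_x/p$ and $\omega_x=\tilde\omega_x/p$. This gives the identity $\tilde M = p\,M(u,v,E_0,E_1,\eta_0,\eta_1)$, where $M$ is the matrix of Lemma~\ref{Lem: SDPRep}. Since $p>0$, we have $\tilde M\succeq 0$ if and only if $M\succeq 0$. The scalar constraint $\tilde\eta_x\le\tilde\omega_x$ is likewise equivalent to $\eta_x\le\omega_x$.

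By definition, the tuple lies in $\tilde{\mathcal Q}$ exactly when the rescaled behavior $(\tilde E_0/p,\tilde E_1/p,\tilde\omega_0/p,\tilde\omega_1/p)$ lies in $\mathcal Q$. By Lemma~\ref{Lem: SDPRep}, this holds exactly when suitable $u,v,\eta_0,\eta_1$ exist. The rescaling is a bijection between witnesses $(u,v,\eta_x)$ and witnesses $(\tilde u,\tilde v,\tilde\eta_x)$, so both directions follow at once.

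The case $p=0$ is the only step that needs real care, since the rescaling is unavailable there. By the footnote convention, membership in $\tilde{\mathcal Q}$ then means $\tilde E_0=\tilde E_1=\tilde\omega_0=\tilde\omega_1=0$.

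For the forward direction, I would take $\tilde u=\tilde v=\tilde\eta_x=0$. Then $\tilde M=0\succeq 0$ and $\tilde\eta_x=0\le\tilde\omega_x$.

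For the converse, $\tilde M\succeq 0$ has zero diagonal. Every principal $2\times 2$ minor must be nonnegative, and with zero diagonal this forces every off-diagonal entry to vanish, so $\tilde M=0$. In particular $\tilde E_0=\tilde E_1=0$ and $\tilde\eta_0=\tilde\eta_1=0$.

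This leaves $\tilde\omega_x$, for which $\tilde\eta_x\le\tilde\omega_x$ only gives $\tilde\omega_x\ge 0$. This is the delicate point. I would argue that within the SDP of Theorem~\ref{Prop: SumReduction}, the additional constraint $0\le\tilde\omega_x^\lambda\le p$ already forces $\tilde\omega_x=0$. Alternatively, one reads the footnote convention so that a zero-weight $\lambda$ with nonnegative $\tilde\omega_x$ is harmless: it only loosens $\sum_\lambda\tilde\omega_x^\lambda\le\omega_x$ and does not affect the objective. I would state the lemma's equivalence relative to the bound $0\le\tilde\omega_x\le p$, which settles the degenerate case.
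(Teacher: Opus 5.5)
Your proof is correct and is essentially the paper's argument. For $\operatorname{P}_\Lambda(\lambda)>0$ the paper likewise multiplies the matrix of Lemma~\ref{Lem: SDPRep} by $\operatorname{P}_\Lambda(\lambda)$ and rescales $u,v,\eta_x$, and for $\operatorname{P}_\Lambda(\lambda)=0$ it uses the same all-zero witness. Your handling of the converse at $\operatorname{P}_\Lambda(\lambda)=0$ is more careful than the paper's, which checks only the forward direction there: as you note, positive semidefiniteness then gives only $\tilde\omega_x\geq 0$, so the literal equivalence relies on the SDP's constraint $0\leq\tilde\omega_x^\lambda\leq\operatorname{P}_\Lambda(\lambda)$, or on the observation that such points can be replaced by $\tilde\omega_x^\lambda=0$ without changing feasibility or the objective.
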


\begin{proof}
For $\operatorname{P}_{\Lambda}(\lambda)=0$, note that $(\operatorname{P}_{\Lambda}(\lambda),\tilde{E}_0^\lambda,\tilde{E}_1^\lambda,\tilde{\omega}_0^\lambda,\tilde{\omega}_1^\lambda)$ is in $\mathcal{Q}$ if and only if $\tilde{E}_0^\lambda=\tilde{E}_1^\lambda=\tilde{\omega}_0^\lambda=\tilde{\omega}_1^\lambda=0$. Setting $\tilde{u}=\tilde{v}=\tilde{\eta}_{0}=\tilde{\eta}_{1}=0$ trivially satisfies the positive semi-definite constraint.
 Otherwise, recall that $(\operatorname{P}_{\Lambda}(\lambda),\tilde{E}_0^\lambda,\tilde{E}_1^\lambda,\tilde{\omega}_0^\lambda,\tilde{\omega}_1^\lambda) \in \tilde{\mathcal{Q}} $ if and only if $ \left(\frac{\tilde{E}_0^\lambda}{\operatorname{P}_\Lambda(\lambda)},\frac{\tilde{E}_1^\lambda}{\operatorname{P}_\Lambda(\lambda)},\frac{\tilde{\omega}_0^\lambda}{\operatorname{P}_\Lambda(\lambda)},\frac{\tilde{\omega}_1^\lambda}{\operatorname{P}_\Lambda(\lambda)} \right)\in \mathcal{Q}$. It then follows from Lemma~\ref{Lem: SDPRep} that $(\operatorname{P}_{\Lambda}(\lambda),\tilde{E}_0^\lambda,\tilde{E}_1^\lambda,\tilde{\omega}_0^\lambda,\tilde{\omega}_1^\lambda) \in \tilde{\mathcal{Q}} $ if and only if
    there exist $u,v,\eta_0,\eta_1\in\mathbb{R}$ satisfying $\eta_{0,1}\leq \frac{\tilde{\omega}_{0,1}}{\operatorname{P}_\Lambda(\lambda)}$ such that
    \begin{equation} \label{Eq: IntermStepRenormQSet}
    \begin{aligned}
       \begin{pmatrix}1&u&\frac{\tilde{E}_0^\lambda}{\operatorname{P}_\Lambda(\lambda)}&2\eta_0-1\\
        u&1& \frac{\tilde{E}_1^\lambda}{\operatorname{P}_\Lambda(\lambda)}&2\eta_1-1\\\frac{\tilde{E}_0^\lambda}{\operatorname{P}_\Lambda(\lambda)}& \frac{\tilde{E}_1^\lambda}{\operatorname{P}_\Lambda(\lambda)}&1&v\\2\eta_0-1&2\eta_1-1&v&1
        \end{pmatrix}
    \end{aligned}
    \end{equation}
    is positive semi-definite. Multiplying Eq.~\eqref{Eq: IntermStepRenormQSet} by $\operatorname{P}_{\Lambda}(\lambda)$ and setting $\tilde{u}=\operatorname{P}_{\Lambda}(\lambda)u$, $\tilde{v}=\operatorname{P}_{\Lambda}(\lambda)v$, $\tilde{\eta}_{0,1}=\operatorname{P}_{\Lambda}(\lambda)\eta_{0,1}$ yields the desired claim.
\end{proof}
The boundary of $\tilde{S}_{\lambda}$ is characterized via affine constraints. In particular, it follows from Eq.~\eqref{Eq: PartPlus}  that $\tilde{S}_{\lambda}$ has the following form: there exist constants $E_i$, $E_{i+1}$, $q_j$, $q_{j+1}$ such that 
    \begin{multline}
        \tilde{S}_{\lambda}   \coloneq \Bigl\{\left(\operatorname{P}_\Lambda(\lambda),\tilde{E}^\lambda,\tilde{q}^\lambda\right) \mid \operatorname{P}_\Lambda(\lambda)E_i \leq 
    \tilde{E}^\lambda \leq \\
    \operatorname{P}_\Lambda(\lambda)E_{i+1},\ \operatorname{P}_\Lambda(\lambda)q_j \leq \tilde{q}^\lambda \leq \operatorname{P}_\Lambda(\lambda)q_{j+1} 
      \Bigl\} \; .
    \end{multline}
    
\subsubsection{Comparison Between Classical, Quantum, and Entanglement-Assisted Entropy Bounds} \label{Sec: CompClassAtt}
 We are now ready to compare the SDP bounds obtained via  Theorem~\ref{Prop: SumReduction} with the results of~\cite[Section 3.3]{vanhimbeeck2019correlations} and~\cite[Section~4]{ranu2026randomnesscertifiedenergyconstrainedsemideviceindependent}.
        The lower bounds on the von Neumann entropy against a quantum adversary are generated via Theorem~\ref{Prop: SumReduction} using either  $N=16$ (yielding a partition with $16^2=256$ elements) or $N=12$ (yielding a partition with $12^2=144$ elements). For  quantum behaviors near the boundary, we set $N=16$ to more tightly capture the high-entropy regions. Otherwise,  we set $N=12$. 
        Similarly, to apply the results of~\cite[Section 3.3]{vanhimbeeck2019correlations} and lower bound the von Neumann entropy for classical adversaries, we approximate the function $h_2\left(\frac{1-E_0}{2}\right)$ with a piecewise linear function $g(E_0)$ that is constructed to be linear on the $N=16$ intervals $E_0 \in [-1+2i/N,-1+2(i+1)/N]$, where $i \in \{0,\dots,15\} $.
        The results are presented  in Fig.~\ref{fig:0.05-side-by-side}.
\begin{figure*}[t]
    \centering

    \begin{minipage}[t]{0.43\textwidth}
        \vspace{0pt}
        \centering
        \includegraphics[height=6.6cm,keepaspectratio]
        {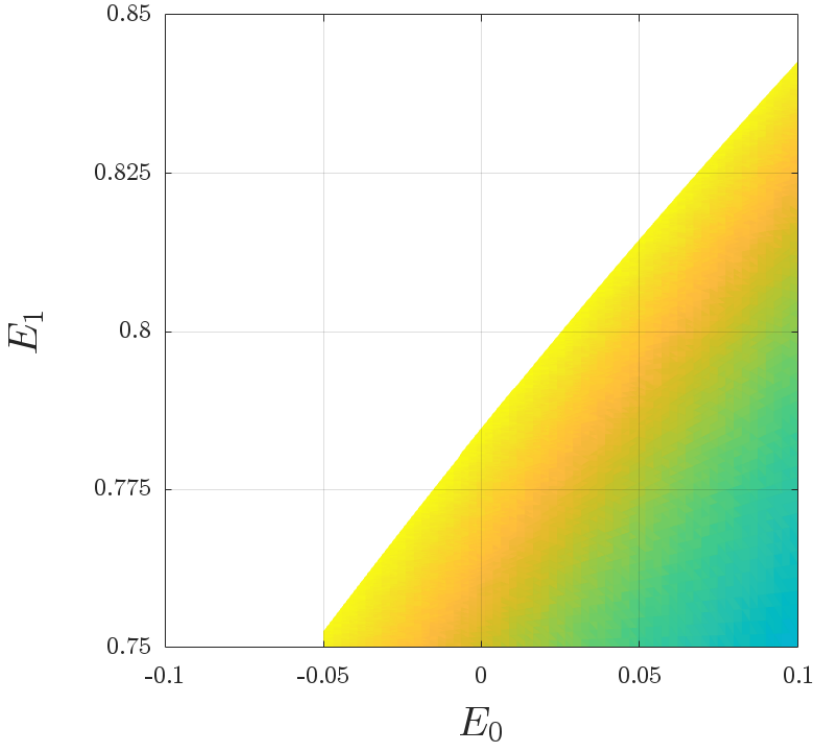}

        \vspace{1mm}
        \makebox[\linewidth][c]{%
            \hspace*{0.6cm}\textbf{(a) Classical Adversary}%
        }
    \end{minipage}
    \hfill
    \begin{minipage}[t]{0.525\textwidth}
        \vspace{0pt}
        \centering
        \includegraphics[height=6.6cm,keepaspectratio]
        {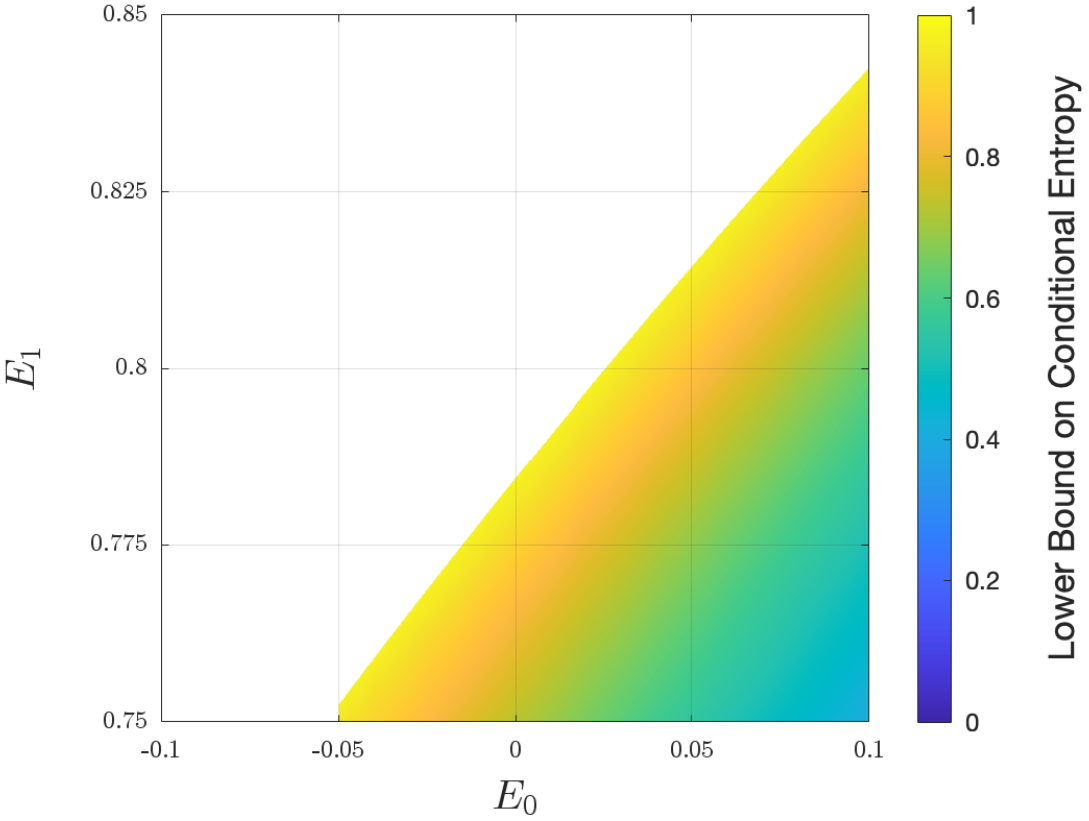}

        \vspace{1mm}
        \makebox[\linewidth][c]{%
            \hspace*{-0.6cm}\textbf{(b) Quantum Adversary}%
        }
    \end{minipage}

    \caption{
    Contains results of Fig.~\ref{fig:0.05-side-by-side}
    for a restricted, zoomed-in parameter range.
    This range contains high certifiable entropy for
    $\omega_0=\omega_1=0.05$ and $X=0$.
    }
    \label{fig:0.05-side-by-sideZoom}
\end{figure*}

For $\omega_0=\omega_1=0.05$, Fig.~\ref{fig:0.05-side-by-side} gives lower bounds on $H(A|\Lambda, X=0)$ and $H(A|E\Lambda, X=0)$ for all feasible choices of $(E_0,E_1)$. In both cases, $H(A|E\Lambda, X=0)$ appears maximal near $E_0 \simeq 0$ (see also Fig.~\ref{fig:0.05-side-by-sideZoom}). This is expected, as the key-generating measurement is $X=0$. $E_0 =0$ then corresponds to the key-generation output being uniformly random (if one does not have access to the registers $E\Lambda$). Moreover, we see that the maximal certifiable randomness for a classical adversary is $1$ bit. For this energy regime, we find that the  additional quantum side-information is quite limited by the low energy constraint. This is why we find that, for $\omega_0=\omega_1=0.05$, the 
 largest gap between the classical and quantum bounds is approximately $\lesssim 0.027$.

\begin{figure}[htbp]
  \centering
  \includegraphics[width=0.49\textwidth]{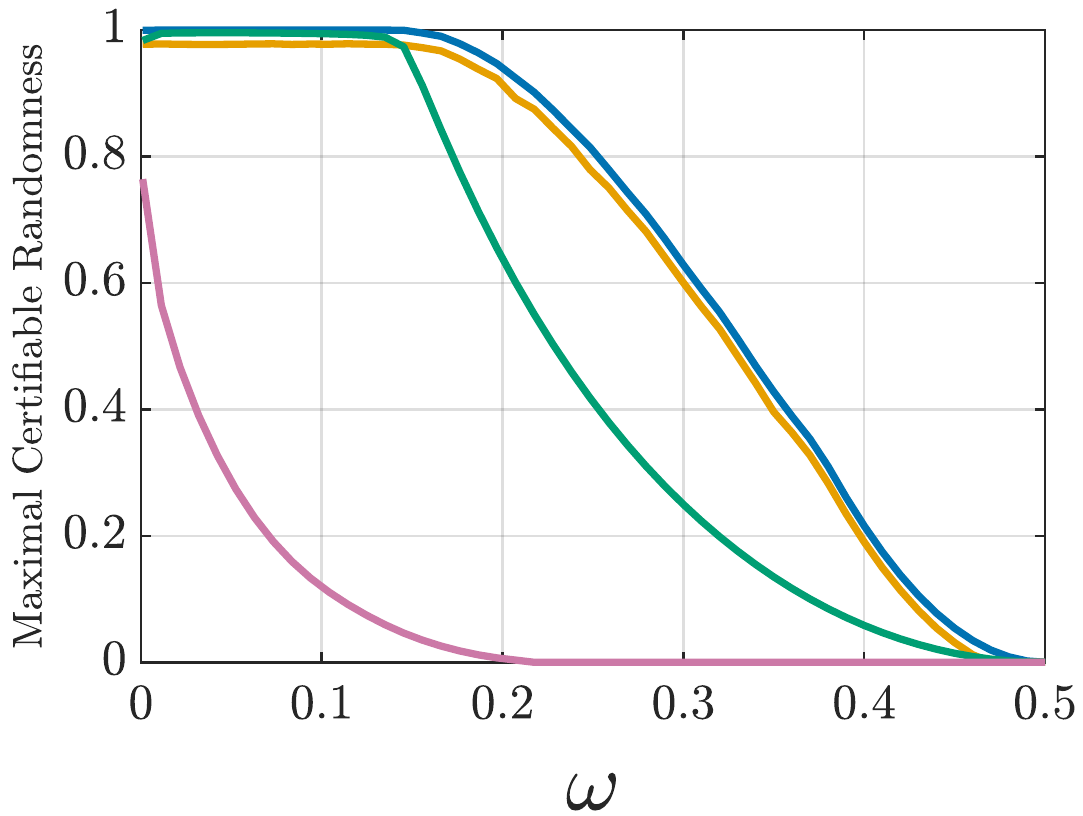} 
  
\caption{The maximal certifiable randomness as a function of $\omega=\omega_0=\omega_1$. The blue and orange lines denote the certifiable lower bounds on the von Neumann entropy  against  classical~\cite{vanhimbeeck2019correlations} and quantum adversaries (Theorem~\ref{Prop: SumReduction}), respectively. The green and purple lines denote lower boundds on the min-entropy against quantum~\cite{vanhimbeeck2019correlations} and entanglement-assisted adversaries~\cite{ranu2026randomnesscertifiedenergyconstrainedsemideviceindependent}. 
}
  \label{max-entropy-omega}
\end{figure}

For all $0< \omega=\omega_0=\omega_1 \leq 0.5$,
Fig.~\ref{max-entropy-omega} shows the maximal randomness that can be certified  when maximizing over all $(E_0,E_1,\omega,\omega) \in \mathcal{Q}$ for fixed $\omega$.
The blue and orange lines denote the maximal certified von Neumann entropy against classical~\cite{vanhimbeeck2019correlations} and quantum adversaries, respectively. The entropy bound shown in Fig.~\ref{max-entropy-omega} is computed at $N=16$ for both of these optimizations. The green line denotes the maximally certified lower bound on the min-entropy from~\cite{vanhimbeeck2019correlations}, which holds against quantum adversaries; see~\cite[Remark 6]{vanhimbeeck2019correlations}. The corresponding SDP runs optimally at $N=2$, i.e., the partition only requires two elements. Lastly, the purple line denotes the corresponding min-entropy bound against entanglement-assisted adversaries from~\cite{ranu2026randomnesscertifiedenergyconstrainedsemideviceindependent}. The resulting SDP  achieved via the NPA relaxation~\cite{Pironio_2010,Navascu_s_2008} is computed using the code from~\cite{ranu_hmin_energy_const_sdi} at NPA Level $2$.

Since the entropy bound for any given quantum behavior is computed via an SDP  and $\mathcal{Q}$ is convex, there exists an optimal choice of $(E_0,E_1,\omega,\omega)$ that lies on the boundary of $\mathcal{Q}$. For this $(E_0,E_1,\omega,\omega) \in \mathcal{Q}$, we then find that the lower bound on the von Neumann entropy against quantum adversaries closely matches the bound against classical eavesdroppers. Since the bounds of~\cite{vanhimbeeck2019correlations} are generally expected to be very tight, the near-coincidence of the two curves provides strong numerical evidence that Theorem~\ref{Prop: SumReduction} yields a very tight lower bound in this regime.  
Moreover, the closeness of the bounds    strongly suggests that the optimal attack for this quantum behavior requires Eve to send pure states. In this case, there can namely be no useful quantum side-information.

We also highlight that it is in principle possible to generate larger bounds on the maximally certifiable min-entropy in the entanglement-assisted setting for a fixed $\omega$. This is due to the fact, in this setting, the set of achievable correlations is strictly larger than $\mathcal{Q}$~\cite{DAvino:2025ymd}. While it is therefore possible to optimize over a larger set and achieve potentially tighter bounds,
we choose to always optimize over $\mathcal{Q}$ to achieve an honest comparison between the four bounds. 

We conclude  by noting that there is, in principle, a stark discontinuity at $\omega=0$. At $\omega=0$, Eve can namely only send the ground state, independently of the input choice $X=x$ and classical side-information $\Lambda=\lambda$. Crucially, this enforces that $E_0^\lambda = E_1^\lambda$ for every $\Lambda=\lambda$. It is then easy to see from Eq.~\eqref{Q-criterion-arcsin} that Eve can generate any behavior $(E,E,0,0)$ with $-1\leq E \leq 1$ using solely deterministic strategies. For $\omega> 0$, this is no longer the case.

\section{Discussion and Open Questions} \label{Section: DiscussionOpenQ}
In this work, we study the simplest semi device-independent prepare and measure set-up from~\cite{Van_Himbeeck_2017,vanhimbeeck2019correlations}, which uses energy constraints to ensure that the underlying measurement device acts as a QRNG. Concretely, we generalize the numerical entropy bounds introduced in~\cite{vanhimbeeck2019correlations} to now account for adversaries with access to quantum side-information. 
To achieve this desired SDP formulation in Theorem~\ref{Prop: SumReduction}, we first apply Jordan's lemma~\cite{pironio2009deviceindependent} and derive lower bounds on the conditional entropy $H(A|E,\Lambda=\lambda, X=0)_{\rho}$ for qubit systems; see Theorem~\ref{theorem: optimization problem for the general model}. We then apply the techniques of~\cite{Bancal_2014,Nieto_Silleras_2014,Hahn_2022} to derive \textit{computable} lower bounds on the resulting expression for $H(A|E\Lambda, X=0)_{\rho}$. The resulting bounds on $H(A|E\Lambda, X=0)_{\rho}$  for general attacks are extremely tight and outperform the min-entropy bounds for large ranges of $\omega$; see Fig.~\ref{max-entropy-omega}.

In Section~\ref{Sec: CompClassAtt}, we study two regimes in which one may expect the conditional von Neumann entropy against quantum adversaries to closely match the classical bound. For small energy regimes $\omega_x \ll 1$, Eve has to essentially be sending states that are close to the ground state. Since the ground state is pure, there can be no meaningful quantum side-information. As can be seen in Fig.~\ref{fig:0.05-side-by-side} and Fig.~\ref{fig:0.05-side-by-sideZoom}, our computed bounds respect this intuition. Similarly, the entropy for the optimal quantum behavior with a fixed energy parameter $\omega$ is plotted in Fig.~\ref{max-entropy-omega}. The behavior has to lie on the boundary of $\mathcal{Q}$ and our results imply that the optimal attack requires pure states. Our techniques can account for this phenomenon and, unlike the min-entropy bounds of~\cite{vanhimbeeck2019correlations,ranu2026randomnesscertifiedenergyconstrainedsemideviceindependent}, our SDP hierarchy seemingly converges to the classical von Neumann bounds of~\cite{vanhimbeeck2019correlations}.

The approach of~\cite{ranu2026randomnesscertifiedenergyconstrainedsemideviceindependent} certifies randomness against entanglement-assisted adversaries and is very versatile, in the sense that it can handle general set-ups and can, in principle, be adapted to the conditional von Neumann entropy, and conditional Petz-Rényi entropies using the techniques of~\cite{Brown_2024, hahn2024,kossmann2025reliableentropyestimationobserved}. However, as already noted in~\cite{ranu2026randomnesscertifiedenergyconstrainedsemideviceindependent}, it appears to give extremely loose bounds on the min-entropy. We believe this is primarily due to the fact that, for quantum adversaries, the non-degeneracy constraint of the energy operator is not a priori directly compatible with an NPA relaxation~\cite{Pironio_2010,Navascu_s_2008}.\footnote{For classical adversaries this is indeed directly implementable using the NPA hierarchy; see~\cite{Roch_i_Carceller_2025}} As such,~\cite{ranu2026randomnesscertifiedenergyconstrainedsemideviceindependent} have to replace the non-degeneracy constraint with weaker constraints. Since the approach taken in this work does not require a weakening of the energy constraint assumption, deriving analogous bounds on the von Neumann in the entanglement-assisted setting remains an interesting open question. We believe a similar approach to the one take in this work may be able to provide analogous tight bounds in the entanglement-assisted setting. In particular, such a comparison would give insights as to the tightness of the results derived in~\cite{ranu2026randomnesscertifiedenergyconstrainedsemideviceindependent,li2026onchipquantumrandomnessamplification}.

\section*{Code Availability}
The Python code used to generate the data can be found at:
\url{https://github.com/Thomas0501/EnergyConstraintsSDPs}. The python code in this work relies on  MOSEK~\cite{mosek}, CVXPY~\cite{diamond2016cvxpy,agrawal2018rewriting}  and NCPOL2SDPA~\cite{Wittek_2015} to solve the resulting SDPs.

\bibliography{References.bib}
\newpage

\onecolumngrid
\section*{Appendix}

\section{SDP Formulation for $\Lambda=\lambda$ Entropy Bound} \label{App: OptimalValueQ}
Recall that we are considering the following two optimization problems. First, we have the optimization problem
\begin{equation}\label{App: MinNormN}
    \begin{aligned}
         & \min_{t,\{y_x,z_x\}} && y_0^2  + z_0^2 \\
        & \text{subject to} \\
        & && 1\geq z_x \geq 1-2\omega_x \ \forall x\in \{0,1\} \; , \quad -1 \leq t \leq 1 \;, \\
         & && E_x = \sqrt{1-t^2}y_x +tz_x \ \forall x\in \{0,1\}\;  ,  \quad y_x^2  + z_x^2  \leq 1 \ \forall x\in \{0,1\}\; .
    \end{aligned}
    \end{equation}
Secondly, we desire to express this via the SDP
\begin{equation} \label{App: OptTildeQ}
    \begin{aligned}
        &  && \min_{q,t,z_0,z_1} &&q\\
        & && \text{subject to} \\
         & && && \begin{pmatrix}
  q & E_0 & z_0 \\
  E_0 & 1 & t \\
  z_0 & t & 1
\end{pmatrix} \succeq 0 \; , \quad \begin{pmatrix}
  1 & E_1 & z_1 \\
  E_1 & 1 & t \\
  z_1 & t & 1
\end{pmatrix} \succeq 0\; , \\
        & && && 1\geq z_x \geq 1-2\omega_x \ \forall x\in \{0,1\} \; , \quad 0\leq q \leq 1 \;.
    \end{aligned}
    \end{equation}
\begin{theorem}
    For any quantum behavior $(E_0,E_1,\omega_0,\omega_1)\in\mathcal{Q}$, the optimization problem from 
    Eq.~\eqref{App: MinNormN} 
    is equivalent to Eq.~\eqref{App: OptTildeQ} 
\end{theorem}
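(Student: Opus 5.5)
The plan is to show that the two problems have the same feasible values of $t$ and $z_x$, and that for each such choice the SDP's optimal $q$ equals the optimal value of $y_0^2+z_0^2$. The tool is a Schur-complement computation on the two $3\times 3$ matrices. Write $G_t=\begin{pmatrix}1&t\\t&1\end{pmatrix}$, which is the lower-right $2\times2$ block of both matrices. Note that $G_t\succeq0$ if and only if $|t|\le1$, so the constraint $-1\le t\le1$ of Eq.~\eqref{App: MinNormN} is implied by positive semi-definiteness in Eq.~\eqref{App: OptTildeQ}. I treat the cases $|t|<1$ and $t=\pm1$ separately.

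\emph{Case $|t|<1$.} Here $G_t$ is invertible. By the Schur complement, $\begin{pmatrix} c & E & z\\ E&1&t\\ z&t&1\end{pmatrix}\succeq0$ holds if and only if
\begin{equation*}
c\geq (E,z)\,G_t^{-1}\,(E,z)^{\operatorname{T}}=\frac{E^2-2tEz+z^2}{1-t^2}.
\end{equation*}
In Eq.~\eqref{App: MinNormN}, the linear constraint $E_x=\sqrt{1-t^2}\,y_x+tz_x$ determines $y_x=(E_x-tz_x)/\sqrt{1-t^2}$ uniquely. A direct expansion then gives
\begin{equation*}
y_x^2+z_x^2=\frac{E_x^2-2tE_xz_x+z_x^2}{1-t^2}.
\end{equation*}
Two consequences follow. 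First, the second matrix (with $c=1$) is positive semi-definite exactly when $y_1^2+z_1^2\le1$. Second, the first matrix (with $c=q$) is positive semi-definite exactly when $q\ge y_0^2+z_0^2$. The constraint $q\le1$ then corresponds to $y_0^2+z_0^2\le1$. Hence for fixed $(t,z_0,z_1)$, the smallest admissible $q$ is precisely $y_0^2+z_0^2$, and the box constraints on $z_x$ are identical in both problems.

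\emph{Case $t=\pm1$.} This is the main obstacle, since $G_t$ is singular and the Schur complement must be taken with a pseudo-inverse. Positive semi-definiteness then requires $(E_x,z_x)$ to lie in $\operatorname{range}(G_t)=\operatorname{span}\{(1,t)\}$, that is, $z_x=tE_x$. In that case $(E_x,z_x)G_t^{+}(E_x,z_x)^{\operatorname{T}}=E_x^2$, so the first matrix is positive semi-definite if and only if $q\ge E_0^2$, and the second imposes nothing beyond $z_1=tE_1$. In Eq.~\eqref{App: MinNormN} with $t=\pm1$, the constraint reads $E_x=tz_x$, so again $z_x=tE_x$. The variable $y_x$ is then free subject to $y_x^2+z_x^2\le1$, and the minimum over $y_0$ is attained at $y_0=0$, giving value $z_0^2=E_0^2$. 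Thus the two problems also agree on this boundary.

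Combining the cases, every feasible point of one problem maps to a feasible point of the other with the same objective value: take $q=y_0^2+z_0^2$ in one direction, and in the other reconstruct $y_x$ as above (with $y_x=0$ when $t=\pm1$). The optima therefore coincide. Feasibility of both problems for $(E_0,E_1,\omega_0,\omega_1)\in\mathcal{Q}$ realized on qubits, and attainment of the minima, follow from the compactness argument already given after Eq.~\eqref{Eq: MinNormN}.
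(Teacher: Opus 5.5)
Your proof is correct. It follows the same overall strategy as the paper: map feasible points of Eq.~\eqref{App: MinNormN} into the SDP via $q=y_0^2+z_0^2$, and conversely reconstruct $y_x$ from the linear constraint $E_x=\sqrt{1-t^2}\,y_x+tz_x$. What differs is how you characterize positive semi-definiteness. The paper uses Sylvester's criterion on all principal minors and works with the determinant condition $q(1-t^2)-E_0^2-z_0^2+2E_0z_0t\geq 0$. That condition is exactly $(1-t^2)$ times your Schur-complement inequality $q\geq (E_0^2-2tE_0z_0+z_0^2)/(1-t^2)$, so the algebraic content is the same. Your route gives an exact pointwise statement for $|t|<1$: for fixed $(t,z_0,z_1)$ the smallest admissible $q$ equals $y_0^2+z_0^2$, and the second matrix is positive semi-definite iff $y_1^2+z_1^2\leq 1$. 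As a result, the converse direction needs no monotonicity or intermediate-value step, whereas the paper uses one to produce some $y_0^2\leq q-z_0^2$ that makes the determinant vanish. The price is that you must treat $t=\pm1$ separately with a pseudo-inverse. The principal-minor computation covers that case directly, since the determinant there reduces to $-(E_x-tz_x)^2\geq 0$. One small wording point: in the SDP-to-original direction the reconstructed point has objective $y_0^2+z_0^2\leq q$, not equal to $q$. That inequality is all that is needed, and your pointwise minimal-$q$ statement already gives equality of the optimal values. Finally, because you show the feasible sets correspond exactly, the equivalence also holds for behaviors in $\mathcal{Q}$ where both problems are infeasible. So no appeal to qubit realizability is needed for the statement itself.
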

\begin{proof}
    First, we prove that for any feasible point $(t,\{y_{x},z_{x}\})$ of Eq.~\eqref{App: MinNormN},   $(q,t,z_{0},z_{1})$ is a feasible point of Eq.~\eqref{App: OptTildeQ}, where $q = y_0^2 + z_0^2 $. Then, we will show the converse, i.e., 
for any feasible point $(q,t,z_{0},z_{1})$ of Eq.~\eqref{App: OptTildeQ} there exists a feasible point $(t,\{y_{x},z_{x}\})$ of Eq.~\eqref{App: MinNormN} such that $q \geq  y_0^2  + z_0^2 $.
    Together, these then imply that the optimization problems 
    Eq.~\eqref{App: MinNormN} 
    and Eq.~\eqref{App: OptTildeQ} equivalent.

    Let $(t,\{y_{x},z_{x}\})$ satisfy 
    \begin{multline}
        1\geq z_x \geq 1-2\omega_x \ \forall x\in \{0,1\} \; , \quad -1 \leq t \leq 1 \;, \\  E_x = \sqrt{1-t^2}y_x +tz_x \ \forall x\in \{0,1\}\;  ,  \quad y_x^2  + z_x^2  \leq 1 \ \forall x\in \{0,1\}\; .
    \end{multline}
    Then, setting  $q = y_0^2  + z_0^2 $, $(q,t,z_{0},z_{1})$ trivially satisfy the constraints
    \begin{align}
        1\geq z_x \geq 1-2\omega_x \ \forall x\in \{0,1\} \; , \quad 0\leq q \leq 1 \; .
    \end{align}
We therefore simply need to verify the positive semi-definite constraints are satisfied.  Following Sylvester's criterion (see, e.g.,~\cite[Chapter~7.6]{10.1137/1.9781611977448}), the positive semi-definite constraints  are satisfied if all principal minors are non-negative. That is, 
\begin{align}
     \begin{pmatrix}
  q & E_0 & z_0 \\
  E_0 & 1 & t \\
  z_0 & t & 1
\end{pmatrix} \succeq 0  \quad \begin{pmatrix}
  1 & E_1 & z_1 \\
  E_1 & 1 & t \\
  z_1 & t & 1
\end{pmatrix} \succeq 0
\end{align}
are, respectively, equivalent to 
\begin{align} \label{App: PrincipalMinors}
    q \geq 0 \; , \quad &q \geq E_0^2\; , \quad q \geq z_0^2 \; , \quad  |t| \leq 1 \; , \quad q\left(1-t^2\right) -E_0^2 -z_0^2 + 2E_0z_0 t \geq 0 \\
     &1 \geq E_1^2\; , \quad 1 \geq z_1^2 \; , \quad  |t| \leq 1 \; , \quad 1\left(1-t^2\right) -E_1^2 -z_1^2 + 2E_1z_1 t \geq 0
    \; .
\end{align}
Both the fourth inequality of the first line and the first three inequalities of the second line hold trivially. Additionally, the choice $q = y_0^2  + z_0^2 $ naturally satisfies the first and third constraint of the first line. The second inequality of the first line follows from the Cauchy-Schwarz inequality, i.e., 
\begin{align}
    E_0 = \left(y_0,z_0\right) \cdot \begin{pmatrix}
        \sqrt{1-t^2} \\
        t
    \end{pmatrix} \implies E_0^2 \leq \left( y_0^2 +z_0^2\right) \cdot \left( (1-t^2)+t^2\right) = q \; .
\end{align}
 The first line's last inequality can be proven to hold exactly. This can be proven as follows. For $q = y_0^2  + z_0^2 $, we first have that 
\begin{align}
    q\left(1-t^2\right) -E_0^2 -z_0^2 + 2E_0z_0 t &= \left(y_0^2+z_0^2\right)\left(1-t^2\right) -E_0^2 -z_0^2 + 2E_0z_0 t \\
     &= \left(y_0^2+z_0^2\right)\left(1-t^2\right) - (E_0-tz_0)^2-z_0^2(1-t^2) \\
    & = -(E_0-tz_0)^2+y_0^2(1-t^2) \; .
\end{align}
Using the energy constraint condition, $E_0 = \sqrt{1-t^2}y_0 +tz_0$, we then find that 
\begin{align}
    -(E_0-tz_0)^2+y_0^2(1-t^2) &= -(1-t^2)y_0^2+ (1-t^2)y_0^2 \\
    &=0 \; .
\end{align}
The last inequality of the second line can be proven analogously.

\begin{align}
 \left(1-t^2\right) -E_1^2 -z_1^2 + 2E_1z_1 t 
     &= \left(1-t^2\right) - (E_1-tz_1)^2-z_1^2(1-t^2) \\
    & = -(E_1-tz_1)^2+(1-z_1^2)(1-t^2) \\
    & = -(1-t^2)y_1^2+(1-z_1^2)(1-t^2) \\
    & = (1-y_1^2-z_1^2)(1-t^2) \\
    &\geq 0
    \; .
\end{align}
Here, the last line follows from the fact that $y_1^2+z_1^2\leq 1$.

We now show the reverse direction. 
Let $(q,t,z_{0},z_{1})$ satisfy 
    \begin{equation} 
    \begin{aligned}
         & && && \begin{pmatrix}
  q & E_0 & z_0 \\
  E_0 & 1 & t \\
  z_0 & t & 1
\end{pmatrix} \succeq 0 \; , \quad \begin{pmatrix}
  1 & E_1 & z_1 \\
  E_1 & 1 & t \\
  z_1 & t & 1
\end{pmatrix} \succeq 0\; , \\
        & && && 1\geq z_x \geq 1-2\omega_x \ \forall x\in \{0,1\} \; , \quad 0\leq q \leq 1 \;.
    \end{aligned}
    \end{equation}
Then, by Eq.~\eqref{App: PrincipalMinors}, we have that 
\begin{align}
     q\left(1-t^2\right) -E_0^2 -z_0^2 + 2E_0z_0 t \geq 0 \;, \quad 
      \left(1-t^2\right) -E_1^2 -z_1^2 + 2E_1z_1 t \geq 0
    \; .
\end{align}
These conditions can alternatively be expressed as
\begin{align}
     q\left(1-t^2\right) - (E_0-tz_0)^2-z_0^2(1-t^2) \geq 0 \;, \quad 
      \left(1-t^2\right) - (E_1-tz_1)^2-z_1^2(1-t^2) \geq 0
    \; .
\end{align}
Let us now consider the first inequality. From Eq.~\eqref{App: PrincipalMinors}, we have that $q \geq z_0^2$; moreover at $q = z_0^2$ the LHS can no longer be positive. Then, due to monotonicity, there must exist a value $y_0^2$  such that 
\begin{align}
    1\geq q \geq y_0^2+z_0^2 \quad \mathrm{and} \quad \left(y_0^2+z_0^2\right)\left(1-t^2\right) - (E_0-tz_0)^2-z_0^2(1-t^2) = 0 \; .
\end{align}
Similarly, it follows from the second inequality that there must exist  a $y_1^2$ such that
\begin{align}
    1 \geq y_1^2+z_1^2 \quad \mathrm{and} \quad \left(y_1^2+z_1^2\right)\left(1-t^2\right) - (E_1-tz_1)^2-z_1^2(1-t^2) = 0 \; .
\end{align}
Using the same analysis as above, we show these expressions are equivalent to energy constraints, i.e.,
\begin{align}
&\left(y_x^2+z_x^2\right)\left(1-t^2\right) - (E_x-tz_x)^2-z_x^2(1-t^2) = 0 \\\Longleftrightarrow &  y_x^2\left(1-t^2\right) - (E_x-tz_x)^2 = 0  \\
    \Longleftrightarrow& |y_x|\sqrt{1-t^2} = |E_x-tz_x| \; .
\end{align}
Up to now, we had a degree of freedom with regards to the sign of $y_x$; we now set our choice of $y_x$ such that
\begin{align}
    E_x=\sqrt{1-t^2}y_x +tz_x 
\end{align}
holds. The resulting tuple $(t,\{y_{x},z_{x}\})$ is a feasible point  of Eq.~\eqref{App: MinNormN}  and satisfies $q \geq y_0^2+z_0^2$, as desired.
\end{proof}

\section{Conditional Entropy: SDP Approximation}
In this appendix we approximate the optimization problem we have obtained with an SDP. 
Our starting point is Eq.~\eqref{Eq: OptProbCondEntAllLambda} for the minimal value for the conditional von-Neumann entropy, conditioned on the input $X=0$, which we repeat here for convenience:
\begin{equation} \label{general-bound-T-appendix}
    \begin{aligned}
        & {\inf}\,H(A|E\Lambda,X=0) \geq && \inf_{\substack{P_\Lambda,\{(\bm{E}^{\lambda},\bm{\omega}^{\lambda})\}\\
         \{q^\lambda,t^{\lambda},z_0^{\lambda},z_1^{\lambda}\}}} && \sum_{\lambda}\operatorname{P}_{\Lambda}(\lambda)f(E_0^\lambda,q^\lambda)\\
        & && \text{\ \ subject to} \\
         & && && \sum_{\lambda}\operatorname{P}_{\Lambda}(\lambda)E_{x}^{\lambda} =E_{x}\;,\;\;\;-1\leq E_x^\lambda\leq 1 \\
        & && && \sum_{\lambda}\operatorname{P}_{\Lambda}(\lambda)\,\omega_{x}^{\lambda} \leq\omega_{x}\;,\;\;\;\;\;0\leq\omega_x^\lambda\leq 1\\
        & && && \sum_{\lambda}\operatorname{P}_{\Lambda}(\lambda) =1\;,\;\;\;\;\;\;\;\;\;\;\;\;\operatorname{P}_{\Lambda}(\lambda)\geq 0\\
        & && &&
        \begin{pmatrix}
  q^\lambda & E_0^{\lambda} & z_0^\lambda \\
  E_0^{\lambda} & 1 & t^\lambda \\
  z_0^\lambda & t^\lambda & 1
\end{pmatrix} \succeq 0 \; , \quad \begin{pmatrix}
  1 & E_1^{\lambda} & z_1^\lambda \\
  E_1^{\lambda} & 1 & t^\lambda \\
  z_1^\lambda & t^\lambda & 1
\end{pmatrix} \succeq 0 \\
& && && 1\geq z_x^\lambda \geq 1-2\omega_x^\lambda \ \forall x\in \{0,1\} \; , \quad 0\leq q^\lambda \leq 1 \\
 & && &&  (E^\lambda_0,E^\lambda_1,\omega^\lambda_0,\omega^\lambda_1)\in\mathcal{Q} \; ,\\
    \end{aligned}
    \end{equation}
   where $f(E_0^\lambda,q^\lambda)=h_2\left(\frac{1-E_0^\lambda}{2}\right)-h_2\left(\frac{1-\sqrt{q^\lambda}}{2}\right)$.
    In Appendix \ref{sec:linear}, we  construct a piecewise linear function that lower bounds the objective function. This is later used in Appendix \ref{sec:sdp} to replace our optimization problem with an SDP.

\subsection{Piecewise Linear Lower Bound Construction}\label{sec:linear}

\begin{prop} \label{App: PartitionConstruction}
Let $f(E,q)=h_2\left(\frac{1-E}{2}\right)-h_2\left(\frac{1-\sqrt{q}}{2}\right)$. For any integer $N\in \mathbb{N}$, let $\{E_i,q_j\}$ be a uniform grid of points on $ \left[-1,1\right] \times \left[0,1\right]$ such that
\begin{align}
    E_{i}&= \frac{2i}{N}-1 \quad &&\forall i \in \{0,1,\dots,N\} \\
    q_j&= \frac{j}{N} \quad &&\forall j \in \{0,1,\dots,N\}\; .
\end{align}
Moreover, for any $i,j\in \{0,\dots, N-1\}$, we set\footnote{For $q_0=0$, we take the limit $\lim_{q\to 0}\frac{1}{4\sqrt{q}}
    \log \left( \frac{1-\sqrt{q}}{1+\sqrt{q}}\right) = - \frac{1}{2\ln(2)}$, where $\ln$ denotes the natural logarithm.}
\begin{align}
    k_{i}(E) &\coloneq h_2\left(\frac{1-E_{i}}{2}\right) + \frac{h_2\left(\frac{1-E_{i+1}}{2}\right)-h_2\left(\frac{1-E_{i}}{2}\right)}{E_{i+1}-E_{i}} \left(E-E_{i} \right)\\
    l_{j}(q) &\coloneq h_2\left(\frac{1-\sqrt{q_j}}{2}\right)+ \frac{1}{4\sqrt{q_{j}}}
    \log \left( \frac{1-\sqrt{q_{j}}}{1+\sqrt{q_{j}}}\right)
 \left(q-q_{j}\right) \; .
\end{align}
Then the piecewise linear function
\begin{align} \label{App: PiecewiseLinFct}
    g(E,q)= k_{i}(E) - l_{j}(q) \quad \mathrm{for} \ E\in [E_i,E_{i+1}),\ q\in [q_j,q_{j+1}) 
\end{align}
satisfies $g(E,q)\leq f(E,q)$ for all $(E,q)\in [ -1,1) \times[0,1)$.
\end{prop}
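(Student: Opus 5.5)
The function splits as $f(E,q)=F(E)-G(q)$ with $F(E)\coloneq h_2\left(\frac{1-E}{2}\right)$ and $G(q)\coloneq h_2\left(\frac{1-\sqrt{q}}{2}\right)$. Since $g(E,q)=k_i(E)-l_j(q)$ on each cell, it suffices to prove two one-dimensional inequalities separately. First, $k_i(E)\leq F(E)$ for $E\in[E_i,E_{i+1}]$. Second, $l_j(q)\geq G(q)$ for $q\in[q_j,q_{j+1})$. Subtracting them then gives $g\leq f$ on every cell, hence on all of $[-1,1)\times[0,1)$.

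For the first inequality, note that $F$ is concave in $E$, because $h_2$ is concave and $E\mapsto\frac{1-E}{2}$ is affine. Moreover $k_i$ is exactly the chord (secant) of $F$ between the points $E_i$ and $E_{i+1}$. A concave function lies above its chords on the interval between the endpoints, so $k_i(E)\leq F(E)$ there. This step is routine.

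For the second inequality, the plan is to show that $G$ is convex on $[0,1]$. I will also check that $l_j$ is precisely the tangent line of $G$ at $q_j$; convexity then gives $G(q)\geq$ tangent everywhere, which is the wrong direction, so the bookkeeping of the sign has to be done carefully. Write $s=\sqrt q$ and $\phi(s)=h_2\left(\frac{1-s}{2}\right)$. Then $\phi'(s)=\frac12\log\frac{1-s}{1+s}$, and by the chain rule $G'(q)=\phi'(\sqrt q)/(2\sqrt q)=\frac{1}{4\sqrt q}\log\frac{1-\sqrt q}{1+\sqrt q}$. This matches the slope in $l_j$, so $l_j$ is the tangent line of $G$ at $q_j$. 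The limit $q_j\to0$ gives $-1/(2\ln2)$, which is the stated footnote value.

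Now $G$ is in fact convex. Indeed, $G'(q)=-\frac{1}{2\ln 2}\,\frac{\operatorname{artanh}(s)}{s}$, and $\operatorname{artanh}(s)/s=\sum_k s^{2k}/(2k+1)$ is increasing in $s$, hence in $q$. So $G'$ is decreasing, which means $G$ is \emph{concave}, not convex. I should recheck: $\operatorname{artanh}(s)/s$ increasing makes $G'$ more negative as $q$ grows, so $G$ is indeed concave.

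Concavity of $G$ gives exactly what is needed. A concave function lies below each of its tangents, so $G(q)\leq l_j(q)$ for all $q$. In particular this holds on $[q_j,q_{j+1})$, so $-l_j(q)\leq -G(q)$. (The paper's remark that $f$ is "convex in $q$" is consistent with this: $-G$ is convex.) Combining with the first inequality yields $g(E,q)=k_i(E)-l_j(q)\leq F(E)-G(q)=f(E,q)$.

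The main obstacle is the concavity of $G$, including the endpoint $q=0$, where $\sqrt q$ is not differentiable. I would handle it through the power series of $\operatorname{artanh}(s)/s$. This series shows that $G'$ extends continuously to $q=0$ with value $-1/(2\ln2)$ and is monotone on $[0,1)$. So $G$ is $C^1$ on $[0,1)$ with nonincreasing derivative, which gives concavity, and the tangent inequality holds also at $q_0=0$.
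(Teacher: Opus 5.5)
Your proposal is correct and follows the paper's proof. The $E$-part is bounded below by the chord of the concave function $h_2\left(\frac{1-E}{2}\right)$, and the $q$-part is bounded above by its tangent line $l_j$, since $h_2\left(\frac{1-\sqrt{q}}{2}\right)$ is concave. The only difference is cosmetic: you get this concavity from $\operatorname{artanh}(s)/s=\sum_k s^{2k}/(2k+1)$ being increasing, while the paper computes the second derivative and uses $\operatorname{artanh}(\sqrt{q})\leq \sqrt{q}/(1-q)$, which is the same fact in another form. You should also remove the initial ``convex'' misstatement from your write-up.
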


\begin{proof}
For any $i \in \{0,1,\dots,N-1\}$ and $E\in [E_i,E_{i+1}] $ 
there exists a $\lambda \in [0,1]$ such that 
\begin{align} \label{Eq: ConvEExpress}
    E = \left(1-\lambda\right) E_i + \lambda E_{i+1} \; .
\end{align}
    It then follows from the fact that $h_2\left(\frac{1-E}{2}\right)$ is concave in  $E$ that     
    \begin{align} 
        h_2\left(\frac{1-E}{2}\right) &\geq \left(1-\lambda\right)h_2\left(\frac{1-E_i }{2}\right) +\lambda h_2\left(\frac{1-E_{i+1} }{2}\right) \\
        &= h_2\left(\frac{1-E_i }{2}\right) +\lambda \left( h_2\left(\frac{1-E_{i+1}}{2}\right)-h_2\left(\frac{1-E_{i}}{2}\right)\right) \\
        &= h_2\left(\frac{1-E_{i}}{2}\right) + \frac{h_2\left(\frac{1-E_{i+1}}{2}\right)-h_2\left(\frac{1-E_{i}}{2}\right)}{E_{i+1}-E_{i}} \left(E-E_{i} \right) \\
        &\eqcolon  k_{i}(E) \; .
    \end{align}
The second-to-last equality holds due to Eq.~\eqref{Eq: ConvEExpress}, which can alternatively be expressed as 
\begin{align}
    \lambda = \frac{E-E_{i}}{E_{i+1}-E_{i}}
\end{align}

Let us now consider the second term $ h_2\left(\frac{1-\sqrt{q}}{2}\right)$. This function is concave in $q$. This can be seen as a consequence of $h_2\left(\frac{1+\sqrt{1-q}}{2}\right)$ being concave~\cite{Coffman_2000}. Alternatively, it is readily verifiable that
the second derivative of $ h_2\left(\frac{1-\sqrt{q}}{2}\right)$ is given by
\begin{align}
   \frac{d^2}{dq^2} h_2\left(\frac{1-\sqrt{q}}{2}\right) = - \frac{\frac{\sqrt{q}}{1-q}-\mathrm{artanh}(\sqrt{q})}{4q^{3/2}\ln 2}  \leq 0\; .
\end{align}
The inequality simply follows from the fact that
\begin{align}
    \mathrm{artanh}(\sqrt{x}) = \int_{0}^{\sqrt{x}} \frac{1}{1-u^2}du \leq \int_{0}^{\sqrt{x}} \frac{1}{1-x}du = \frac{\sqrt{x}}{1-x} \; .
\end{align}

For any  $j \in \{0,1,\dots,N-1\}$, the construction of $l_{j}(q)$ satisfies 
\begin{align} \label{Eq: ConstructionLProp}
h_2\left(\frac{1-\sqrt{q_j}}{2}\right)=l_{j}(q_{j}) \quad \mathrm{and} \quad \frac{d}{dq}h_2\left(\frac{1-\sqrt{q_j}}{2}\right)|_{q=q_{j}}=l_{j}^\prime(q_{j})\; .
\end{align}
Moreover, it follows from the concavity of $h_2\left(\frac{1-\sqrt{q}}{2}\right)$ and linearity of $l_{j}(q)$ that 
\begin{align} \label{Eq: SecondDerProp}
   \left(\frac{d}{dq}h_2\left(\frac{1-\sqrt{q}}{2}\right)-l_{j}^{ \prime}(q)\right)^\prime = \frac{d^2}{dq^2}h_2\left(\frac{1-\sqrt{q}}{2}\right)-l_{j}^{\prime  \prime}(q) = \frac{d^2}{dq^2}h_2\left(\frac{1-\sqrt{q}}{2}\right) \leq 0 \quad \mathrm{for}\ q \in [q_{j},q_{j+1}) \; .
\end{align}
Together, Eq.~\eqref{Eq: ConstructionLProp} and Eq.~\eqref{Eq: SecondDerProp} imply that 
\begin{align}\label{Eq: FirstDerProp}
\frac{d}{dq}h_2\left(\frac{1-\sqrt{q}}{2}\right)-l_{j}^{ \prime}(q) \leq 0 \quad \mathrm{for}\ q \in [q_{j},q_{j+1})\; .
\end{align}
    Similarly, Eq.~\eqref{Eq: ConstructionLProp} and Eq.~\eqref{Eq: FirstDerProp} then  ensure that
    \begin{align}
        h_2\left(\frac{1-\sqrt{q}}{2}\right)-l_{j}(q) \leq 0 \quad \mathrm{for}\ q \in [q_{j},q_{j+1}) \; .
    \end{align}
    This proves that $l_{j}(q)$ is an upper bound on $ h_2\left(\frac{1-\sqrt{q}}{2}\right)$. For every $E\in [E_i,E_{i+1}),\ q\in [q_j,q_{j+1})$, it therefore follows that
    \begin{align}
        f(E,q) \geq k_{i}(E)-l_{j}(q)\; ,
    \end{align}
    thus concluding the proof.
\end{proof} 

\subsection{SDP Hierarchy for the von Neumann Entropy}\label{sec:sdp}
For completeness, we recall from Section~\ref{Sec: CompSDPBounds} the following notation. We say $(\operatorname{P}_{\Lambda}(\lambda),\tilde{E}_0^\lambda,\tilde{E}_1^\lambda,\tilde{\omega}_0^\lambda,\tilde{\omega}_1^\lambda)\in \tilde{\mathcal{Q}}$ if and only if $ \left(\frac{\tilde{E}_0^\lambda}{\operatorname{P}_\Lambda(\lambda)},\frac{\tilde{E}_1^\lambda}{\operatorname{P}_\Lambda(\lambda)},\frac{\tilde{\omega}_0^\lambda}{\operatorname{P}_\Lambda(\lambda)},\frac{\tilde{\omega}_1^\lambda}{\operatorname{P}_\Lambda(\lambda)} \right)\in \mathcal{Q}$ is a valid quantum behavior. For any element $S\in\mathcal{S}$ of the partition, we also say $\left(\operatorname{P}_\Lambda(\lambda),\tilde{E}_0^\lambda,\tilde{q}^\lambda\right)\in \tilde{S}$ if $ \left(\frac{\tilde{E}_0^\lambda}{\operatorname{P}_\Lambda(\lambda)},\frac{\tilde{q}^\lambda}{\operatorname{P}_\Lambda(\lambda)}\right)\in S$.

\begin{theorem} \label{App: SumReduction}
    For any $n \in \mathbb{N}$, let $\mathcal{S}=\{S_{1}, S_2,\dots\}$ denote the corresponding partition given by Eq.~\eqref{Eq: PartPlus} with $|\mathcal{S}|$ elements and let $\operatorname{P}_\Lambda$ denote a normalized probability distribution over the set $\{1,\dots,|\mathcal{S}|\} $. Then there exists a family of coefficients $\{ (\alpha^{\lambda},\beta^{\lambda},\gamma^{\lambda})\}_{\lambda\in [1,\dots,|\mathcal{S}|]}$ such that
\begin{equation} \label{App: OptProbCondEntSDPBound}
    \begin{aligned}
        & {\inf}\,H(A|E,\Lambda,X=0) \geq && \min_{\substack{P_\Lambda,\{(\tilde{\bm{E}}^{\lambda},\tilde{\bm{\omega}}^{\lambda})\}\\
        \{\tilde{q}^\lambda,\tilde{t}^{\lambda},\tilde{z}_0^{\lambda},\tilde{z}_1^{\lambda}\}}} && \sum_{\lambda=1}^{|\mathcal{S}|} \alpha^{\lambda}\operatorname{P}_{\Lambda}(\lambda) + \beta^\lambda \tilde{E}_{0}^{\lambda}+ \gamma^\lambda \tilde{q}^{\lambda}    
        \\
        & && \text{\ \ subject to} \\
         & && && \sum_{\lambda}\tilde{E}_{x}^{\lambda} =E_{x}\;,\;\;\; \;\;\;\;  -\operatorname{P}_{\Lambda}(\lambda)\leq \tilde{E}_x^\lambda\leq \operatorname{P}_{\Lambda}(\lambda) \\
        & && && \sum_{\lambda}\tilde{\omega}_{x}^{\lambda} \leq\omega_{x}\;,\;\;\;\;\;  \;\;\;\; \;\;\;\;  0\leq\tilde{\omega}_x^\lambda\leq \operatorname{P}_{\Lambda}(\lambda)\\
        & && && \sum_{\lambda}\operatorname{P}_{\Lambda}(\lambda) =1\;,\;\;\;\;\;\;\;\;\;\;\;\;\operatorname{P}_{\Lambda}(\lambda)\geq 0\\
        & && &&
        \begin{pmatrix}
  \tilde{q}^{\lambda} & \tilde{E}_0^{\lambda} & \tilde{z}_0^\lambda \\
  \tilde{E}_0^{\lambda} & \operatorname{P}_{\Lambda}(\lambda) & \tilde{t}^\lambda \\
  \tilde{z}_0^\lambda & \tilde{t}^\lambda & \operatorname{P}_{\Lambda}(\lambda)
\end{pmatrix} \succeq 0 \; , \quad \begin{pmatrix}
  \operatorname{P}_{\Lambda}(\lambda) & \tilde{E}_1^{\lambda} & \tilde{z}_1^\lambda \\
  \tilde{E}_1^{\lambda} & \operatorname{P}_{\Lambda}(\lambda) & \tilde{t}^\lambda \\
  \tilde{z}_1^\lambda & \tilde{t}^\lambda & \operatorname{P}_{\Lambda}(\lambda)
\end{pmatrix} \succeq 0 \\
& && && \operatorname{P}_{\Lambda}(\lambda)\geq \tilde{z}_x^\lambda \geq \operatorname{P}_{\Lambda}(\lambda)-2\tilde{\omega}_x^\lambda  \; , \quad 0\leq \tilde{q}^\lambda \leq \operatorname{P}_{\Lambda}(\lambda) \\
        & && &&  (\operatorname{P}_{\Lambda}(\lambda),\tilde{E}_0^\lambda,\tilde{E}_1^\lambda,\tilde{\omega}_0^\lambda,\tilde{\omega}_1^\lambda)\in \tilde{\mathcal{Q}} \;,\;\;\;\;\;\left(\operatorname{P}_\Lambda(\lambda),\tilde{E}_0^\lambda,\tilde{q}^\lambda\right)\in \tilde{S}_{\lambda}\; .
    \end{aligned}
    \end{equation}  
\end{theorem}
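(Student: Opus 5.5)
The plan is to start from the relaxed problem in Eq.~\eqref{general-bound-T-appendix}. By Theorem~\ref{theorem: optimization problem for the general model} and Eq.~\eqref{objective-function-X=0}, it already lower bounds $\inf H(A|E\Lambda,X=0)$. I would then pass through three relaxations, each of which can only decrease the optimal value.

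First, replace the objective $f(E_0^\lambda,q^\lambda)$ by the piecewise linear minorant $g$ from Proposition~\ref{App: PartitionConstruction}. On the closed cells $S^{(i,j)}$ of Eq.~\eqref{Eq: PartPlus}, define $(\alpha,\beta,\gamma)$ for cell $S$ through $g(E,q)=k_i(E)-l_j(q)=\alpha+\beta E+\gamma q$. That is, $\beta$ is the secant slope of $h_2((1-E)/2)$ and $\gamma$ is minus the tangent slope of $h_2((1-\sqrt q)/2)$ at $q_j$. By continuity, $g\le f$ holds on each closed cell, so the point $(1,1)$ is also covered. For a feasible $(E_0^\lambda,q^\lambda)$ lying in several cells, assign it to any one of them; the inequality holds for every assignment.

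Second, reduce an arbitrary (possibly infinite) $\Lambda$ to $|\mathcal{S}|$ values, following the approach of~\cite{Bancal_2014,Nieto_Silleras_2014,Hahn_2022}. Given any feasible strategy, group the $\lambda$'s by the cell $S$ containing $(E_0^\lambda,q^\lambda)$. Define the new variables by summation: $\operatorname{P}(S)=\sum_{\lambda\in S}\operatorname{P}_\Lambda(\lambda)$, $\tilde E_x^S=\sum_{\lambda\in S}\operatorname{P}_\Lambda(\lambda)E_x^\lambda$, and similarly $\tilde\omega_x^S,\tilde q^S,\tilde t^S,\tilde z_x^S$.

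The key point is that every constraint is homogeneous and convex after multiplying by $\operatorname{P}_\Lambda(\lambda)$:
\begin{itemize}
\item the PSD $3\times3$ matrices scaled by $\operatorname{P}_\Lambda(\lambda)$ remain PSD, and sums of PSD matrices are PSD;
\item the linear bounds such as $\operatorname{P}\ge\tilde z\ge \operatorname{P}-2\tilde\omega$ and $0\le\tilde q\le\operatorname{P}$ are preserved under sums;
\item membership in $\tilde{\mathcal Q}$ is preserved, because the scaled $4\times4$ matrix of Lemma~\ref{Lem: SDPRep} (as in the renormalized lemma above) is PSD, and summing gives a valid certificate with $\tilde\eta_x\le\tilde\omega_x$;
\item the cell constraints $\operatorname{P}E_i\le\tilde E\le \operatorname{P}E_{i+1}$ and $\operatorname{P}q_j\le\tilde q\le\operatorname{P}q_{j+1}$ are likewise additive.
\end{itemize}
The global constraints $\sum\tilde E_x=E_x$ and $\sum\tilde\omega_x\le\omega_x$ carry over directly. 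Since $g$ is affine on each cell, the objective becomes $\sum_\lambda\operatorname{P}_\Lambda(\lambda)g(E_0^\lambda,q^\lambda)=\sum_S \alpha^S\operatorname{P}(S)+\beta^S\tilde E_0^S+\gamma^S\tilde q^S$. Hence every feasible point of the original problem maps to a feasible point of Eq.~\eqref{App: OptProbCondEntSDPBound} with objective no larger than $\sum\operatorname{P}f$, which gives the inequality. Cells with $\operatorname{P}(S)=0$ get all-zero variables, consistent with the footnoted conventions.

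The main obstacle is the degenerate cases and the handling of the cell assignment. I need to check that the bounds on $\tilde E_x$ and $\tilde\omega_x$ (namely $-\operatorname{P}\le\tilde E_x\le\operatorname{P}$ and $0\le\tilde\omega_x\le\operatorname{P}$) follow from $|E_x^\lambda|\le1$ and $0\le\omega_x^\lambda\le1$, which is immediate. I also need that the minimum is attained, which follows because the feasible set is compact (all variables are bounded by $\operatorname{P}\le 1$) and closed. The remaining subtlety is that $q^\lambda$ near $0$ uses the limiting tangent slope; this is fine by the footnote of Proposition~\ref{App: PartitionConstruction}.
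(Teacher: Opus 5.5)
Your proposal is correct and follows essentially the same route as the paper. Both arguments take the cellwise affine minorant from Proposition~\ref{App: PartitionConstruction}, extend it to closed cells by continuity, assign each $\lambda$ to exactly one cell (the paper's sets $\mathcal{R}_k$), and collapse each group to a single $\lambda$ using that all constraints survive the relevant convex combinations. The differences are cosmetic: you sum the $\operatorname{P}_\Lambda$-weighted variables directly, which merges the paper's averaging step with its final change of variables, and you check $\tilde{\mathcal{Q}}$-membership by adding the scaled $4\times 4$ certificates instead of citing convexity of $\mathcal{Q}$.
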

\begin{proof}
    From Eq.~\eqref{Eq: OptProbCondEntAllLambda}, we have the following lower bound on the desired von Neumann entropy:
\begin{equation} \label{App: OptProbCondEntAllLambda}
    \begin{aligned}
        & {\inf}\,H(A|E\Lambda,X=0) \geq && \min_{\substack{P_\Lambda,\{(\bm{E}^{\lambda},\bm{\omega}^{\lambda})\}\\
         \{q^\lambda,t^{\lambda},z_0^{\lambda},z_1^{\lambda}\}}} && \sum_{\lambda}\operatorname{P}_{\Lambda}(\lambda)f(E_0^\lambda,q^\lambda)\\
        & && \text{\ \ subject to} \\
         & && && \sum_{\lambda}\operatorname{P}_{\Lambda}(\lambda)E_{x}^{\lambda} =E_{x}\;,\;\;\;-1\leq E_x^\lambda\leq 1 \\
        & && && \sum_{\lambda}\operatorname{P}_{\Lambda}(\lambda)\,\omega_{x}^{\lambda} \leq\omega_{x}\;,\;\;\;\;\;0\leq\omega_x^\lambda\leq 1\\
        & && && \sum_{\lambda}\operatorname{P}_{\Lambda}(\lambda) =1\;,\;\;\;\;\;\;\;\;\;\;\;\;\operatorname{P}_{\Lambda}(\lambda)\geq 0\\
        & && &&
        \begin{pmatrix}
  q^\lambda & E_0^{\lambda} & z_0^\lambda \\
  E_0^{\lambda} & 1 & t^\lambda \\
  z_0^\lambda & t^\lambda & 1
\end{pmatrix} \succeq 0 \; , \quad \begin{pmatrix}
  1 & E_1^{\lambda} & z_1^\lambda \\
  E_1^{\lambda} & 1 & t^\lambda \\
  z_1^\lambda & t^\lambda & 1
\end{pmatrix} \succeq 0 \\
& && && 1\geq z_x^\lambda \geq 1-2\omega_x^\lambda \; , \quad 0\leq q^\lambda \leq 1 \\
 & && &&  (E^\lambda_0,E^\lambda_1,\omega^\lambda_0,\omega^\lambda_1)\in\mathcal{Q} \; ,\\
    \end{aligned}
    \end{equation} 
where $f(E_0^\lambda,q^\lambda)=h_2\left(\frac{1-E_0^\lambda}{2}\right)-h_2\left(\frac{1-\sqrt{q^\lambda}}{2}\right)$. Moreover, for any $S_{k}\in \mathcal{S}$, it follows from Proposition~\ref{Lem: PartitionConstruction} and the continuity of $f$ that there exist constants $\alpha^k, \beta^k,\gamma^k$ such that  
\begin{align} \label{App: CoeffFam}
    f(E,q) \geq  \alpha^k + \beta^k E +\gamma^k q\quad \mathrm{for} \ (E,
    q)\in S_{k} \; .
\end{align}
Let us now consider any feasible point in the optimization from Eq.~\eqref{App: OptProbCondEntAllLambda}:
\begin{align} \label{App: GenFeasPoint}
  \{P_\Lambda(\lambda),(\bm{E}^{\lambda},\bm{\omega}^{\lambda}),{q}^\lambda,{t}^{\lambda},{z}_0^{\lambda},{z}_1^{\lambda}\}_\lambda  \; .
\end{align} 
Following~\cite[Proof of Proposition 9]{Hahn_2022}, let  
\begin{align} \label{App: RKConstruction}
   \mathcal{R}_{k} = \{ \lambda| (E_0^\lambda,q^\lambda) \in S_k \ \land \ \lambda \notin \mathcal{R}_{l} \ \mathrm{for} \ l<k\} \; .
\end{align}
The condition $\lambda \notin \mathcal{R}_{l} \ \mathrm{for} \ l<k$ ensures that we include 
all $\Lambda =\lambda$ for which $(E_0^\lambda,q^\lambda)$ is an element in  $S_k$, except for those that are also contained in another partition element $S_l$ with $l<k$. We highlight that, similar to~\cite[Proof of Proposition 9]{Hahn_2022}, if $(E_0^\lambda,q^\lambda)$ lies on an edge between two or more partitions, it does not matter for the following proof in which $\mathcal{R}_{k}$ we include $\lambda$. It is only important that each $\lambda$ appears \textit{exactly} once, i.e., no element is over- or under-counted. 
We then have that
\begin{align}   \label{App: GenAttackLoweBound}
\sum_{\lambda}\operatorname{P}_{\Lambda}(\lambda)f(E_0^\lambda,q^\lambda) \geq \sum_{k} \sum_{\lambda \in \mathcal{R}_{k}} \operatorname{P}_{\Lambda}(\lambda)\left(\alpha^k + \beta^k E^\lambda +\gamma^k q^\lambda \right) \; .
\end{align}
We now use the approach of~\cite{Bancal_2014,Nieto_Silleras_2014,Hahn_2022} to reduce the number of $\{P_\Lambda(\lambda),(\bm{E}^{\lambda},\bm{\omega}^{\lambda}),{q}^\lambda,{t}^{\lambda},{z}_0^{\lambda},{z}_1^{\lambda}\}$
to at most one per partition element $S_{k}$. Concretely, for any feasible point of the form Eq~\eqref{App: GenFeasPoint}, we will now construct a different point
 \begin{align} \label{App: ReducedFeasiblePoint}
     \{P^\prime(k),(\bm{E}^{\prime k},\bm{\omega}^{\prime k}),{q}^{\prime k},{t}^{\prime k},{z}_0^{\prime k},{z}_1^{\prime k}\}_{k \in \{ 1,\dots,|\mathcal{S}|\}}
 \end{align}
with $|\mathcal{S}|$ elements that is also feasible and yields the same value for the RHS of Eq.~\eqref{App: GenAttackLoweBound}. 

We set
\begin{align}
    P^\prime(k) =\sum_{\lambda \in \mathcal{R}_{k}} P_\Lambda(\lambda) \; , \quad (\bm{E}^{\prime k},\bm{\omega}^{\prime k}) = \frac{1}{P^\prime(k)}\sum_{\lambda \in \mathcal{R}_{k}}(P_\Lambda(\lambda)\bm{E}^{ \lambda},P_\Lambda(\lambda)\bm{\omega}^{ \lambda})\; , \\
    ({q}^{\prime k},{t}^{\prime k},{z}_0^{\prime k},{z}_1^{\prime k}) = \frac{1}{P^\prime(k)}\sum_{\lambda \in \mathcal{R}_{k}} (P_\Lambda(\lambda)q^{ \lambda},P_\Lambda(\lambda)t^{ \lambda},P_\Lambda(\lambda){z}_0^{ \lambda},P_\Lambda(\lambda){z}_1^{ \lambda}) \; .
\end{align}
If $P^\prime(k)=0$, then one may set $(\bm{E}^{\prime k},\bm{\omega}^{\prime k}),{q}^{\prime k},{t}^{\prime k},{z}_0^{\prime k},{z}_1^{\prime k}$
to any (feasible) value. Let us therefore, without loss of generality, restrict ourselves to $P^\prime(k)>0$. By construction, the point from Eq.~\eqref{App: ReducedFeasiblePoint} satisfies the constraints
\begin{align}
    \sum_{k}\operatorname{P}(k)E_{x}^{\prime k} &=\sum_{\lambda}\operatorname{P}_{\Lambda}(\lambda)E_{x}^{\lambda}=E_{x} \; ,\\ \sum_{k}\operatorname{P}(k)\,\omega_{x}^{\prime k} &= \sum_{\lambda}\operatorname{P}_{\Lambda}(\lambda)\,\omega_{x}^{\lambda} \leq\omega_{x} \; , \\
    \sum_{k}\operatorname{P}(k) &= \sum_{\lambda}\operatorname{P}_{\Lambda}(\lambda) = 1 \; .
\end{align}
Moreover, 
we have that
\begin{align}
    -1\leq E_x^{\prime k}\leq 1\; , \quad 0\leq \omega_x^{\prime k}\leq 1 \; , \quad 1\geq z_x^{\prime k} \geq 1-2\omega_x^{\prime k} \; , 0\leq q^{\prime k}\leq 1 \; .
\end{align}
This can be seen from the fact that 
\begin{align}
    -1 = \frac{ \sum_{\lambda \in \mathcal{R}_{k}} P_\Lambda(\lambda)  \cdot \left( -1\right)}{P^\prime(k)} \leq \frac{ \sum_{\lambda \in \mathcal{R}_{k}} P_\Lambda(\lambda)  \cdot \left(E_x^{\lambda}\right)}{P^\prime(k)} = E_x^{\prime k}  \leq \frac{ \sum_{\lambda \in \mathcal{R}_{k}} P_\Lambda(\lambda)  \cdot \left(+1 \right)}{P^\prime(k)} \leq 1  \; ,\\ 
    0 = \frac{ \sum_{\lambda \in \mathcal{R}_{k}} P_\Lambda(\lambda)  \cdot \left( 0\right)}{P^\prime(k)} \leq \frac{ \sum_{\lambda \in \mathcal{R}_{k}} P_\Lambda(\lambda)  \cdot \left(\omega_x^{\lambda}\right)}{P^\prime(k)} = \omega_x^{\prime k}  \leq \frac{ \sum_{\lambda \in \mathcal{R}_{k}} P_\Lambda(\lambda)  \cdot \left(+1 \right)}{P^\prime(k)} \leq 1 \; ,\\
    1 = \frac{ \sum_{\lambda \in \mathcal{R}_{k}} P_\Lambda(\lambda)  \cdot \left( +1 \right)}{P^\prime(k)}  \geq  \frac{ \sum_{\lambda \in \mathcal{R}_{k}} P_\Lambda(\lambda)  \cdot \left( z_x^\lambda \right)}{P^\prime(k)}  =  z_x^{\prime k} \geq   \frac{ \sum_{\lambda \in \mathcal{R}_{k}} P_\Lambda(\lambda)  \cdot \left( 1-2\omega_x^{\lambda} \right)}{P^\prime(k)} = 1-2\omega_x^{\prime k} \; , \\
    0 = \frac{ \sum_{\lambda \in \mathcal{R}_{k}} P_\Lambda(\lambda)  \cdot \left( 0\right)}{P^\prime(k)} \leq \frac{ \sum_{\lambda \in \mathcal{R}_{k}} P_\Lambda(\lambda)  \cdot \left( q^\lambda\right)}{P^\prime(k)} = q^{\prime k} \leq \frac{ \sum_{\lambda \in \mathcal{R}_{k}} P_\Lambda(\lambda)  \cdot \left( 1\right)}{P^\prime(k)} = 1 \; .
\end{align}
Lastly, we have that $(E^{\prime k}_0,E^{\prime k}_1,\omega^{\prime k}_0,\omega^{\prime k}_1)\in\mathcal{Q}$ and 
\begin{align}
     \begin{pmatrix}
  q^{\prime k} & E_0^{\prime k} & z_0^{\prime k} \\
  E_0^{\prime k} & 1 & t^{\prime k} \\
  z_0^{\prime k} & t^{\prime k} & 1
\end{pmatrix} \succeq 0 \; , \quad \begin{pmatrix}
  1 & E_1^{\prime k} & z_1^{\prime k} \\
  E_1^{\prime k} & 1 & t^{\prime k} \\
  z_1^{\prime k} & t^{\prime k} & 1
\end{pmatrix} \succeq 0 \; .
\end{align}
The first follows from the fact that $\mathcal{Q}$
 is closed under convex mixtures~\cite[Section~4.1]{Van_Himbeeck_2017}  and the fact that  
\begin{align}
    \begin{pmatrix}
  q^{\prime k} & E_0^{\prime k} & z_0^{\prime k} \\
  E_0^{\prime k} & 1 & t^{\prime k} \\
  z_0^{\prime k} & t^{\prime k} & 1
\end{pmatrix}  = \frac{1}{P^\prime(k)} \left(\sum_{\lambda \in \mathcal{R}_{k}} P_\Lambda(\lambda) \begin{pmatrix}
  q^{\lambda} & E_0^{\lambda} & z_0^{\lambda} \\
  E_0^{\lambda} & 1 & t^{\lambda} \\
  z_0^{\lambda} & t^{\lambda} & 1
\end{pmatrix}\right)\succeq 0 \; ,\\
 \begin{pmatrix}
  1 & E_1^{\prime k} & z_1^{\prime k} \\
  E_1^{\prime k} & 1 & t^{\prime k} \\
  z_1^{\prime k} & t^{\prime k} & 1
\end{pmatrix}  = \frac{1}{P^\prime(k)} \left(\sum_{\lambda \in \mathcal{R}_{k}} P_\Lambda(\lambda) \begin{pmatrix}
  1 & E_1^{\lambda} & z_1^{\lambda} \\
  E_1^{\lambda} & 1 & t^{\lambda} \\
  z_1^{\lambda} & t^{\lambda} & 1
\end{pmatrix}\right)\succeq 0 \; .
\end{align}
The last operator inequality follows from the fact that $P_\Lambda(\lambda) \geq 0$ and
\begin{align}
     \begin{pmatrix}
  q^{\lambda} & E_0^{\lambda} & z_0^{\lambda} \\
  E_0^{\lambda} & 1 & t^{\lambda} \\
  z_0^{\lambda} & t^{\lambda} & 1
\end{pmatrix} \succeq 0 \; , \quad \begin{pmatrix}
  1 & E_1^{\lambda} & z_1^{\lambda} \\
  E_1^{\lambda} & 1 & t^{\lambda} \\
  z_1^{\lambda} & t^{\lambda} & 1
\end{pmatrix} \succeq 0 \quad \forall\lambda \in \mathcal{R}_{k}\; .
\end{align}
Together, this proves that Eq.~\eqref{App: ReducedFeasiblePoint} is a feasible point. To show that this feasible point achieves the same value as the RHS of Eq.~\eqref{App: GenAttackLoweBound}, we note that
 \begin{align}
     &\sum_{k} P^\prime(k)\left(\alpha^k + \beta^k E^{\prime k} +\gamma^k q^{\prime k}\right) \\=& \sum_{k} P^\prime(k)\left(\alpha^k + \beta^k \frac{\left(\sum_{\lambda \in \mathcal{R}_{k}} P_\Lambda(\lambda)E_0^{\lambda }\right)}{P^\prime(k)} +\gamma^k \frac{\left(\sum_{\lambda \in \mathcal{R}_{k}} P_\Lambda(\lambda)q^{\lambda }\right)}{P^\prime(k)}\right)\\
=& \sum_{k} \sum_{\lambda \in \mathcal{R}_{k}} \operatorname{P}_{\Lambda}(\lambda)\left(\alpha^k + \beta^k E^\lambda +\gamma^k q^\lambda \right)\; .
 \end{align}
This proves it suffices to consider normalized probability distributions  $\operatorname{P}_\Lambda$ over the set $\{1,\dots,|\mathcal{S}|\} $ and the existence of a family of coefficients $\{ (\alpha^{\lambda},\beta^{\lambda},\gamma^{\lambda})\}_{\lambda\in [1,\dots,|\mathcal{S}|]}$ such that
\begin{equation} \label{App: SemiFinalOptProb}
    \begin{aligned}
        & {\inf}\,H(A|E,\Lambda,X=0) \geq && \min_{\substack{P_\Lambda,\{({\bm{E}}^{\lambda},{\bm{\omega}}^{\lambda})\}\\
        \{{q}^\lambda,{t}^{\lambda},z_0^{\lambda},{z}_1^{\lambda}\}}} && \sum_{\lambda=1}^{|\mathcal{S}|} \alpha^{\lambda}\operatorname{P}_{\Lambda}(\lambda) + \beta^\lambda \operatorname{P}_{\Lambda}(\lambda){E}_{0}^{\lambda}+ \gamma^\lambda \operatorname{P}_{\Lambda}(\lambda){q}^{\lambda}    
        \\
        & && \text{\ \ subject to} \\
         & && && \sum_{\lambda}\operatorname{P}_{\Lambda}(\lambda){E}_{x}^{\lambda} =E_{x}\;,\;\;\; \;\;\;\;  -1\leq {E}_x^\lambda\leq 1 \\
        & && && \sum_{\lambda}\operatorname{P}_{\Lambda}(\lambda){\omega}_{x}^{\lambda} \leq\omega_{x}\;,\;\;\;\;\;  \;\;\;\; \;\;\;\;  0\leq{\omega}_x^\lambda\leq 1\\
        & && && \sum_{\lambda}\operatorname{P}_{\Lambda}(\lambda) =1\;,\;\;\;\;\;\;\;\;\;\;\;\;\operatorname{P}_{\Lambda}(\lambda)\geq 0\\
        & && &&
        \begin{pmatrix}
  {q}^{\lambda} & {E}_0^{\lambda} & {z}_0^\lambda \\
  {E}_0^{\lambda} & 1 & {t}^\lambda \\
  {z}_0^\lambda & {t}^\lambda & 1
\end{pmatrix} \succeq 0 \; , \quad \begin{pmatrix}
  1 & {E}_1^{\lambda} & {z}_1^\lambda \\
  {E}_1^{\lambda} & 1 & {t}^\lambda \\
  {z}_1^\lambda & {t}^\lambda & 1
\end{pmatrix} \succeq 0 \\
& && && 1\geq {z}_x^\lambda \geq 1-2{\omega}_x^\lambda  \; , \quad 0\leq {q}^\lambda \leq 1\\
        & && &&  ({E}_0^\lambda,{E}_1^\lambda,{\omega}_0^\lambda,{\omega}_1^\lambda)\in {\mathcal{Q}} \;,\;\;\;\;\;\left({E}_0^\lambda,{q}^\lambda\right)\in {S}_{\lambda}\; .
    \end{aligned}
    \end{equation}  
    To arrive at Eq.~\eqref{App: OptProbCondEntSDPBound} and conclude the proof, we now simply apply the following transformation of variables 
   \begin{align}
       \{P_\Lambda(\lambda),(\bm{E}^{\lambda},\bm{\omega}^{\lambda}),{q}^\lambda,{t}^{\lambda},{z}_0^{\lambda},{z}_1^{\lambda}\}_\lambda \leftrightarrow \{P_\Lambda(\lambda),(\tilde{\bm{E}}^{\lambda},\tilde{\bm{\omega}}^{\lambda}),\tilde{q}^\lambda,\tilde{t}^{\lambda},\tilde{z}_0^{\lambda},\tilde{z}_1^{\lambda}\}_\lambda \; ,
   \end{align}
   where 
   \begin{multline}
(\tilde{\bm{E}}^{\lambda},\tilde{\bm{\omega}}^{\lambda}) \coloneq (P_\Lambda(\lambda){\bm{E}}^{\lambda},P_\Lambda(\lambda){\bm{\omega}}^{\lambda})  \; , \quad  \tilde{q}^\lambda \coloneq P_\Lambda(\lambda){q}^\lambda \; , \quad \tilde{t}^\lambda \coloneq P_\Lambda(\lambda){t}^\lambda \\
     \tilde{z}_0^\lambda \coloneq P_\Lambda(\lambda){z}_0^\lambda \; , \quad \tilde{z}_1^\lambda \coloneq P_\Lambda(\lambda){z}_1^\lambda \; .
   \end{multline}
    This results in an equivalent  optimization problem, as this mapping is reversible unless $P_\Lambda(\lambda)=0$. 
    However, whenever  $P_\Lambda(\lambda)=0$,
    \begin{align}
       \left(P_\Lambda(\lambda),(\tilde{\bm{E}}^{\lambda},\tilde{\bm{\omega}}^{\lambda}),\tilde{q}^\lambda,\tilde{t}^{\lambda},\tilde{z}_0^{\lambda},\tilde{z}_1^{\lambda}\right) = \left(0,({\bm{0}}^{\lambda},{\bm{0}}^{\lambda}),0,0,0,0\right)
    \end{align}
     is the only feasible point and this point does not contribute to the value of the objective function in Eq.~\eqref{App: OptProbCondEntSDPBound}. The same holds true for the value of objective function in Eq.~\eqref{App: SemiFinalOptProb} for
     \begin{align}
         \left(P_\Lambda(\lambda),(\bm{E}^{\lambda},\bm{\omega}^{\lambda}),{q}^\lambda,{t}^{\lambda},{z}_0^{\lambda},{z}_1^{\lambda} \right) = \left(0,(\bm{E}^{\lambda},\bm{\omega}^{\lambda}),{q}^\lambda,{t}^{\lambda},{z}_0^{\lambda},{z}_1^{\lambda} \right)
     \end{align}
    where $(\bm{E}^{\lambda},\bm{\omega}^{\lambda}),{q}^\lambda,{t}^{\lambda},{z}_0^{\lambda},{z}_1^{\lambda}$ is any feasible choice.
\end{proof}

\end{document}